\newtheorem{theorem}{Theorem}[section]
\newtheorem{definition}[theorem]{Definition}
\newtheorem{lemma}[theorem]{Lemma}
\newtheorem{proposition}[theorem]{Proposition}
\newtheorem{corollary}[theorem]{Corollary}
\newtheorem{observation}[theorem]{Observation}
\newtheorem{remark}[theorem]{Remark}
\newtheorem{fact}[theorem]{Fact}
\newtheorem{example}[theorem]{Example}
\newcommand{\noindentparagraph}[1]{\noindent\textbf{#1.}}
\renewcommand{\qedhere}{}
\newcommand\class[1]{$\mathsf{#1}$}
\newcommand{\One}{\mathds{1}}
\newcommand{\config}[1]{\bm{#1}}
\newcommand{\sentence}{\Psi}
\newcommand{\pred}[1]{\mathsf{pred}\left( #1 \right)}
\newcommand{\fomodels}[2]{\mathcal{M}_{#1, #2}}
\newcommand{\propmodels}[1]{\mathcal{M}_{#1}}
\newcommand{\weight}{w}
\newcommand{\negweight}{\overline{w}}
\newcommand\wfomc{\mathsf{WFOMC}}
\newcommand\wmc{\mathsf{WMC}}
\newcommand{\symsentence}{\sentence_R}
\newcommand\FOtwo{$\text{FO}^2$}
\newcommand\Ctwo{$\text{C}^2$}
\newcommand\wcp{WCP}
\newcommand\wcpfull{Weak Connectedness Polynomial}
\newcommand\scp{SCP}
\newcommand\scpfull{Strong Connectedness Polynomial}
\newcommand\sscp{SSCP}
\newcommand\sscpfull{Strict Strong Connectedness Polynomial}
\newcommand\nscp{NSCP}
\newcommand\nscpfull{Non-strict Strong Connectedness Polynomial}
\newcommand*{\PICSIZE}{8}
\newcommand{\todo}[1]{\textcolor{blue}{[TODO: #1]}} 
\title{Bridging Weighted First Order Model Counting and Graph Polynomials\thanks{Authors appear in strict alphabetical order.}}
\author[1]{Qipeng Kuang}\affil{The University of Hong Kong, Hong Kong, China}
\author[2]{Ond\v{r}ej Ku\v{z}elka}\affil{Czech Technical University in Prague, Prague, Czech Republic}
\author[3]{Yuanhong Wang\thanks{Part of the work was done while the author was at Beihang University, China.}}\affil{Jilin University, Changchun, China}
\author[4]{Yuyi Wang}\affil{CRRC Zhuzhou Institute, Zhuzhou, China}
\date{}
\begin{document}

\maketitle

\begin{abstract}
The Weighted First-Order Model Counting Problem (WFOMC) asks to compute the weighted sum of models of a given first-order logic sentence over a given domain. It can be solved in time polynomial in the domain size for sentences from the two-variable fragment with counting quantifiers, known as \Ctwo{}. This polynomial-time complexity is known to be retained when extending \Ctwo{} by one of the following axioms: linear order axiom, tree axiom, forest axiom, directed acyclic graph axiom or connectedness axiom. An interesting question remains as to which other axioms can be added to the first-order sentences in this way. We provide a new perspective on this problem by associating WFOMC with graph polynomials. Using WFOMC, we define \wcpfull{} and \scpfull{}s for first-order logic sentences. It turns out that these polynomials have the following interesting properties. First, they can be computed in polynomial time in the domain size for sentences from \Ctwo{}. Second, we can use them to solve WFOMC with all of the existing axioms known to be tractable as well as with new ones such as bipartiteness, strong connectedness, having $k$ connected components, etc. Third, the well-known Tutte polynomial can be recovered as a special case of the Weak Connectedness Polynomial, and the Strict and Non-Strict Directed Chromatic Polynomials can be recovered from the \scpfull{}s.
\end{abstract} 

\section{Introduction}

Given a function-free first-order logic sentence $\Psi$ and $n \in \mathbb{N}$, the \textit{first-order model counting} (FOMC) problem asks to compute the number of models of $\Psi$ over the domain $\{1, \dots, n\}$. FOMC can be used, among others, to solve various combinatorics problems on labeled structures. For instance, the FOMC of the sentence $\Psi = \forall x \exists y (E(x,y) \lor E(y,x)) \land \forall x \lnot E(x,x)$ corresponds to the number of $n$-vertex directed graphs with no isolated vertices and no loops. Similarly, the FOMC of $\Phi = \forall x \lnot E(x,x) \land \forall x \forall y \left(E(x,y) \to E(y,x) \right) \land \forall x \exists_{=k} y E(x,y)$ equals the number of $k$-regular graphs on $n$ vertices where $\exists_{=k} y$ is a counting quantifier restricting the number of $y$ satisfying $E(x,y) = \top$ to be exactly $k$ for each $x$.

The weighted variant of the FOMC problem, known as the \textit{weighted first-order model counting} (WFOMC) problem, additionally expects a specification of positive and negative weights for predicates in the language, which are then used to assign weights to models. In WFOMC, the task is to compute the sum of the weights of the given sentence's models.\footnote{Surprisingly, WFOMC is also essential for the unweighted FOMC problem since, for example, Skolemization for FOMC \cite{WFOMC-FO2}, which is used to eliminate existential quantifiers, relies on the usage of (negative) weights.} WFOMC has applications in Statistical Relational Learning (SRL) \cite{getoor2007introduction}. For instance, Markov Logic Networks (MLNs) \cite{Richardson2006}, one of the most popular SRL formalisms, model relational data by a distribution over possible worlds, where the computation of the partition function (used in probabilistic inference in MLN) is reducible to WFOMC~\cite{WFOMC-UFO2}.

A key issue regarding WFOMC is its computational complexity. Generally, there is no hope of devising algorithms for WFOMC that would scale polynomially with the size of the first-order logic sentences under reasonable assumptions \cite{WFOMC-FO3}. This is true even for very simple fragments such as the one-variable fragment $\text{FO}^1$, which contains sentences with at most one logical variable. Therefore, most of the focus in this area has been on identifying fragments of first-order logic for which WFOMC can be computed in time polynomial in the size of the domain (but not necessarily in the size of the sentence). The term coined for this kind of tractability by \cite{WFOMC-UFO2} is \emph{domain liftability}---it is an analog to data complexity~\cite{datacomplexity} from database theory.

The first positive results regarding the tractability of WFOMC came from the two seminal papers \cite{WFOMC-UFO2,WFOMC-FO2}, which together established domain liftability of the two-variable fragment of first-order logic \FOtwo{}. This was quickly complemented by a hardness result \cite{WFOMC-FO3}, showing that WFOMC for the three-variable fragment $\text{FO}^3$ is not domain-liftable (under plausible complexity-theoretic assumptions). However, this does not mean that the frontiers of tractability cannot be pushed beyond \FOtwo{}. For instance, the two-variable fragment with counting quantifiers \Ctwo{} \cite{gradel1997two} (e.g., the above sentence of a $k$-regular graph) with cardinality constraints (i.e., the restrictions to the number of true ground atoms for relations), which is strictly more expressive than \FOtwo{}, was shown to be domain-liftable in \cite{WFOMC-C2}.
Other works \cite{WFOMC-conditioning,WFOMS-FO2,WFOMS-FO2-new} proved domain-liftability when attaching ground unary literals (as part of the input along with the domain size) to sentences from these fragments.

One strategy for discovering new tractable fragments is to add extra axioms to a fragment known to be tractable, which may or may not be finitely expressible in first-order logic. This approach was first explored in \cite{DBLP:conf/lics/KuusistoL18} where it was shown that adding a single functionality axiom to \FOtwo{} results in a domain-liftable fragment.\footnote{This fragment is contained in \Ctwo{}, which can encode an arbitrary number of functionality constraints.}
Subsequent works extended domain liftability to the fragments \Ctwo{} + $\textit{Tree}$ \cite{WFOMC-tree-axioms} and \Ctwo{} + $\textit{LinearOrder}$ \cite{WFOMC-linearorder-axiom}, which are obtained by adding to \Ctwo{} an axiom specifying that a distinguished relation $R$ should correspond to an undirected tree or to a linear order, respectively. Recently, the work \cite{WFOMC-axioms} added several new tractable classes: \Ctwo{} + $\textit{DAG}$, \Ctwo{} + $\textit{Connected}$, and \Ctwo{} + $\textit{Forest}$, which allow restricting a distinguished binary relation to be a directed acyclic graph or to be a connected undirected graph or to be a forest. Except for the linear-order and the functionality axiom, these axioms are not finitely expressible in first-order logic.

To add the axioms discussed above to \Ctwo{} while preserving domain-liftability, existing works usually rely on directly translating specialized techniques from enumerative combinatorics. For instance, Kirchhoff's matrix-tree theorem was used for the tree axiom in \cite{WFOMC-tree-axioms}. Similarly, ideas used in existing combinatorial proofs for counting connected graphs and acyclic graphs on $n$-vertices were adapted to the WFOMC setting in \cite{WFOMC-axioms} to implement the connectedness and acyclicity axioms. This raises a natural question:
\begin{quotation}
  \itshape
  Can we define a general framework for adding axioms to \Ctwo{} that allows us to prove domain-liftability of new axioms and to efficiently compute the WFOMC of sentences with these axioms?
\end{quotation}

\subsection{Our Contribution}


In this paper, we take a route inspired by \emph{graph polynomials} such as the Tutte polynomial \cite{tuttepoly} and directed chromatic polynomials \cite{chromatic-directed}. Graph polynomials are algebraic expressions that encode various combinatorial properties of graphs, allowing us to capture information such as connectivity, colorability, and spanning substructures. In this work, we consider \Ctwo{} possibly extended by cardinality constraints and ground unary literals (which we will call \Ctwo{} with cardinality constraints for simplicity).
We introduce similar polynomials for first-order logic sentences, capturing certain combinatorial properties of their models, and show that they can be computed efficiently for \Ctwo{} with cardinality constraints, which requires the introduction of new algorithmic techniques for computation of WFOMC. This, in turn, allows us to extend the domain-liftability to WFOMC for several new axioms (as well as for all of the old ones) and enables the efficient computation of graph polynomials for graphs that can be encoded by certain first-order fragments.

Using the new polynomials for WFOMC, we can tackle the problem of efficiently computing WFOMC of \Ctwo{} sentences with cardinality constraints and one of a number of non-trivial \emph{axioms}, special constraints on the graph interpreting a distinguished binary relation $R$ (denoted by $G(R)$).
The main axioms for which we prove domain-liftability are listed in \Cref{tab:axioms}.

\begin{table}[t]
  \centering
  \caption{Axioms that we proved the domain-liftability of. Note that the first 4 axioms require the distinguished relations to be symmetric and irreflexive.}\label{tab:axioms}
    \begin{tabular}{l|p{0.4\textwidth}|l|p{0.2\textwidth}}
      \hline
      \textbf{Axiom} & \textbf{Description} & \textbf{Lifted by} & \textbf{First lifted by} \\
      \hline\hline
      $connected_k(R)$ & $G(R)$ has $k$ connected components & \Cref{exmp:k-connected-component} & new ($connected_1(R)$ by \cite{WFOMC-axioms}) \\
      \hline
      $bipartite(R)$ & $G(R)$ is a bipartite graph & \Cref{exmp:bipartite} & new \\
      \hline
      $tree(R)$ & $G(R)$ is a tree & \Cref{exmp:tree} & \cite{WFOMC-tree-axioms} \\
      \hline
      $\mathit{forest}(R)$ & $G(R)$ is a forest & \Cref{exmp:forest} & \cite{WFOMC-axioms} \\
      \hline\hline
      $SC(R)$ & $G(R)$ is strongly connected & \Cref{exmp:strconnected} & new \\
      \hline
      $AC(R)$ & $G(R)$ is acyclic & \Cref{exmp:acyclicity} & \cite{WFOMC-axioms} \\
      \hline
      $DT(R, Root)$ & $G(R)$ is a directed tree such that the root interprets the unary predicate $Root$ as true & \Cref{thm:dt-df} & \cite{WFOMC-tree-axioms} \\ \hline
      $DF(R)$ & $G(R)$ is a directed graph such that each weakly connected component is a directed tree & \Cref{thm:dt-df} & \cite{WFOMC-axioms} \\ \hline
      $LO(R)$ & $G(R)$ represents a linear order & \Cref{exmp:linearorder} & \cite{WFOMC-linearorder-axiom} \\
      \hline
    \end{tabular}
\end{table}

\begin{theorem}
The fragment of \Ctwo{} with cardinality constraints and any single axiom in \Cref{tab:axioms} is domain-liftable.
\end{theorem}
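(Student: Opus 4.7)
The plan is to handle each row of \Cref{tab:axioms} separately, in every case by reducing the \wfomc{} of $\sentence \wedge \alpha(R)$ (with $\alpha$ the axiom in question) to a polynomial-time post-processing of $\wcp$, $\sscp$, or $\nscp$ evaluated on a sentence still inside \Ctwo{} with cardinality constraints. Since the earlier propositions guarantee these polynomials are computable in time polynomial in $n$ and have $O(n)$-bounded degree, any such reduction immediately yields domain-liftability.

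\textbf{Undirected axioms via \wcp{}.} I will use the identity $\wcp(u) = \sum_{\omega \models \sentence} w(\omega)\, u^{c(G(R))}$, where $c$ counts weakly connected components. For $connected_k(R)$, the answer is simply $[u^k]\,\wcp(u)$. For $tree(R)$, I conjoin the cardinality constraint $|R| = 2(n-1)$ (using that $R$ is symmetric) with $connected_1(R)$ and read off the coefficient of $u^1$. For $\mathit{forest}(R)$, I use that an undirected graph is a forest iff $c = n - |E|$, so I stratify by the cardinality constraint $|R| = 2m$ for $m = 0,\dots,n-1$ and sum $[u^{n-m}]\,\wcp_m(u)$. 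For $bipartite(R)$, I introduce two fresh unary predicates $C_0, C_1$ partitioning the domain and add the \Ctwo{} clause $\forall x \forall y\,(R(x,y) \to \lnot(C_0(x) \land C_0(y)) \land \lnot(C_1(x) \land C_1(y)))$; on the extended sentence, $\wcp$ equals $\sum_{G \text{ bipartite}} w(G) (2u)^{c(G)}$ because each bipartite graph has exactly $2^{c}$ proper $2$-colorings, and the substitution $u = 1/2$ isolates the bipartite WFOMC.

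\textbf{Directed axioms via \sscp{} and \nscp{}.} For $SC(R)$, coefficient extraction from the analogue of the identity above for $\sscp$, restricted to the monomial corresponding to a single strongly connected component, yields the answer. For $AC(R)$, I invoke the connection between $\sscp$ and the directed chromatic polynomial (Stanley-style reciprocity), obtaining the acyclic WFOMC via a suitable substitution. The axioms $DT(R, Root)$ and $DF(R)$ reduce to $AC(R)$ once I add the in-degree constraints $\forall x\,(\lnot Root(x) \to \exists_{=1} y\, R(y,x))$ and $\forall x\,\exists_{\le 1} y\,R(y,x)$ respectively, both of which are in \Ctwo{}. Finally, $LO(R)$ is the conjunction of irreflexivity, tournament-ness, and acyclicity; the first two are \Ctwo{}-expressible and the third is $AC(R)$.

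\textbf{Main obstacle.} The combinatorial identities are standard, so the real work lies in verifying that (i) the auxiliary sentences stay within \Ctwo{} with cardinality constraints, (ii) coefficient extraction and rational-point substitution on the output of the algorithms for $\wcp$, $\sscp$, $\nscp$ is implementable in polynomial time, and (iii) the compositions with cardinality constraints used for $tree$ and $\mathit{forest}$ compose cleanly with the underlying dynamic program. I expect point (iii) to be the bottleneck, since $\mathit{forest}$ requires computing $\wcp_m$ for $n$ different cardinality values; this must be shown to incur only an $O(n)$ additive blowup rather than a multiplicative one.
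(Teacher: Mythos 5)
Most of your reductions coincide with the paper's: $connected_k$ by coefficient extraction, $bipartite$ by the two-coloring predicates with the evaluation at $1/2$ (the paper's $u=-\tfrac12$ in its $(u+1)^{cc}$ convention), $AC$ via reciprocity of the directed chromatic polynomial, $DT/DF$ by in-degree counting quantifiers reducing to $AC$, and $LO$ as acyclic tournament. Your treatment of $tree$ and $\mathit{forest}$ by conjoining cardinality constraints $|R|=2m$ and making $n$ separate \wcp{} calls is a legitimate, slightly more elementary alternative to the paper's bivariate extended \wcp{} that tracks the edge count with a second variable $v$; both are polynomial time, so your worry in point (iii) is unfounded --- the multiplicative factor of $n$ is harmless for domain-liftability.

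The genuine gap is in your handling of $SC(R)$. First, you cannot use \sscp{}: the strict directed chromatic polynomial $\chi_D$ is identically zero for any digraph containing a directed cycle (no strict coloring exists), so \sscp{} assigns weight zero to every model whose $G(R)$ is strongly connected on $\ge 2$ vertices and carries no information about them; you must use \nscp{}. Second, and more importantly, there is no ``monomial corresponding to a single strongly connected component'' to extract. By \Cref{prop:scp} the only structural exponent available is $cc(\mu_R)$, the number of \emph{weakly} connected components (attached to $u$); the variable $v$ parametrizes a chromatic polynomial, not a count of strongly connected components. The missing idea is to combine the two: extract $[u^1]$ to restrict to weakly connected $G(R)$, and then evaluate the non-strict directed chromatic polynomial at $-1$ (i.e., set $v=-2$). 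By the reciprocity relation $\bar\chi_D(x) = (-1)^{|V(acyc(D))|}\chi_{acyc(D)}(-x)$ of \Cref{lemma:dichromatic-negative}, among weakly connected graphs this evaluation equals $-1$ exactly when the condensation is a single vertex (i.e., $G(R)$ is strongly connected) and $0$ otherwise, giving $\wfomc(\sentence \land SC(R), n, \weight, \negweight) = -[u]\,\bar g_n(u-1,-2;\sentence,\weight,\negweight,R)$. Without this negative-point evaluation your $SC(R)$ case does not go through.
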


Moreover, we can show domain-liftability of combinations of axioms. \Cref{tab:axiom-combination} offers a comprehensive list of axioms obtained by combining the existing axioms, all of which are new axioms that have not been shown to be domain-liftable before.\footnote{This would be difficult to achieve using existing techniques based on individual treatment of each axiom, as done in the literature.}

\begin{table}[t]
    \centering
    \caption{New axioms by combining existing axioms. Note that the first 2 axioms require the distinguished relations to be symmetric and irreflexive.}
    \label{tab:axiom-combination}
    \begin{tabular}{l|p{0.45\textwidth}|l}
    \hline
    \textbf{Axiom}        & \textbf{Description} & \textbf{Combined axioms} \\ \hline\hline
    $bipartite_k(R)$ &
    $G(R)$ is a bipartite graph with $k$ connected components & $connected_k \land bipartite$ \\ \hline
    $forest_k(R)$ & $G(R)$ is a forest with $k$ trees & $connected_k \land forest$ \\ \hline
    $AC_k(R)$ & $G(R)$ is a DAG with $k$ weakly connected components & $connected_k \land AC$ \\ \hline
    $BiAC(R)$ & $G(R)$ is a DAG such that the underlying undirected graph is bipartite & $bipartite \land AC$ \\ \hline
    $polytree(R)$ & $G(R)$ is a polytree, i.e., a DAG whose underlying undirected graph is a tree & $tree \land AC$ \\ \hline
    $polyforest(R)$ & $G(R)$ is a polyforest, i.e., a DAG whose underlying undirected graph is a forest & $forest \land AC$ \\ \hline
    \end{tabular}
\end{table}

\begin{theorem}
The fragment of \Ctwo{} with cardinality constraints and any single axiom in \Cref{tab:axiom-combination} is domain-liftable.
\end{theorem}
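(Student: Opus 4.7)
The plan is to decompose each axiom in \Cref{tab:axiom-combination} as a conjunction of two axioms from \Cref{tab:axioms} and then to invoke the polynomial machinery underpinning the previous theorem in a composable way. Since \Ctwo{} with cardinality constraints is closed under first-order conjunction, any two axioms whose individual liftings both reduce to polynomial-time computations over a \Ctwo{} encoding can, in principle, be combined by stacking the two reductions; the content of the proof is to verify this composability for the six concrete cases.

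For the three combinations of the form $connected_k(R) \land X(R)$ with $X \in \{bipartite, forest, AC\}$, I would first encode $X(R)$ using its reduction from the previous theorem, enlarging the input $\sentence$ into a \Ctwo{} sentence $\sentence'$ with cardinality constraints whose WFOMC equals that of $\sentence \land X(R)$. Then I compute the \wcp{} of $\sentence'$: by construction its coefficient of $u^k$ (up to the appropriate normalization factor) captures the WFOMC restricted to models whose $G(R)$ has exactly $k$ weakly connected components. Extracting that coefficient yields the WFOMC of $\sentence \land X(R) \land connected_k(R)$. Polynomial-time computability of the \wcp{} on \Ctwo{} with cardinality constraints then gives the desired domain-liftability.

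For the three combinations $BiAC$, $polytree$, and $polyforest$ that pair acyclicity with an undirected structural property, I would introduce the symmetric closure of $R$ as an auxiliary binary relation $\tilde{R}$ by appending the \Ctwo{} axioms $\forall x \forall y \, (\tilde{R}(x,y) \leftrightarrow R(x,y) \lor R(y,x))$ and $\forall x \, \lnot \tilde{R}(x,x)$ to $\sentence$. The corresponding undirected axiom (bipartiteness, tree, or forest) is then imposed on $\tilde{R}$ using its lifting from the previous theorem, while acyclicity is imposed on $R$ via the \sscp{}-based reduction. Since $\tilde{R}$ is the symmetric irreflexive closure of $R$ by construction, the two liftings operate on mutually compatible relational symbols, and the overall WFOMC is recovered by composing the two polynomial evaluations (e.g., by a bivariate polynomial whose indeterminates track the component count of $\tilde{R}$ and the chromatic evaluation of $R$, or, equivalently, by carrying out one reduction first and applying the second to the resulting enlarged sentence).

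The main obstacle is ensuring genuine composability of the polynomial reductions: each lifting replaces the original sentence by an augmented \Ctwo{} sentence together with an instruction to extract a coefficient of the \wcp{} or to evaluate an \scp{} at a specific point, and one must verify that after the first reduction is carried out the resulting object still lies within the class to which the second reduction applies. Concretely, for each row of \Cref{tab:axiom-combination} one checks that the encoding of one axiom neither violates the structural hypotheses required by the other's polynomial reduction nor introduces conflicting cardinality constraints. Once this case-by-case composability is verified, the theorem follows from the previous theorem together with polynomial-time computability of the \wcp{} and the \sscp{} on \Ctwo{} with cardinality constraints.
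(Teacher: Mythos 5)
Your high-level intuition---that the polynomials carry enough combinatorial information to handle conjunctions of axioms---is the right one, but the concrete plan has a genuine gap. The individual liftings from the single-axiom theorem are \emph{not} sentence-to-sentence reductions inside \Ctwo{} with cardinality constraints: there is no \Ctwo{} sentence $\sentence'$ whose WFOMC equals $\wfomc(\sentence \land X(R), n, \weight, \negweight)$ for $X \in \{bipartite, forest, AC\}$ (none of these axioms is finitely first-order axiomatizable, which is the whole reason the polynomials are needed). Each lifting instead outputs a triple (augmented sentence, polynomial, extraction operation), where the extraction is a coefficient extraction or an evaluation at a negative or fractional point (e.g.\ $u=-\tfrac12$, $v=-2$); the result is a number, and you cannot feed a number into a second reduction. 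So the ``stack the two reductions sequentially'' framing, which you yourself flag as the main obstacle, does not go through, and resolving that obstacle is precisely the content of the proof rather than a routine verification. The paper's resolution is to use a \emph{single} sufficiently rich multivariate polynomial per combined axiom and to combine the extraction operations on it: e.g.\ $forest_k$ is $[u^k v^{2(n-k)}]$ of the extended \wcp{}, $bipartite_k$ is $(\tfrac12)^k[u^k]$ of the \wcp{} of $\sentence_b$, $AC_k$ is $(-1)^n[u^k]g_n(u-1,-2;\cdot)$, $BiAC$ is $(-1)^n g_n(-\tfrac12,-2;\sentence_b,\cdot)$, and $polytree$/$polyforest$ use an edge-tracking extension of the \sscp{}. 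The combinations work because the different axioms act on \emph{different variables} (or via compatible operations on the same variable) of one polynomial, not because the reductions compose sequentially.

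Two further concrete errors: first, you route $connected_k(R)\land AC(R)$ through the \wcp{}, but the \wcp{} is orientation-insensitive by construction and cannot express acyclicity; the correct tool is the \sscp{} $g_n(u,v)$, whose $u$-variable already counts \emph{weakly} connected components (the $A_i$ predicates in \Cref{def:scp} treat $R$ symmetrically), so no separate \wcp{} step is needed. Second, for $BiAC$, $polytree$ and $polyforest$ the auxiliary symmetric closure $\tilde R$ is unnecessary for the same reason---the paper imposes the bipartition predicates and the component/edge counting directly on $R$ inside the \scp{} framework---though your closing remark about ``a bivariate polynomial whose indeterminates track the component count and the chromatic evaluation'' is essentially a description of $g_n(u,v)$ and shows you are circling the correct idea.
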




Besides using the idea of graph polynomials to solve several counting tasks efficiently, conversely, the polynomials for WFOMC bring new results on graph polynomials. We show that the Tutte polynomial \cite{tuttepoly}, the strict and non-strict directed chromatic polynomials \cite{chromatic-directed} can be recovered from our polynomials for WFOMC. This allows us to show that these graph polynomials can be computed in time polynomial in the number of vertices for any graph that can be encoded by a set of ground unary literals, cardinality constraints and a fixed \Ctwo{} sentence (the parameters of cardinality constraints and the number of ground unary literals are allowed to depend on the number of vertices). This result answers in the positive way the question in \cite{WFOMC-tree-axioms} whether calculating the Tutte polynomial of a block-structured graph (see its definition in Example~\ref{ex:block-structured-graph}) can be done in time polynomial in the number of vertices.

\subsection{Related Work}

We build on several lines of research. First, we build on the stream of results from lifted inference literature which established domain-liftability of several natural fragments of first-order logic \cite{poole2003first,DBLP:conf/ijcai/BrazAR05,WFOMC-UFO2,DBLP:conf/uai/GogateD11a,WFOMC-FO2,WFOMC-FO3,DBLP:conf/nips/KazemiKBP16,WFOMC-C2}. These works started with the seminal results showing domain-liftability of \FOtwo{} \cite{WFOMC-UFO2,WFOMC-FO2} and continued by showing domain-liftability of larger fragments such as S$^2$FO$^2$ and S$^2$RU \cite{DBLP:conf/nips/KazemiKBP16}, U$_1$ \cite{DBLP:conf/lics/KuusistoL18} or \Ctwo{} \cite{WFOMC-C2}. Second, we build on extensions of these works that added axioms not expressible in these fragments or even in first-order logic such as the tree axiom \cite{WFOMC-tree-axioms}, forest axiom \cite{WFOMC-axioms}, linear order axiom \cite{WFOMC-linearorder-axiom}, acyclicity axiom \cite{WFOMC-axioms} and connectedness axiom \cite{WFOMC-axioms}. In this work, we generalize all of these results and add new axioms to this list, showing their domain-liftability. Finally, our work builds on tools from enumerative combinatorics \cite{stanley1986enumerative} and on graph polynomials~\cite{tutte2004graph}. The most well-known among graph polynomials is the Tutte polynomial \cite{tuttepoly} but there exist also extensions of it \cite{tutte-directed}, with which we do not work directly in this paper. Computing the Tutte polynomial is known to be intractable in general \cite{tuttepoly-hardness}, however, in this work we show that it can be computed in polynomial time for graphs that can be encoded by a fixed sentence from the first-order fragment \Ctwo{}, possibly with cardinality constraints and ground unary literals, both of which may depend on the number of vertices of the graph (unlike the \Ctwo{} sentence which needs to stay fixed).


\section{Preliminaries}

We introduce our notations and main technical concepts, and provide a brief overview of existing algorithms in this section.

\subsection{Mathematical Notations}

Throughout this paper, we denote the set of natural numbers from 1 to $n$ as $[n]$. For a polynomial $f(x)$, the notation $[x^i] f$ represents the coefficient of the monomial $x^i$.

All vectors used in this paper are non-negative integer vectors. For a vector $\config{c}$ of length $p$, $|\config{c}|$ denotes the sum of its elements, i.e., $\sum_{i=1}^p c_i$. When considering two vectors $\config{a}$ and $\config{b}$ of the same length $p$, the operation $\config{a} + \config{b}$ denotes the vector $(a_1+b_1, a_2+b_2, \cdots, a_p+b_p)$, and the subtraction is defined similarly. For a vector $\config{a}$ and a non-negative integer $n$, $\binom{n}{\config{a}}$ refers to $\binom{n}{a_1} \binom{n-a_1}{a_2} \cdots \binom{n-a_1-\cdots-a_{p-1}}{a_p} = \frac{n!}{a_1!a_2! \cdots a_p!}$.

\subsection{First-Order Logic}\label{sec:fol}

We consider the function-free (we allow constants but not any functions with arity greater than zero), finite domain fragment of first-order logic. An \emph{atom} of arity $k$ takes the form $P(x_1, \cdots, x_k)$ where $P/k$ is from a vocabulary of \emph{predicates} (also called \emph{relations}), and $x_1, \cdots, x_k$ are logical variables from a vocabulary of variables or constants from a vocabulary of constants.\footnote{We restrict $k \ge 1$ for simplicity but $k \ge 0$ is tractable for this work as well.} A \emph{literal} is an atom or its negation. A \emph{formula} is a literal or formed by connecting one or more formulas together using negation, conjunction, or disjunction. A formula may be optionally surrounded by one or more quantifiers of the form $\forall x$ or $\exists x$, where $x$ is a logical variable. A logical variable in a formula is said to be \emph{free} if it is not bound by any quantifier. A formula with no free variables is called a \emph{sentence}.

A first-order logic formula in which all variables are substituted by constants in the domain is called \emph{ground}. A \emph{possible world} $\omega$ interprets each predicate in a sentence over a finite domain, represented by a set of ground literals. Sometimes we only care about the interpretation of a single predicate $R$ in a possible world $\omega$, which is denoted by $\omega_R$. The satisfaction relation $\models$ is defined in the usual way: $\omega \models \alpha$ means that the formula $\alpha$ is true in $\omega$. The possible world $\omega$ is a \emph{model} of a sentence $\sentence$ if $\omega \models \sentence$. We denote the set of all models of a sentence $\sentence$ over the domain $\{1,2,\dots,n\}$ by $\fomodels{\sentence}{n}$.

\subsection{Weighted First Order Model Counting}\label{sec:wfomc}

The \textit{weighted first-order model counting problem} (WFOMC) takes the input consisting of a first-order sentence $\sentence$, a domain size $n\in\mathbb{N}$, and a pair of weighting functions $(\weight, \negweight)$ that both map all predicates in $\sentence$ to real weights.
Given a set $L$ of ground literals whose predicates are in $\sentence$, the weight of $L$ is defined as $W(L, \weight, \negweight):= \prod_{l \in L_T}\weight(\pred{l}) \cdot \prod_{l \in L_F}\negweight(\pred{l}),$ where $L_T$ (resp. $L_F$) denotes the set of positive (resp. negative) ground literals in $L$, and $\pred{l}$ maps a literal $l$ to its corresponding predicate name.
\begin{example}\label{exmp:model-weight}
    Consider the sentence $\sentence = \forall x \forall y \left( R(x,y) \to S(y) \right)$ and the weighting functions $\weight(R) = 2, \negweight(R) = \weight(S) = \negweight(S) = 1$.
    The weight of the atoms set
    $$L = \{R(1,1), \lnot R(1,2), R(2,1), R(2,2), S(1), S(2)\}$$
    is $w(R) \cdot \negweight(R) \cdot \weight(R) \cdot \weight(R) \cdot \weight(S) \cdot \weight(S) = 2 \cdot 1 \cdot 1 \cdot 1 \cdot 1 \cdot 1 = 2$.
\end{example}
\begin{definition}[Weighted First Order Model Counting]
    The WFOMC of a first-order sentence $\sentence$ over a finite domain of size $n$ under weighting functions $(\weight, \negweight)$ is
    \begin{equation*}
        \wfomc(\sentence, n, \weight, \negweight) := \sum_{\mu \in \fomodels{\sentence}{n}} W(\mu, \weight, \negweight).
    \end{equation*}
\end{definition}

Since these weightings are defined on the predicate level, all positive ground literals of the same predicate get the same weights, and so do all negative ground literals of the same predicate.
For this reason, the notion of WFOMC defined here is also referred to as \emph{symmetric} WFOMC \cite{WFOMC-FO3}.

\begin{example}
    Consider $\sentence = \forall x\exists y \ R(x,y)$, and let $\weight(R) = 1, \negweight(R) = 1$.
    Then $\wfomc(\sentence, n, \weight, \negweight) = (2^n - 1)^n$ because for each domain element $i\in[n]$, there are $2^n - 1$ ways to choose the truth assignment of $R(i,1),R(i,2),\dots,R(i,n)$ that make $\exists y \  R(i, y)$ satisfied, and there are $n$ domain elements.
    For another example, consider the sentence $\sentence$ and the weighting functions $(\weight,\negweight)$ in Example \ref{exmp:model-weight}.
    One may check that $\wfomc(\sentence, n, \weight, \negweight) = (3^n + 1)^n$.
\end{example}

The \emph{weighted model counting problem} (WMC) is defined similarly but the input formula is a ground quantifier-free formula.
\begin{definition}[Weighted Model Counting]
    The WMC of a ground quantifier-free formula $\Phi$ under weighting functions $(\weight, \negweight)$ is
    \begin{equation*}
        \wmc(\Phi, \weight, \negweight) := \sum_{\mu \in \propmodels{\Phi}} W(\mu, \weight, \negweight),
    \end{equation*}
    where $\propmodels{\Phi}$ denotes the set of all models of $\Phi$.
\end{definition}
\begin{example}
    The WMC of a ground quantifier-free formula $\Phi = R(a,b) \to S(a)$ under $\weight(R) = 2, \negweight(R) = \weight(S) = \negweight(S) = 1$ is
    $$W(\{R(a,b), S(a)\}, \weight, \negweight) + W(\{\lnot R(a,b), S(a)\}, \weight, \negweight) + W(\{\lnot R(a,b), \lnot S(a)\}, \weight, \negweight) = 4.$$
\end{example}


\subsection{Data Complexity of WFOMC}
\label{sub:data-complexity}

We consider the \textit{data complexity} of WFOMC: the complexity of computing $\wfomc(\sentence, n, \weight, \negweight)$ when fixing the input sentence $\sentence$ and weighting $(\weight, \negweight)$, and treating the domain size $n$ as an input encoded in unary.

\begin{definition}[Domain-liftability \cite{WFOMC-UFO2}]
A sentence, or class of sentences, that enjoys polynomial-time data complexity, i.e., the complexity is polynomial in the domain size, is said to be \emph{domain-liftable}.
\end{definition}

\noindentparagraph{Domain-liftability of \FOtwo{}}
Building on \cite{WFOMC-UFO2,WFOMC-FO2}, in ~\cite[Appendix C]{WFOMC-FO3}, it was shown that any sentence with at most two logical variables (\FOtwo{}) is domain-liftable by devising a polynomial-time algorithm for computing $\wfomc(\sentence, n, \weight, \negweight)$ for any \FOtwo{} sentence $\sentence$ and any weighting $(\weight, \negweight)$.

Here we briefly describe this algorithm, as some of its concepts are important in this paper.
We mostly follow the notation from \cite{WFOMC-FO2-faster}.
Any sentence in \FOtwo{} can be reduced to the prenex normal form with only universal quantifiers (i.e., $\forall x \forall y \ \psi(x,y)$, where $\psi(x,y)$ is a quantifier-free formula) by the normalization in \cite{scott-normal-form} and the technique of eliminating existential quantifiers in \cite{WFOMC-FO2}.
The reduction respects the WFOMC value, and hence in what follows, we assume that the input sentence is in the form $\forall x \forall y \ \psi(x,y)$.
We first need the notion of a \emph{1-type}.

\begin{definition}[1-type]
A 1-type of a first-order sentence $\sentence$ is a maximally consistent set\footnote{A set of literals is maximally consistent if it is consistent (does not contain both a literal and its negation) and cannot be extended to a larger consistent set.} of literals formed from atoms in $\sentence$ using only a single variable $x$.
\end{definition}

For example, $\sentence = \forall x \forall y \left( F(x) \land G(x,y) \right)$ has four 1-types: $F(x) \land G(x,x)$, $F(x) \land \lnot G(x,x)$, $\lnot F(x) \land G(x,x)$ and $\lnot F(x) \land \lnot G(x,x)$. Intuitively, a 1-type interprets unary and reflexive binary predicates for a single domain element.

Let $C_1(x), C_2(x), \cdots, C_p(x)$ be all the 1-types of $\sentence$. When we consider domain-liftability, the sentence $\sentence$ is fixed and therefore $p$ is a constant. We partition domain elements into 1-types by enumerating the \emph{configuration} $\config{c} = (c_1, c_2, \cdots, c_L)$, a vector indicating the number of elements partitioned to each 1-type, since in the symmetric WFOMC only the configuration matters.
Then we can rewrite $\sentence$ in the form of
$$
\begin{aligned}
\sentence = \bigwedge_{1 \le i < j \le p} \forall x \in C_i \forall y \in C_j (\psi(x,y) \land \psi(y,x)) \land \bigwedge_{1 \le i \le p} \forall x \in C_i \forall y \in C_i \ \psi(x,y),
\end{aligned}
$$
where $\psi(x,y)$ is a quantifier-free formula.

Since we know the truth values of the unary and reflexive binary atoms of each 1-type, we may simplify the body of each conjunct by substituting every unary and reflexive binary literal with true or false as appropriate. Denote by $\psi_i(x, y)$ the simplification of $\psi(x, y)$ when both $x$ and $y$ belong to the same 1-type $C_i$, and by $\psi_{ij}(x, y)$ the simplification of $\psi(x, y) \land \psi(y, x)$ when $x$ and $y$ belong to $C_i$ and $C_j$ respectively. We then have:
$$
\begin{aligned}
\sentence = \bigwedge_{1 \le i < j \le p} \forall x \in C_i \forall y \in C_j \ \psi_{ij}(x,y) \land \bigwedge_{1 \le i \le p} \forall x \in C_i \forall y \in C_i \ \psi_{i}(x,y).
\end{aligned}
$$
An important property here is that each conjunct is independent of others since no two conjuncts involve the same ground binary literal. Let
$$
\begin{aligned}
r_{i,j} &= \wmc(\psi_{i,j}(a,b), \weight, \negweight), \\
s_{i} &= \wmc(\psi_{i}(a,b) \land \psi_{i}(b,a), \weight, \negweight)
\end{aligned}
$$
be the \emph{mutual} and \emph{internal} 1-type coefficient of $\sentence$ respectively. Then we have
\begin{equation}
    \wfomc(\sentence, n, \weight, \negweight) = \sum_{|\config{c}|=n} \binom{n}{\config{c}} \prod_{1 \le i \le p} W(C_i, \weight, \negweight)^{c_i} \cdot s_i^{\binom{c_i}{2}} \prod_{1 \le i < j \le p} r_{i,j}^{c_ic_j},
    \label{eq:wfomc-fo2}
\end{equation}
which can be calculated in polynomial time in $n$.

\noindentparagraph{Domain-liftability of \Ctwo{}}
The above algorithm also serves as the basis of a WFOMC algorithm for \Ctwo{} (the \FOtwo{} sentence with \emph{counting quantifiers}) together with \emph{cardinality constraints} and \emph{ground unary literals}.

\begin{definition}[Counting quantifiers]
The counting quantifier $\exists_{=k}$ restricts that the number of assignments of the quantified variable satisfying the subsequent formula is exactly $k$. Counting quantifiers $\exists_{\le k}$ and $\exists_{\ge k}$ are defined similarly. 
\end{definition}

Counting quantifiers allow us to express concepts such as ``each vertex in the undirected graph has exactly $k$ edges'', i.e., $k$-regular graphs, by the sentence $\forall x \lnot E(x,x) \land \forall x \forall y \left(E(x,y) \to E(y,x) \right) \land \forall x \exists_{=k} y E(x,y)$.

\begin{definition}[Cardinality constraints]\label{def:cc}
The cardinality constraint is an expression of the form of $|P| \bowtie k$, where $P$ is a predicate and $\bowtie$ is a comparison operator $\{<, \le, =, \ge, >\}$.
These constraints are imposed on the number of distinct positive ground literals of $P$ in a model.
\end{definition}

For example, $|Eq| = n$ means that there are exactly $n$ positive ground literals of $Eq$ in a model, and the sentence $(\forall x \ Eq(x,x)) \land |Eq| = n$ along with a domain of size $n$ encodes an identity relation $Eq$. Note that we allow $k$ in \Cref{def:cc} to depend on the domain size (such as in the definition of $Eq$ here where we used $|Eq| = n$).

Finally, we show the usage of ground unary literals defined in Section \ref{sec:fol}. For example, the sentence $\forall x \exists y (R(x,y) \to S(y)) \land S(1) \land \lnot S(2)$ contains ground unary literals $S(1) \land \lnot S(2)$ indicating that any interpretation of this sentence should satisfy that $S(1)$ is true and $S(2)$ is false. Another example is encoding a block-structured graph.

\begin{example}
    \label{ex:block-structured-graph}
    A \emph{block-structured graph} is an undirected graph $G=(V, E)$ where vertices are partitioned into a constant number of blocks, and for every two blocks $V_i, V_j$, either $\forall x \in V_i \forall y \in V_j, (x,y) \in E$ or $\forall x \in V_i \forall y \in V_j, (x,y) \not\in E$.
    We can encode a block-structured graph in the following way: we introduce a fresh unary predicate $Block_i$ for each block $V_i$, and let
    \begin{equation}\label{eq:block-structured-graph}
    \begin{aligned}
    \sentence_B = & \bigwedge_{V_i, V_j\text{ are connected}} \forall x \forall y (Block_i(x) \land Block_j(y) \to E(x,y)) \\
    & \land \bigwedge_{V_i, V_j\text{ are not connected}} \forall x \forall y (Block_i(x) \land Block_j(y) \to \lnot E(x,y)) \\
    & \land \bigwedge_{e \in V} \left( Block_{v(e)}(e) \land \bigwedge_{j\neq v(e)} \lnot Block_{j}(e) \right),
    \end{aligned}
    \end{equation}
    where $v(e)$ indicates the block that vertex $e$ belongs to, and $Block_*(e)$ and $\lnot Block_*(e)$ are ground unary literals for vertex $e$ indicating whether $e$ is in the specified block or not.
\end{example}

\begin{remark}
    We need to define the data complexity more carefully when the input sentence of WFOMC involves cardinality constraints and ground unary literals.
    In such instances, the \Ctwo{} sentence (including the parameters in counting quantifiers) in the input remains fixed, while cardinality constraints and ground unary literals are considered variable inputs encoded in binary.
\end{remark}

The presence of counting quantifiers, cardinality constraints, and ground unary literals does not affect the domain-liftability of \FOtwo{}. Using the techniques in \cite{WFOMC-C2} and \cite[Appendix A]{WFOMS-FO2-new}, WFOMC of \Ctwo{} with cardinality constraints and ground unary literals can be reduced to WFOMC of \FOtwo{} with symbolic weights\footnote{The weighting functions $\weight, \negweight$ map predicates to symbolic variables. For more details, see \cite[Section $3.1$]{WFOMC-C2}.}, and then solved by the algorithm for \FOtwo{} in a straightforward manner.

\begin{remark}
The techniques mentioned above apply to the methods in this paper.
Therefore, in the rest of this paper we only present the algorithms for \FOtwo{} and state the results in terms of \Ctwo{} with cardinality constraints for simplicity.
\end{remark}

\subsection{Axioms}

While \Ctwo{} captures numerous graph structures, it falls short in expressing certain important graph properties, such as being a tree, a forest, and strongly connected.
In fact, these structures cannot be finitely axiomatized by first-order logic \cite{libkin2004elements}.

Given a binary predicate $R$, an interpretation for $R$ can be regarded as a directed graph where the domain is the vertex set and $R$ is the edge set, which we call $G(R)$.
If a possible world $\omega$ is specified, the interpretation of $R$ in $\omega$ is a specific graph, denoted by $G(\omega_R)$.

An axiom is a special constraint on the interpretation of a binary predicate $R$ that $G(R)$ should be a certain combinatorial structure.
We often write an axiom as $axiom(R)$ where $axiom$ is an identifier of the axiom.
For example, in what follows, we will discuss the \emph{tree axiom} $tree(R)$, which requires $G(R)$ to be a tree, and the \emph{forest axiom} $forest(R)$, which requires $G(R)$ to be a forest.
For better illustration, we treat axioms as atomic formulas in a sentence, and use $\sentence \land axiom(R)$ to denote the sentence $\sentence$ with the axiom $axiom(R)$ conjoined.
For instance, $tree(E) \land \forall x \left( Leaf(x)\leftrightarrow \exists_{=1} y \ E(x,y) \right)$ is a sentence that expresses the concept of trees with a leaf predicate $Leaf/1$.

\subsection{Graph Polynomial}\label{sec:graphpolys}
Graph polynomials are a class of polynomials defined on graphs that carry combinatorial meaning.

\noindentparagraph{Tutte Polynomial}
The Tutte polynomial~\cite{tuttepoly} is one of the most well-known graph polynomials for undirected graphs.
Consider an undirected graph $G=(V, E)$, where $V$ is the set of vertices and $E$ is the set of edges.
The Tutte polynomial for the graph $G$ is a polynomial in two variables $x$ and $y$ defined as
\begin{equation*}
T_G(x,y) = \sum_{A \subseteq E} (x-1)^{cc(A)-cc(E)} (y-1)^{cc(A)+|A|-|V|},
\end{equation*}
where $cc(A)$ is the number of connected components considering only edges in $A$.

The Tutte polynomial holds significant combinatorial meaning. Evaluating the polynomial at specific points reveals various combinatorial properties. For example, $T_G(1,1)$ counts the number of spanning forests of $G$ (specifically, the number of spanning trees if $G$ is connected), and $T_G(0,2)$ counts the strongly connected orientations of $G$ \cite{tuttepoly-20}. Substituting a concrete value for only one variable yields another graph polynomial of $G$. For example, $T_G(x,0)$ corresponds to the chromatic polynomial~\cite{read1968introduction} of $G$; $T_G(0,y)$ corresponds to the flow polynomial~\cite{diestel2008graph} of $G$.

The computation of the Tutte polynomial for an arbitrary graph is known to be computationally challenging.
Even the evaluation of a specific point of the polynomial is known to be \class{\#P}-hard, with only a limited number of points being computationally tractable~\cite{tuttepoly-hardness}.

\noindentparagraph{Directed Chromatic Polynomial}
The idea of the chromatic polynomial of a directed graph appeared in \cite{chromatic-directed}, although it was originally stated in terms of ordered sets.
Given a directed graph $D=(V,E)$ where $n = |V|$, let $\chi_D(x)$ be the number of ways of coloring each vertex in $D$ by one of the colors $\{1, 2, \cdots, x\}$ such that if there is an edge from $u$ to $v$, the color of $u$ is smaller than the color of $v$. We have the following lemma to state that $\chi_D(x)$ is a polynomial of $x$.

\begin{lemma}[\cite{chromatic-directed}]\label{lemma:dichromatic-surjective}
For a directed graph $D$ of size $n$, let $\chi^*_D(i)$ be the number of surjective colorings of vertices using $i$ colors (i.e., each color should be used at least once) such that if there is an edge from $u$ to $v$, the color of $u$ is smaller than the color of $v$. It holds that
\begin{equation*}
\chi_D(x) = \sum_{i=1}^n \binom{x}{i} \chi^*_D(i) = \sum_{i=1}^n \frac{x(x-1)\cdots (x-i+1)}{i!} \chi^*_D(i),
\end{equation*}
where $\binom{x}{i} = 0$ if $i > x$.
\end{lemma}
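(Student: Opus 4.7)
The plan is to prove the identity by a direct bijective/partitioning argument: classify every valid coloring counted by $\chi_D(x)$ according to the set of colors it actually uses. Specifically, for any valid coloring $c : V \to \{1,2,\dots,x\}$ (valid meaning $c(u) < c(v)$ whenever there is an edge from $u$ to $v$), let $S = c(V)$ be the image of $c$ and let $i = |S|$. Then $1 \le i \le n$ since $|V| = n$. I would partition the colorings counted by $\chi_D(x)$ according to the pair $(i, S)$.

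Next, I would count the contribution of each fixed pair. There are exactly $\binom{x}{i}$ choices of an $i$-element subset $S \subseteq \{1,\dots,x\}$. Fix such an $S = \{s_1 < s_2 < \dots < s_i\}$. The key observation is that the order-preserving bijection $\phi_S : \{1,\dots,i\} \to S$ with $\phi_S(j) = s_j$ induces a bijection between surjective valid colorings $c' : V \to \{1,\dots,i\}$ and valid colorings $c : V \to \{1,\dots,x\}$ with image exactly $S$, via $c = \phi_S \circ c'$. Validity is preserved in both directions because $\phi_S$ is strictly order-preserving, and surjectivity onto $\{1,\dots,i\}$ corresponds exactly to the image being all of $S$. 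Hence the number of valid colorings with image exactly $S$ is $\chi^*_D(i)$, independent of which $i$-subset $S$ was chosen.

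Summing over $S$ and then over $i$ yields
\begin{equation*}
\chi_D(x) \;=\; \sum_{i=1}^{n} \binom{x}{i}\, \chi^*_D(i),
\end{equation*}
and the rewriting $\binom{x}{i} = x(x-1)\cdots(x-i+1)/i!$ is just the standard polynomial expression of the binomial coefficient, which also justifies the convention $\binom{x}{i} = 0$ for integer $x < i$ (in which case no palette of size $x$ admits an $i$-subset, consistent with the partition). Since each $\binom{x}{i}$ is a polynomial of degree $i$ in $x$ and the sum is finite, $\chi_D(x)$ is indeed a polynomial of degree at most $n$, so the definition of $\chi_D$ on positive integers extends uniquely to all of $\mathbb{R}$.

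There is no real obstacle here: the argument is a textbook inclusion/partition calculation once the right partitioning variable (the color set actually used) is identified. The only point that deserves explicit care is verifying that the order-preserving renaming $\phi_S$ sends surjective valid colorings bijectively onto valid colorings with image exactly $S$; this is where the hypothesis that the coloring constraint is $c(u) < c(v)$ (strict comparison along oriented edges) is used, and it works precisely because $\phi_S$ preserves the linear order on color indices.
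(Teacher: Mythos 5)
Your proof is correct and complete: partitioning the valid colorings by the set of colors actually used, and observing that the unique order-preserving bijection from $\{1,\dots,i\}$ onto an $i$-subset $S$ carries surjective valid colorings bijectively onto valid colorings with image exactly $S$ (precisely because the edge constraint is an order constraint on colors), is the standard argument for this identity. The paper itself states this lemma as a cited result from the literature and gives no proof, so there is nothing to compare against, but your argument is exactly the expected one and correctly handles the convention $\binom{x}{i}=0$ for $i>x$.
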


\begin{corollary}\label{corol:dichromatic-degree}
$\chi_D(x)$ is a polynomial of $x$ of degree $n$.
\end{corollary}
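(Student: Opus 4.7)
The plan is to read off both the polynomial structure and the degree directly from Lemma~\ref{lemma:dichromatic-surjective}. First I would observe that the falling factorial $\binom{x}{i} = \frac{x(x-1)\cdots(x-i+1)}{i!}$ is itself a polynomial in $x$ of degree exactly $i$, with leading coefficient $\frac{1}{i!}$. Hence the finite sum
\begin{equation*}
\chi_D(x) = \sum_{i=1}^n \binom{x}{i}\,\chi^*_D(i)
\end{equation*}
is a polynomial in $x$ of degree at most $n$, and the coefficient of $x^n$ is $\chi^*_D(n)/n!$.

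It then remains to verify that this leading coefficient is nonzero, i.e.\ $\chi^*_D(n)>0$. Since the statement is only meaningful when $\chi_D$ is not identically zero, I would handle the case where $D$ is acyclic (if $D$ contains a directed cycle, no valid coloring exists and $\chi_D \equiv 0$). A surjective coloring with $n$ colors on $n$ vertices must be a bijection $V \to \{1,\dots,n\}$ that is strictly increasing along every directed edge; such a bijection is precisely a linear extension of the partial order induced by $D$. Any topological ordering of the DAG $D$ supplies such a bijection, so $\chi^*_D(n) \ge 1$.

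Combining the two observations yields that the coefficient of $x^n$ in $\chi_D(x)$ is strictly positive, and all higher coefficients vanish, so $\chi_D$ has degree exactly $n$. The only conceptual step is recognizing that surjective colorings with $n$ colors correspond to topological orderings, and everything else is an immediate degree computation on the expansion given by the preceding lemma.
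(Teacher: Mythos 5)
Your proposal is correct and follows essentially the same route the paper intends: the corollary is read off directly from the expansion $\chi_D(x)=\sum_{i=1}^n\binom{x}{i}\chi^*_D(i)$ in Lemma~\ref{lemma:dichromatic-surjective}, since each $\binom{x}{i}$ is a degree-$i$ polynomial. Your additional check that the leading coefficient $\chi^*_D(n)/n!$ is positive for acyclic $D$ (via topological orderings), and your observation that $\chi_D\equiv 0$ when $D$ has a cycle, are a careful refinement of the stated ``degree $n$''; note the paper only ever uses the bound ``degree at most $n$'' (e.g.\ in Proposition~\ref{prop:scp-degree}), so this extra step, while correct, is not needed downstream.
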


We call $\chi_D(x)$ the strict directed chromatic polynomial of $D$. A variant of it is the non-strict directed chromatic polynomial $\bar \chi_D(x)$ where for each positive integer $x$, $\bar \chi_D(x)$ equals the number of colorings of the vertices in $D$ by one of the $x$ colors $\{1, 2, \cdots, x\}$ such that if there is an edge from $u$ to $v$, the color of $u$ is smaller than or equal to the color of $v$. Similarly to $\chi_D(x)$, $\bar \chi_D(x)$ is also a polynomial of $x$.

\begin{corollary}\label{corol:nonstrictdichromatic-degree}
$\bar \chi_D(x)$ is a polynomial of $x$ of degree $n$.
\end{corollary}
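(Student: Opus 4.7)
The plan is to mirror the proof of \Cref{lemma:dichromatic-surjective} and \Cref{corol:dichromatic-degree}, adapting it so that the strict inequality ``smaller than'' is replaced by ``smaller than or equal to'' along every edge. Define $\bar\chi^*_D(i)$ to be the number of surjective non-strict colorings of $D$ using $i$ colors, i.e., functions $f : V \to \{1,\ldots,i\}$ with $f(V) = \{1,\ldots,i\}$ that satisfy $f(u) \le f(v)$ for every edge $u \to v$ of $D$. The central step is to establish the identity
\begin{equation*}
    \bar\chi_D(x) \;=\; \sum_{i=1}^n \binom{x}{i}\,\bar\chi^*_D(i)
\end{equation*}
for every positive integer $x$.

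To prove this, I would classify each non-strict coloring $f : V \to \{1,\ldots,x\}$ by the cardinality $i$ of its image $S = f(V) \subseteq \{1,\ldots,x\}$, which necessarily satisfies $1 \le i \le n$. The unique order-preserving bijection $S \to \{1,\ldots,i\}$ converts $f$ into a surjective non-strict coloring of $D$ with $i$ colors, and this procedure is reversible once $S$ has been fixed. Hence the number of non-strict colorings with image of size exactly $i$ is $\binom{x}{i}\,\bar\chi^*_D(i)$, and summing over $i$ yields the displayed identity.

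Once the identity is in place the conclusion is immediate, exactly as in \Cref{corol:dichromatic-degree}: each $\binom{x}{i}$ is a polynomial in $x$ of degree $i$, so the right-hand side is a polynomial in $x$ of degree at most $n$; since $\bar\chi_D(x)$ agrees with it on all positive integers, the two must coincide as polynomials. The leading coefficient is $\bar\chi^*_D(n)/n!$, and $\bar\chi^*_D(n)$ equals the number of bijections $V \to \{1,\ldots,n\}$ respecting the edges of $D$, i.e., the topological orderings of $D$. Whenever $D$ admits at least one such ordering (in particular when $D$ is acyclic, which is the implicit setting in which the corresponding degree claim is made for $\chi_D$ in \Cref{corol:dichromatic-degree}), the degree is exactly $n$.

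The only non-routine step I anticipate is articulating the bijection that underlies the decomposition: one must verify that every non-strict coloring factors uniquely through an order-preserving injection of its image into $\{1,\ldots,x\}$, and that the induced ``template'' on $\{1,\ldots,i\}$ is itself a valid surjective non-strict coloring. Both facts are clear once formulated, and the rest of the argument is a standard polynomial-identity calculation.
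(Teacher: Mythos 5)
Your proof is correct and matches the argument the paper intends: the paper gives no explicit proof of \Cref{corol:nonstrictdichromatic-degree}, stating only that it follows ``similarly to $\chi_D(x)$,'' and your decomposition $\bar\chi_D(x)=\sum_{i=1}^n\binom{x}{i}\bar\chi^*_D(i)$ is precisely the non-strict analogue of \Cref{lemma:dichromatic-surjective}. Your closing remark is in fact slightly more careful than the paper: the degree is exactly $n$ only when $D$ is acyclic (otherwise $\bar\chi^*_D(n)=0$ and the degree drops), which is harmless since the paper only ever uses the bound ``degree at most $n$'' (e.g.\ in \Cref{prop:scp-degree}).
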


The relation between $\chi_D(x)$ and $\bar \chi_D(x)$ is given by the following lemma.

\begin{lemma}[\cite{chromatic-directed,tutte-directed}]\label{lemma:dichromatic-negative}
Let $D$ be a directed graph and $acyc(D)$ be the directed acyclic graph obtained from $D$ by condensing each cycle into a vertex. Denote the number of vertices in $acyc(D)$ by $|V(acyc(D))|$. We have
\begin{equation*}
\begin{aligned}
\chi_D(x) &= \begin{cases}
    (-1)^n \bar \chi_D(-x), & D \text{ is acyclic,} \\
    0, & \text{otherwise,}
    \end{cases} \\
\bar \chi_D(x) &= (-1)^{|V(acyc(D))|} \chi_{acyc(D)}(-x).
\end{aligned}
\end{equation*}
\end{lemma}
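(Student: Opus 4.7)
The plan is to split the first equation into its two cases, prove the acyclic case by reducing to Stanley's order-polynomial reciprocity, and then derive the second equation from the first via a condensation argument.

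For the trivial direction of the first equation: if $D$ contains a directed cycle $v_1 \to v_2 \to \cdots \to v_k \to v_1$, then any strict coloring would require $c(v_1) < c(v_2) < \cdots < c(v_k) < c(v_1)$, which is impossible. Hence $\chi_D(x) = 0$ for every positive integer $x$, and since $\chi_D$ is a polynomial by \Cref{corol:dichromatic-degree}, it vanishes identically.

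For the acyclic case, I would identify $\chi_D$ and $\bar\chi_D$ with the strict and non-strict order polynomials of the poset $P_D$ obtained as the reflexive-transitive closure of $D$. Because $D$ is acyclic, transitivity of $<$ on the integers makes the constraint ``$c(u) < c(v)$ for every edge $u \to v$'' equivalent to the constraint ``$c(u) < c(v)$ for every $u <_{P_D} v$'', and analogously in the non-strict case. The identity $\chi_D(x) = (-1)^n \bar\chi_D(-x)$ is then exactly Stanley's order-polynomial reciprocity. The cleanest self-contained route is to expand both polynomials using \Cref{lemma:dichromatic-surjective} and its non-strict analog into $\sum_i \binom{x}{i} \chi^*_D(i)$ and $\sum_i \binom{x}{i} \bar\chi^*_D(i)$, then use the binomial identity $\binom{-x}{i} = (-1)^i \binom{x+i-1}{i}$ together with a refinement bijection that sends each ordered partition of $V$ compatible with a non-strict coloring to its collection of linear extensions of each block (which are precisely the finer ordered partitions compatible with strict colorings); alternatively, the result can simply be invoked.

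For the second equation, the key observation is that in any non-strict coloring $c$ of $D$, two vertices $u,v$ in the same strongly connected component must receive the same color, since directed paths in both directions force $c(u) \le c(v)$ and $c(v) \le c(u)$. This yields a natural bijection between non-strict colorings of $D$ and non-strict colorings of its condensation $acyc(D)$, so $\bar\chi_D(x) = \bar\chi_{acyc(D)}(x)$. Applying the acyclic case of the first equation to $acyc(D)$ (whose vertex set has size $|V(acyc(D))|$) and substituting $-x$ for $x$ then yields the claimed identity.

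The main obstacle is establishing Stanley's order-polynomial reciprocity for the acyclic case; this is a known but non-trivial result, and producing a clean self-contained combinatorial proof---in particular showing that the signed sum on the non-strict side collapses to the strict count after the refinement correspondence---would be the bulk of the technical work. Everything else, namely the cyclic vanishing and the condensation argument, is essentially immediate once the acyclic reciprocity is in hand.
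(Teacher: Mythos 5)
Your proposal is correct, but note that the paper does not prove this lemma at all: it is stated with citations to the literature on directed chromatic polynomials and order-polynomial reciprocity, so there is no in-paper argument to compare against. Your decomposition is the standard one and each step checks out. The vanishing of $\chi_D$ in the cyclic case is immediate (a directed cycle forces $c(v_1)<\dots<c(v_1)$, so the polynomial is zero at every positive integer and hence identically zero by \Cref{corol:dichromatic-degree}). The reduction of the acyclic case to Stanley's reciprocity is valid because acyclicity is exactly what makes the reflexive-transitive closure antisymmetric, hence a poset, and transitivity of $<$ (resp.\ $\le$) on $\mathbb{Z}$ makes edge constraints equivalent to poset constraints; strict colorings of $D$ are then strict order-preserving maps and non-strict colorings are order-preserving maps, so $\chi_D(x)=(-1)^n\bar\chi_D(-x)$ is precisely the reciprocity theorem. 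The condensation step is also sound: a non-strict coloring is constant on each strongly connected component (two-way reachability gives $c(u)\le c(v)\le c(u)$), yielding a bijection $\bar\chi_D(x)=\bar\chi_{acyc(D)}(x)$, and applying the acyclic case to $acyc(D)$ with $x\mapsto -x$ and $(-1)^{-k}=(-1)^k$ gives the second identity. The only part you leave unproved is Stanley's reciprocity itself; since the paper itself imports the whole lemma from the literature, invoking that classical result is entirely in keeping with the paper's level of rigor, though your sketch of the surjective-expansion route via \Cref{lemma:dichromatic-surjective} is the right skeleton if you wanted to make it self-contained.
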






\section{\wcpfull{}}

In this section, we introduce our first polynomial, called \emph{\wcpfull{}} (\wcp{}).
The \wcp{} is a univariate polynomial defined on a first-order logic sentence and a distinguished binary relation.
Similarly to graph polynomials, the \wcp{} can also capture various combinatorial properties of the models of the sentence w.r.t. the binary relation.
As we show later in this section, \wcp{} can be computed in polynomial time in the domain size for sentences from the fragment \Ctwo{} with cardinality constraints. This property allows us to attach axioms of non-trivial combinatorial structures to the first-order logic sentence as well as to derive the Tutte polynomial for certain graphs.

\subsection{Definition}

We define \wcp{} as a univariate polynomial whose point evaluation at every positive integer $u$ is equal to a weighted first-order model count of a suitable first-order logic sentence parameterized by $u$.

\begin{definition}[\wcpfull{}]\label{def:wcp}
Let $\Psi$ be a first-order logic sentence possibly with cardinality constraints and ground unary literals, $\weight, \negweight$ be two weighting functions, and $R$ be a distinguished binary relation.
The $n$-th \wcp{} $f_n(u)$ of $\sentence$ for the relation $R$ is the univariate polynomial which satisfies:
\begin{equation*}
f_n(u;\Psi, \weight, \negweight, R) = \wfomc(\Psi_{R,u}, n, \weight, \negweight)
\end{equation*}
for all positive integers $u$. Here, the sentence $\Psi_{R,u}$ is defined as:
\begin{equation}
  \begin{aligned}
    \Psi_{R,u} = \Psi \wedge & \bigwedge_{i=1}^u \forall x \forall y \left( A_i(x) \wedge (R(x,y) \vee R(y,x)) \to A_i(y) \right) \\
    \wedge & \bigwedge_{i = 2}^u \forall x \left( A_i(x) \to A_{i-1}(x) \right),
  \end{aligned}
  \label{eq:wcp}
\end{equation}
where $A_1, \cdots, A_u$ are fresh unary predicates and $w(A_i) = \overline{w}(A_i) = 1$.
\end{definition}

Intuitively, by introducing $u$ new predicates $A_1, \cdots, A_u$, we make each \emph{weakly} connected component in $G(R)$ contribute a factor of $u+1$ to $\wfomc$.
In the rest of the paper, we use $cc(\mu_R)$ to denote the number of weakly connected components in $G(\mu_R)$.
The most important property of \wcp{} is that it captures $cc(\mu_R)$, which is stated formally in the following proposition.

\begin{proposition}\label{prop:wcp}
For the sentence $\Psi$, the weighting functions $w$ and $\overline{w}$ and the binary relation $R$, it holds that
$$
f_n(u; \Psi, w, \overline{w}, R) = \sum_{\mu \in \fomodels{\Psi}{n}} W(\mu, \weight, \negweight) \cdot (u+1)^{cc(\mu_R)}.
$$
\end{proposition}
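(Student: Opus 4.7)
The plan is to first show that the stated identity holds pointwise for every positive integer $u$, then upgrade to an identity of polynomials.

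For the pointwise claim, I would fix a positive integer $u$ and analyze $\wfomc(\Psi_{R,u}, n, w, \overline{w})$ by grouping models of $\Psi_{R,u}$ according to their restriction to the original vocabulary of $\Psi$. Every model $\nu$ of $\Psi_{R,u}$ restricts to a model $\mu \in \fomodels{\Psi}{n}$, and since $w(A_i) = \overline{w}(A_i) = 1$ for each fresh unary predicate, we have $W(\nu, w, \overline{w}) = W(\mu, w, \overline{w})$. Hence
\begin{equation*}
\wfomc(\Psi_{R,u}, n, w, \overline{w}) = \sum_{\mu \in \fomodels{\Psi}{n}} W(\mu, w, \overline{w}) \cdot N(\mu, u),
\end{equation*}
where $N(\mu, u)$ is the number of ways to interpret $A_1, \dots, A_u$ on the domain so that the two conjuncts added in \eqref{eq:wcp} are satisfied.

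The core combinatorial step is then to show $N(\mu, u) = (u+1)^{cc(\mu_R)}$. The first conjunct of \eqref{eq:wcp} forces each $A_i$ to be constant along every (undirected) edge of $G(\mu_R)$; iterating this, $A_i$ is constant on each weakly connected component of $G(\mu_R)$. So choosing an interpretation amounts to assigning, to each weakly connected component $C$, a tuple $(b_1(C), \dots, b_u(C)) \in \{\bot, \top\}^u$. The second conjunct $A_i \to A_{i-1}$ translates to the condition that once some $b_i(C) = \bot$ we must also have $b_{i+1}(C) = \bot$, i.e., the truthful indices form a prefix $\{1, 2, \dots, k\}$ for some $k \in \{0, 1, \dots, u\}$. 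There are exactly $u+1$ such prefixes, so each component independently contributes a factor of $u+1$, yielding $N(\mu, u) = (u+1)^{cc(\mu_R)}$.

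Finally, I would invoke uniqueness of polynomial interpolation. The right-hand side of the proposition is manifestly a polynomial in $u$ (a sum of finitely many weighted powers $(u+1)^{cc(\mu_R)}$), and by the previous step it agrees with $\wfomc(\Psi_{R,u}, n, w, \overline{w})$ at every positive integer $u$. Since Definition~\ref{def:wcp} characterizes $f_n(u;\Psi, w, \overline{w}, R)$ as the unique polynomial with this interpolation property, the two polynomials coincide. I expect the only subtle point to be the propagation argument showing that $A_i$ is constant on each weakly connected component: one must use the $R(x,y) \vee R(y,x)$ disjunction in the first conjunct to walk along edges in both orientations, which is exactly why the resulting partition is by \emph{weakly} connected components rather than, say, strongly connected ones.
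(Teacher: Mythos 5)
Your proof is correct and follows essentially the same route as the paper's: both decompose models of $\Psi_{R,u}$ according to their restriction to the vocabulary of $\Psi$, observe that the implication chain $A_i \to A_{i-1}$ forces a prefix structure while the first conjunct forces constancy on weakly connected components, and conclude that each component contributes an independent factor of $u+1$. You are slightly more explicit than the paper in spelling out the final step (that the right-hand side is a polynomial agreeing with the WFOMC at all positive integers, hence equals $f_n$ by the uniqueness in Definition~\ref{def:wcp}), but the substance is identical.
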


\begin{proof}
By $\bigwedge_{i = 2}^u \forall x \left( A_i(x) \to A_{i-1}(x) \right)$, for each element $e$ in the domain there must be an $i' \in [0,u]$ such that $A_1(e) = \cdots = A_{i'}(e) = \top$ and $A_{i'+1}(e) = \cdots = A_u(e) = \bot$. Meanwhile, elements whose corresponding vertices form a weakly connected component in $G(\mu_R)$ must have the same $i'$ due to the first row of Equation \eqref{eq:wcp}.

Therefore, for any model $\mu$ of $\sentence$, each weakly connected component in $G(\mu_R)$ has $u+1$ choices of $i'$ regardless of the interpretations of other predicates.
That is, $\mu$ replicates $u+1$ times in $\sentence_{R,u}$, and the weight of each of these replicas is $W(\mu, w, \overline{w})$ due to $\weight(A_i) = \negweight(A_i) = 1$.
Therefore, the WFOMC of $\sentence_{R,u}$ is $\sum_{\mu\in\fomodels{\Psi}{\mu}} W(\mu, w, \overline{w}) \cdot (u+1)^{cc(\mu_R)}$, which completes the proof.
\end{proof}

\begin{proposition}\label{prop:wcp-degree}
The $n$-th \wcp{} $f_n(u; \Psi, w, \overline{w}, R)$ always exists, and it is a unique univariate polynomial of degree at most $n$.
\end{proposition}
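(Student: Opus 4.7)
The plan is to derive the polynomial structure and degree bound directly from the closed form supplied by \Cref{prop:wcp}. That proposition rewrites $f_n(u;\Psi,w,\overline{w},R)$, for every positive integer $u$, as the finite sum
\[
\sum_{\mu \in \fomodels{\Psi}{n}} W(\mu,\weight,\negweight)\cdot (u+1)^{cc(\mu_R)}.
\]
Since $\fomodels{\Psi}{n}$ is finite (models are subsets of a finite set of ground literals over a domain of size $n$), this right-hand side is, as a formal expression in $u$, a finite linear combination of the polynomials $(u+1)^{cc(\mu_R)}$ with real coefficients $W(\mu,\weight,\negweight)$. Hence the right-hand side defines a polynomial $p(u) \in \mathbb{R}[u]$.

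Next I would bound the degree. For any possible world $\mu$ over the domain $[n]$, the graph $G(\mu_R)$ has $n$ vertices, so $cc(\mu_R) \le n$. Therefore each summand $(u+1)^{cc(\mu_R)}$ has degree at most $n$ in $u$, and a finite sum of such polynomials still has degree at most $n$. This gives the desired degree bound on $p(u)$.

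It remains to reconcile $p(u)$ with the polynomial $f_n(u;\Psi,w,\overline{w},R)$ named in \Cref{def:wcp}. By \Cref{prop:wcp}, $p(u)$ agrees with $\wfomc(\sentence_{R,u},n,\weight,\negweight)$ at every positive integer $u$, and any two polynomials that agree on infinitely many points are identical; so $p$ is the unique polynomial satisfying the defining condition of \wcp{}. Therefore $f_n(u;\Psi,w,\overline{w},R) = p(u)$ is a polynomial of degree at most $n$.

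I do not anticipate a genuine obstacle here: the only subtlety is being explicit that the finiteness of $\fomodels{\Psi}{n}$ together with $cc(\mu_R) \le n$ is what simultaneously gives the existence, the polynomial nature, and the degree bound; the uniqueness clause of \Cref{def:wcp} is what justifies identifying the combinatorial expression from \Cref{prop:wcp} with $f_n$.
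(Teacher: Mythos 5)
Your argument is correct and is essentially the paper's own proof: the paper also derives the degree bound directly from \Cref{prop:wcp} together with the observation that $cc(\mu_R)\le n$, only more tersely. Your additional remarks on finiteness of $\fomodels{\Psi}{n}$ and uniqueness of the interpolating polynomial are fine elaborations of the same route, not a different approach.
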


\begin{proof}
The existence and the upper bound of the degree follow immediately from Proposition \ref{prop:wcp} since $cc(\mu_R) \le n$.
\end{proof}

\subsection{Fast Calculation}

Before applying the polynomials in any context, we need to be able to compute them (or at least evaluate them at some points) efficiently.
By Proposition \ref{prop:wcp-degree}, $n+1$ evaluations are sufficient to interpolate the $n$-th \wcp{} (e.g., by Lagrange Interpolation). Therefore, our task is to evaluate the \wcp{} at $u=0, 1, \cdots, n$.

Note that by definition, evaluating the polynomial at a specific positive integer is a WFOMC query.
However, such WFOMC is not readily solvable by existing algorithms (e.g., \cite{WFOMC-FO3, WFOMC-FO2-faster}) even for the \FOtwo{} fragment. In general, the runtime of these algorithms is exponential in the length of the sentence, while the length of the sentence $\sentence_{R,u}$ in the WFOMC query can be as large as the domain size $n$, hence these algorithms would not lead to polynomial runtime in the domain size.
In particular, the algorithm described in \Cref{sub:data-complexity} is not sufficient in this case either, as there would be $O(2^n)$ 1-types and, consequently, an exponential time data complexity.
To this end, we adapt this algorithm into a dynamic programming (DP) style algorithm that exploits the structure of $\sentence_{R,u}$ to achieve an efficient calculation of the polynomials for \FOtwo{} sentences, which then applies to \Ctwo{} sentences with cardinality constraints by techniques mentioned in \Cref{sub:data-complexity}.


Let $\Psi$ and $R$ be the input \FOtwo{} sentence and the binary relation that \wcp{} is defined on.
Consider the point evaluation of the \wcp{} at $u$.
Let $C_1(x), C_2(x), \cdots, \linebreak C_p(x)$ be the valid 1-types \footnote{We call a 1-type \emph{valid} if it is satisfiable in some model of the sentence.} of $\sentence$.
Due to the definition of $\sentence_{R,u}$, the valid 1-types of $\Psi_{R,u}$ must have the form $C_i(x) \wedge C^A_{i'}(x)$, where $C^A_{i'}(x)$ is defined as
$$
\begin{aligned}
C^A_{i'}(x) = A_1(x) \wedge A_2(x) \wedge \dots A_{i'}(x) \wedge \lnot A_{i'+1}(x) \wedge \dots \wedge \lnot A_u(x)
\end{aligned}
$$
for $i' \in [u]$, and
$$
C^A_{0}(x) = \lnot A_1(x) \wedge \lnot A_2(x) \wedge \dots \wedge \lnot A_u(x).
$$

We use the tuple $(i,i')$ to index the 1-type $C_i(x) \wedge C^A_{i'}(x)$.
Then we have the following observation.

\begin{observation}\label{observation:1}

Let $r_{(i,i'),(j,j')}$ be the mutual 1-type coefficient (defined in \Cref{sub:data-complexity}) of $\sentence_{R,u}$.
It holds that
$$
r_{(i,i'),(j,j')} =
\begin{cases}
r_{i,j} = \wmc(\psi_{i,j}(a,b), \weight, \negweight),  & i' = j', \\
\begin{aligned}
    r^{\not-}_{i,j} = \wmc( & \psi_{i,j}(a,b) \wedge \lnot R(a,b)
     \wedge \lnot R(b,a), w, \overline{w}),
\end{aligned} & i' \neq j', \\
\end{cases}
$$
where $\psi_{i,j}(x,y)$ is the simplified formula of $\sentence$ defined in \Cref{sub:data-complexity}.
\end{observation}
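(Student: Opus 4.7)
The plan is to expand the definition of the mutual 1-type coefficient for $\sentence_{R,u}$ and simplify the two-variable part after substituting the unary and reflexive atoms dictated by the 1-types $(i,i')$ and $(j,j')$. I first decompose $\sentence_{R,u}$ into $\sentence$ together with (a) the single-variable axioms $\forall x (A_k(x) \to A_{k-1}(x))$, which are purely local and only rule out non-prefix 1-types for the $A$-block (exactly the ones captured by $C^A_{i'}$), and (b) the two-variable propagation axioms $\forall x \forall y (A_k(x) \wedge (R(x,y) \vee R(y,x)) \to A_k(y))$ for $k \in [u]$. Since $\sentence$ contains none of the $A_k$ predicates, its contribution to the mutual coefficient between elements of types $(i,i')$ and $(j,j')$ is unchanged and equals $\psi_{i,j}(a,b)$; only the propagation axioms can contribute extra conditions.

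I would then perform a short case split on the $A$-blocks $i'$ and $j'$. If $i' = j'$, then $A_k(a)$ and $A_k(b)$ coincide for every $k$, so each instantiation of the propagation axiom at $(a,b)$ or $(b,a)$ is a tautology and no constraint on $R(a,b)$ or $R(b,a)$ remains; the mutual formula is exactly $\psi_{i,j}(a,b)$, yielding $r_{i,j}$. If $i' \neq j'$, say $i' > j'$ by symmetry, then for every $k \in \{j'+1, \dots, i'\}$ the substitution gives $A_k(a) = \top$ and $A_k(b) = \bot$, so the implication collapses to $\lnot R(a,b) \wedge \lnot R(b,a)$; the other indices again give tautologies, and the swapped-variable instantiation imposes exactly the same constraint. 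Combining, the mutual formula becomes $\psi_{i,j}(a,b) \wedge \lnot R(a,b) \wedge \lnot R(b,a)$, whose WMC is $r^{\not-}_{i,j}$ by definition.

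The only subtle point is to confirm that the $A$-part and the $\sentence$-part truly combine as a single conjunction under WMC without hidden interactions. This is immediate because the two parts share no binary predicate other than $R$, and in the $i' \neq j'$ case the $A$-part merely forces both $R(a,b)$ and $R(b,a)$ to be false, which composes cleanly with $\psi_{i,j}(a,b)$. Overall, the argument is a careful unfolding of definitions; the main piece requiring attention is the case analysis on the truth-value patterns of the $A_k$ atoms, and by construction of the prefix-shaped 1-types $C^A_{i'}$ the only two regimes that can arise are precisely $i' = j'$ and $i' \neq j'$.
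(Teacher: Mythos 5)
Your proposal is correct: the case split on the truth-value patterns of the $A_k$ atoms (tautological propagation clauses when $i'=j'$; the clauses for $k\in\{j'+1,\dots,i'\}$ collapsing to $\lnot R(a,b)\wedge\lnot R(b,a)$ when $i'\neq j'$) is exactly the justification the paper relies on, though the paper states this as an observation and omits the proof entirely. Nothing is missing; your handling of the prefix-shaped 1-types and of the unit weights on the $A_i$ predicates matches the intended argument.
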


This observation provides an important insight into the structure of the new 1-types (as shown in \Cref{fig:wcp}): we can imagine that the new 1-types of $\Psi_{R,u}$ form $u+1$ layers, with each layer containing a replica of the original 1-types of $\Psi$. If two elements $a, b$ fall in 1-types in the same layer, $R(a, b)$ can be either true or false. But if $a$ and $b$ are located in 1-types in different layers, $R(a, b)$ as well as $R(b, a)$ must be false due to the definition of $\Psi_{R,u}$ in Equation \eqref{eq:wcp}.

\begin{figure}
  \centering
  \begin{tikzpicture}[scale=0.6]
    \foreach \i in {0,...,\PICSIZE} {
        \draw [very thick,black] (\i,0) -- (\i,1);
    }
    \draw [very thick,black] (0,0) -- (\PICSIZE,0);
    \draw [very thick,black] (0,1) -- (\PICSIZE,1);
    \foreach \i in {0,...,3} {
        \draw [very thick,black] node [below] at (\i+0.5,0) {$\i$};
    }
    \draw [very thin,gray,step=0.25] (0,0) grid (\PICSIZE,1);
    \draw [very thick,black] node [below] at (5,0) {$\cdots$};
    \draw [very thick,black] node [below] at (7.5,0) {$u$};
    \draw [very thick,blue] (4.05,0.05) rectangle ++(0.9,0.9);
    \draw [very thick,red] (0.05,0.05) rectangle ++(3.9,0.9);
    \draw [<->, very thick, black] (1.5, 0.5) to [bend left=30] (4.4, 0.5) node [right] at (2.5, 1.5) {$r^{\not -}_{i,j}$};
    \draw [->, very thick, black] (5.5, 2) to (4.6, 0.5) node at (5.5, 2.2) {$r_{i,j}$};
  \end{tikzpicture}
  \caption{The 1-type structure of $\Psi_{R,u}$. Each tiny gray box represents an original 1-type of $\Psi$. The bold gray cells (each containing $4 \times 4$ boxes) refer to the replica of the original 1-types in each layer. The DP algorithm proceeds from left (cells $0, \cdots, 3$) to right (cell $4$).} \label{fig:wcp}
\end{figure}
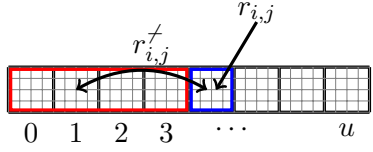

Therefore, we can perform a ``blocked'' DP, where each block consists of the 1-types in the same layer.
We can write the new 1-type configuration as
$$c_{(1,1)}, \cdots, c_{(p,1)}, \  c_{(1,2)}, \cdots, c_{(p,2)}, \ \cdots,\  c_{(1,u)}, \cdots, c_{(p,u)}$$
where $c_{(i,i')}$ is the number of elements in the 1-type $(i,i')$.
For any $\hat{u}\in [u]$, we define $h_{\hat{u}, (c_1, c_2, \dots, c_p)}$ to be the weights of 1-types $(1,1), \cdots, (p,1), \cdots, (1,\hat{u}), \cdots, (p,\hat{u})$ such that $c_i = \sum_{j=1}^{\hat{u}} c_{(i,j)}$ for each $i \in [p]$, that is,
\begin{equation*}
\begin{aligned}
  h_{\hat{u}, (c_1, c_2, \dots, c_p)} &= \sum_{c_{(1,1)}+\cdots+c_{(1,\hat{u})} = c_1} \dots \sum_{c_{(p,1)}+\cdots+c_{(p,\hat{u})} = c_p} \frac{(\sum_{i \in [p]} c_i)!}{\prod_{i \in [p], i' \in [\hat{u}]} c_{(i,i')}!} \\
  &\prod_{i \in [p], i' \in [\hat{u}]} W(C_i, \weight, \negweight)^{c_{(i,i')}} \cdot s_i^{\binom{c_{(i,i')}}{2}} \prod_{\substack{i,j \in [p], i',j' \in [\hat{u}], \\ (i,i') < (j,j')}} \left(r_{(i,i'),(j,j')}\right)^{c_{(i,i')}c_{(j,j')}},
\end{aligned}
\end{equation*}
where $(i,i') < (j,j')$ means $i < j$ or $i = j \land i' < j'$.

For simplicity, we use $\config{c}$ to denote $(c_1, c_2, \dots, c_p)$.
Then the new term $h_{\hat{u}, \config{c}}$ can be also computed by a DP algorithm:
\begin{equation}
  \begin{aligned}
    h_{\hat{u}, \config{c}} = \sum_{\config{\bar c} + \config{c}^* = \config{c}} & \binom{|\config{c}|}{|\config{c}^*|} \cdot h_{\hat{u}-1, \config{\bar c}} \cdot w_{in}(\config{c}^*) \cdot w_{cross}(\config{\bar c}, \config{c}^*),
  \end{aligned}
  \label{eq:wfomc-fo2-dp-block}
\end{equation}
where recall that $\config{\bar c} + \config{c}^* = \config{c}$ means that $\bar c_i + c^*_i = c_i$ for each $i \in [p]$,
\begin{equation*}
w_{in}(\config{c}^*) = \binom{|\config{c}^*|}{\config{c}^*} \prod_{i=1}^p W(C_i, \weight, \negweight)^{c_i^*} \cdot s_i^{\binom{c_i^*}{2}} \prod_{1 \le i < j \le p} \left(r_{i,j}\right)^{c_i^* c_j^*}
\end{equation*}
is the weight within layer $\hat{u}$, and
\begin{equation*}
w_{cross}(\config{\bar c}, \config{c}^*) = \prod_{i=1}^p \prod_{j=1}^p \left(r_{i,j}^{\not-}\right)^{\bar{c}_i c_j^*}
\end{equation*}
is the weight between the layer $\hat{u}$ and the layers before $\hat{u}$.
The initial values are $h_{0, \config{c}} = w_{in}(\config{c})$ for each $\config{c}$.
Then the final WFOMC of $\Psi_{R,u}$ can be obtained by:
\begin{equation*}
  \wfomc(\sentence_{R,u}, n, \weight, \negweight) = \sum_{|\config{c}| = n} h_{u, \config{c}}.
\end{equation*}


The new DP algorithm is summarized in \Cref{wcp_dp}, where \texttt{interpolate\_1d($\dots$)} takes $n+1$ points as input and returns the corresponding polynomial of degree $n$.

\begin{algorithm}
\caption{Computing the \wcp{}} \label{wcp_dp}
\KwIn{A sentence $\sentence$, weighting functions $\weight, \negweight$, the relation $R$, and an integer $n$}
\KwOut{$f_n(u; \sentence, \weight, \negweight, R)$}
\SetKwFunction{intp}{interpolate\_1d}

$h_{0, \config{c}} \gets w_{in}(\config{c}) \textbf{ for each } |\config{c}| \le n$ \Comment{$\config{c} = (c_1, \cdots, c_p)$} \\
$evaluation\_pairs \gets \{(0, \sum_{|\config{c}|=n} h_{0, \config{c}})\}$ \\
\For{$u \gets 1$ \KwTo $n$} {
    \ForEach{\upshape $\config{c}$ \text{such that} $|\config{c}| \le n$} {
        $h_{u, \config{c}} \gets \sum_{\config{\bar c} + \config{c}^* = \config{c}} \binom{|\config{c}|}{|\config{c}^*|} \cdot h_{u-1, \config{\bar c}}  \cdot w_{in}(\config{c}^*) \cdot w_{cross}(\config{\bar c}, \config{c}^*)$
    }
    $evaluation\_pairs \gets evaluation\_pairs \cup \{(u, \sum_{|\config{c}|=n} h_{u, \config{c}})\}$
}
$f \gets$ \intp{$evaluation\_pairs$} \\
\Return $f$
\end{algorithm}

\begin{proposition}\label{prop:fast-wcp}
Computing \wcp{} for a given \Ctwo{} sentence with cardinality constraints, a domain of size $n$, a distinguished binary relation, and weighting functions can be completed in polynomial time in $n$.
\end{proposition}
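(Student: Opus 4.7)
The plan is to verify that Algorithm \ref{wcp_dp}, combined with the standard reduction from \Ctwo{} with cardinality constraints and ground unary literals to \FOtwo{} with symbolic weights recalled in \Cref{sub:data-complexity}, runs in time polynomial in $n$. First I would note that the construction of $\sentence_{R,u}$ in \Cref{def:wcp} only conjoins universally quantified binary clauses together with an order on fresh unary predicates $A_i$ having trivial weights, none of which interferes with that reduction. Hence it suffices to prove the proposition for an \FOtwo{} sentence $\sentence$ with possibly symbolic weights and then invoke \cite{WFOMC-C2,WFOMS-FO2-new} to lift back.

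Second, I would establish correctness of the blocked DP. By \Cref{prop:wcp-degree}, $n+1$ point evaluations at $u = 0, 1, \dots, n$ suffice to recover the \wcp{} by interpolation. For a fixed $u$, I would prove by induction on $\hat u$ that $h_{\hat u, \config{c}}$ equals the weighted sum over all ways of distributing elements into the new 1-types $(i, i')$ with $i \in [L]$ and $i' \in [\hat u]$ whose per-layer marginals sum to $\config{c}$, where the within-layer coefficients $r_{i,j}$ and cross-layer coefficients $r^{\not-}_{i,j}$ are exactly those identified in \Cref{observation:1}. The multinomial $\binom{|\config{c}|}{|\config{c}^*|}$ in Equation \eqref{eq:wfomc-fo2-dp-block} accounts for the choice of which elements populate the newly added layer $\hat u$. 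Summing $h_{u, \config{c}}$ over all $\config{c}$ with $|\config{c}| = n$ then reproduces Equation \eqref{eq:wfomc-fo2} applied to $\sentence_{R,u}$, which by \Cref{def:wcp} is the point value $f_n(u;\sentence,\weight,\negweight,R)$.

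Third, I would bound the runtime. Because $\sentence$ is fixed, $L$ is a constant; hence the number of configurations $\config{c}$ with $|\config{c}| \le n$ is $O(n^L)$, and the number of decompositions $\config{\bar c} + \config{c}^* = \config{c}$ is again $O(n^L)$. The base coefficients $r_{i,j}$, $r^{\not-}_{i,j}$, and $s_i$ are each a single \wmc{} call on a quantifier-free formula of constant size, computable in constant time (or polynomial in $n$ in the symbolic case). Consequently, a single point evaluation costs $O(n \cdot n^{2L})$ arithmetic operations, and the final one-dimensional interpolation over $n+1$ pairs is polynomial in $n$, which together yields $n+1$ evaluations and hence a polynomial-time algorithm for the \wcp{}.

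I expect the main obstacle to lie in the symbolic-weight setting inherited from the reduction of \Ctwo{} with cardinality constraints. There the weights are polynomials in a constant number of auxiliary variables, so arithmetic is no longer constant-time. I would control this by observing that every intermediate value $h_{\hat u, \config{c}}$ is a polynomial whose total degree in each auxiliary variable is bounded by the domain size $n$ (since each element contributes degree $O(1)$), so each multiplication takes time polynomial in $n$ and the overall bound on \wcp{} computation stays polynomial; this is exactly the additional bookkeeping already handled in \cite{WFOMC-C2,WFOMS-FO2-new} for ordinary \Ctwo{} WFOMC, and I expect no new difficulty beyond invoking it here.
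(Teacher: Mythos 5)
Your proposal is correct and follows essentially the same route as the paper: reduce \Ctwo{} with cardinality constraints to \FOtwo{} (with symbolic weights), evaluate $\wfomc(\sentence_{R,u}, n, \weight, \negweight)$ at $u = 0, 1, \dots, n$ via the blocked DP of Algorithm \ref{wcp_dp}, and interpolate using the degree bound from \Cref{prop:wcp-degree}. The paper's proof is a brief appeal to the algorithm developed in the preceding subsection, whereas you additionally spell out the DP's correctness by induction, the $O(n^L)$ configuration counts, and the bounded-degree bookkeeping for symbolic weights — all details the paper leaves implicit but consistent with its argument.
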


\begin{proof}
  As discussed in \Cref{sub:data-complexity}, any WFOMC problem for \Ctwo{} sentences with cardinality constraints can be reduced to WFOMC of an \FOtwo{} sentence, which can be solved in polynomial time in $n$ by the algorithm presented above.
  The interpolation step from $n+1$ point evaluations to the polynomial can also be done in polynomial time in $n$, completing the proof.
\end{proof}

\subsection{Applications} \label{sec:wcp-applications}

In this section, we show how \wcp{} can be used for computing WFOMC of \Ctwo{} sentences with cardinality constraints and several combinatorial axioms in polynomial time, i.e. we show that the fragments together with each of these axioms are domain-liftable. Some of these axioms were previously shown to be domain-liftable (e.g., connectedness axioms~\cite{WFOMC-axioms}, tree axiom~\cite{WFOMC-tree-axioms} and forest axiom~\cite{WFOMC-axioms}), while for others we show their domain-liftability for the first time (e.g., generalized connectedness axiom and bipartite axiom). We also show that \wcp{} allows us to efficiently compute the Tutte polynomial of any graph that can be encoded by a set of ground unary literals, cardinality constraints and a fixed \Ctwo{} sentence, e.g., block-structured graphs.

Most of the axioms in this section require some binary relation $R$ to be symmetric and irreflexive.
In this case, we append $\forall x \forall y \left( R(x,y) \to R(y,x) \right) \land \forall x \lnot R(x,x)$ to the input sentence $\sentence$, and interpret the graph $G(R)$ as a simple undirected graph.
We denote the new sentence by $\symsentence$.

\subsubsection{Axioms: Existing, Generalized, and New}

We first show how WFOMC with several existing and new combinatorial axioms can be computed using \wcp{}.

\noindentparagraph{$k$-connected-component axiom}
As in Proposition \ref{prop:wcp}, \wcp{} associates the weight of each model with a term indicating the number of connected components in $G(R)$, therefore we can restrict the number of connected components of $G(R)$.
The \emph{$k$-connected-component} axiom to the binary predicate $R$, denoted by $connected_k(R)$, requires $R$ to be symmetric and $G(R)$ to consist of exactly $k$ connected components. As a special case, $connected_1(R)$, which represents that $G(R)$ is an undirected connected graph, i.e., has just one connected component, was already shown to be domain-liftable in \cite{WFOMC-axioms}. Here we generalize their result to $connected_k(R)$.
\begin{theorem} \label{exmp:k-connected-component}
 The fragment of \Ctwo{} with cardinality constraints and a single $k$-connected-component axiom is domain-liftable.
\end{theorem}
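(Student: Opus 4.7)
The plan is to obtain $\wfomc(\symsentence \land connected_k(R), n, \weight, \negweight)$ by extracting a single coefficient from the \wcpfull{} $f_n(u; \symsentence, \weight, \negweight, R)$, where $\symsentence$ denotes $\sentence$ conjoined with $\forall x \forall y (R(x,y) \to R(y,x)) \land \forall x \lnot R(x,x)$ so as to enforce the required symmetry and irreflexivity on $R$. By Proposition~\ref{prop:wcp},
\begin{equation*}
f_n(u; \symsentence, \weight, \negweight, R) = \sum_{\mu \in \fomodels{\symsentence}{n}} W(\mu, \weight, \negweight) \cdot (u+1)^{cc(\mu_R)}.
\end{equation*}

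The key observation is that the change of variables $v = u+1$ turns this into a generating polynomial in which models are stratified exactly by their number of connected components. Setting $g_n(v) := f_n(v-1)$, we have $g_n(v) = \sum_{\mu} W(\mu, \weight, \negweight) \cdot v^{cc(\mu_R)}$, so
\begin{equation*}
[v^k]\, g_n(v) = \sum_{\substack{\mu \in \fomodels{\symsentence}{n} \\ cc(\mu_R) = k}} W(\mu, \weight, \negweight) = \wfomc(\symsentence \land connected_k(R), n, \weight, \negweight).
\end{equation*}
This identity is the entire combinatorial content of the theorem: the axiom $connected_k(R)$ is encoded simply as extraction of the $v^k$-coefficient of $g_n(v)$.

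It remains to verify the complexity. Adjoining the symmetry and irreflexivity formulas keeps the input within \Ctwo{} with cardinality constraints, so Proposition~\ref{prop:fast-wcp} yields an explicit representation of $f_n(u; \symsentence, \weight, \negweight, R)$ in time polynomial in $n$. By Proposition~\ref{prop:wcp-degree}, $f_n$ has degree at most $n$, hence the substitution $v = u+1$ and the extraction of $[v^k]$ are both polynomial in $n$. I do not foresee a real obstacle here: the heavy lifting has already been done inside the WCP machinery of the preceding section, and the proof amounts to recognizing that $connected_k(R)$ corresponds to a single coefficient of a polynomial whose fast computation is guaranteed by Proposition~\ref{prop:fast-wcp}.
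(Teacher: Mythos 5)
Your proposal is correct and follows essentially the same route as the paper, whose proof is the one-line identity $\wfomc(\sentence \land connected_k(R), n, \weight, \negweight) = [u^k] f_n(u-1; \symsentence, \weight, \negweight, R)$ derived from Proposition~\ref{prop:wcp}; your change of variables $v=u+1$ is just that substitution written out, and your appeal to Propositions~\ref{prop:wcp-degree} and~\ref{prop:fast-wcp} supplies the (implicit in the paper) complexity bookkeeping.
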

\begin{proof}
    Consider a \Ctwo{} sentence $\sentence$ with cardinality constraints, a domain of size $n$, the weighting functions $\weight, \negweight$, and a distinguished binary predicate $R$. By \Cref{prop:wcp}, we have that $\wfomc(\sentence \land connected_k(R), n, \weight, \negweight)$ is equal to $ [u^k] f_n(u-1; \symsentence, \weight, \negweight, R)$.
\end{proof}


\noindentparagraph{Bipartite axiom}
The \emph{bipartite} axiom to the binary predicate $R$, denoted by $bipartite(R)$, requires $R$ to be symmetric and $G(R)$ to be a bipartite graph\footnote{
Note that bipartite graphs are not finitely axiomatizable in first-order logic by the compactness theorem.}.
\begin{theorem} \label{exmp:bipartite}
    The fragment of \Ctwo{} with cardinality constraints and a single bipartite axiom is domain-liftable.
\end{theorem}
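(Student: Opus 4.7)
The plan is to encode bipartiteness by the existence of a proper $2$-coloring and then exploit the polynomial nature of \wcp{} at a carefully chosen non-integer point to cancel out the resulting overcount. Given a \Ctwo{} sentence $\sentence$ with cardinality constraints, a domain of size $n$, weighting $(\weight, \negweight)$, and a distinguished binary relation $R$, I would introduce a fresh unary predicate $P$ with $\weight(P) = \negweight(P) = 1$ and consider the enriched sentence
$$\sentence' = \symsentence \land \forall x \forall y \bigl(R(x,y) \to (P(x) \leftrightarrow \lnot P(y))\bigr).$$
In any model of $\sentence'$, $P$ picks a proper $2$-coloring of $G(R)$; conversely, such a coloring exists iff $G(R)$ is bipartite, and a bipartite $G(R)$ with $k$ weakly connected components admits exactly $2^k$ such colorings (one independent binary choice of side per component).

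Next I would compute the $n$-th \wcp{} of $\sentence'$ with respect to $R$. Since $P$ has unit weights, the $P$-summation can be carried out first, and \Cref{prop:wcp} together with the counting observation above gives
$$f_n(u;\sentence',\weight,\negweight,R) = \sum_{\substack{\mu\models\symsentence \\ G(\mu_R)\text{ bipartite}}} W(\mu,\weight,\negweight)\cdot 2^{cc(\mu_R)} (u+1)^{cc(\mu_R)} = \sum_{\mu\text{ as above}} W(\mu,\weight,\negweight)\cdot\bigl(2(u+1)\bigr)^{cc(\mu_R)}.$$
To collapse the $cc(\mu_R)$ factor to $1$ for every bipartite model, I solve $2(u+1) = 1$ and substitute $u = -\tfrac{1}{2}$. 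By \Cref{prop:wcp-degree}, $f_n$ is an honest polynomial of degree at most $n$, so this evaluation is well-defined even though the defining identity is only stated for positive integers, and yields
$$f_n\bigl(-\tfrac{1}{2};\sentence',\weight,\negweight,R\bigr) \;=\; \wfomc(\symsentence \land bipartite(R), n, \weight, \negweight).$$

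Finally, $\sentence'$ only adds a single \FOtwo{} conjunct over a fresh unary predicate and leaves the cardinality constraints untouched, so it remains in \Ctwo{} with cardinality constraints, and \Cref{prop:fast-wcp} produces $f_n$ in polynomial time in $n$; reading off its value at $u=-\tfrac12$ takes constant additional time, establishing domain-liftability. The only non-routine step is spotting the algebraic trick at $u=-\tfrac12$: once one treats \wcp{} as a bona fide polynomial rather than as a family of integer point-evaluations, the $2^{cc(\mu_R)}$ overcount trivially disappears, so there is no real obstacle beyond this observation.
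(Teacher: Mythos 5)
Your proposal is correct and essentially identical to the paper's proof: the paper also encodes the $2$-coloring with fresh unit-weight unary predicates (it uses two mutually exclusive predicates $P_1,P_2$ where you use a single $P$ and its negation, a cosmetic difference), observes the same $2^{cc(\mu_R)}$ overcount per bipartite model, and cancels it by evaluating the \wcp{} at $u=-\tfrac12$. No gaps.
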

\begin{proof}
    Consider a \Ctwo{} sentence $\sentence$ with cardinality constraints, a domain of size $n$ weighting functions $\weight, \negweight$, and a distinguished binary predicate $R$.
    We need another two predicates to represent the bipartite graph. Define
    $$
    \begin{aligned}
    \sentence_b = &\symsentence \land \forall x \ ((P_1(x) \lor P_2(x)) \land (\lnot P_1(x) \lor \lnot P_2(x))) \\
    &\land \forall x \forall y \ \left( P_1(x) \land P_1(y) \lor P_2(x) \land P_2(y) \to \lnot R(x,y) \right),
    \end{aligned}
    $$
    where $\weight(P_1) = \negweight(P_1) = \weight(P_2) = \negweight(P_2) = 1$.
    The predicates $P_1, P_2$ divide domain elements into two groups to ensure that $G(R)$ is a bipartite graph. However, overcounting occurs since for every connected component in the bipartite graph, we have two choices of assigning vertices to $P_1$ and $P_2$.
    Hence, for each connected component, we need to multiply the WFOMC by a factor $\frac 12$ to its weight, which can be done using \wcp{}:
    \begin{align*}
      \wfomc(\sentence \land bipartite(R), n, \weight, \negweight)
      &= \sum_{\mu\in\fomodels{\sentence_b}{n}} W(\mu, \weight, \negweight)\cdot (\frac{1}{2})^{cc(\mu_R)} \\
      &= f_n(-\frac 12; \sentence_b, \weight, \negweight, R).
    \end{align*}
\end{proof}

\noindentparagraph{Tree and forest axioms}
The \emph{tree} axiom~\cite{WFOMC-tree-axioms} to the binary predicate $R$ (denoted by $tree(R)$) requires $R$ to be symmetric and $G(R)$ to be a tree.
The \emph{forest} axiom~\cite{WFOMC-axioms} to the binary predicate $R$ (denoted by $forest(R)$) requires $R$ to be symmetric and $G(R)$ to be a forest.
To handle these two axioms, we extend the ability of \wcp{} to classify the models of a sentence by both the number of connected components in $G(R)$ and the number of edges in $G(R)$ by introducing a new variable to the weight of $R$.

\begin{definition}
  \label{def:wcp-extended}
  With the notations in Definition \ref{def:wcp}, the extended \wcp{} of $\Psi$ is defined as a bivariate polynomial such that for all positive integers $u$ and all real numbers $v$,
  \begin{equation*}
    \hat f_n(u, v; \Psi, \weight, \negweight, R) = \wfomc(\Psi_{R,u}, n, \weight_{R,v}, \negweight),
  \end{equation*}
  where $w_{R,v}(R) = w(R) \cdot v$, and $w_{R,v}(P) = w(P)$ for any other predicate $P$.
\end{definition}

\begin{corollary}[Corollary from \Cref{prop:wcp,prop:fast-wcp}]\label{corol:wcp}
For the sentence $\Psi$ containing the \emph{symmetric} binary relation $R$, a domain of size $n$, and the weighting functions $\weight, \negweight$, we have
\begin{equation*}
\hat f_n(u, v; \Psi, \weight, \negweight, R) = \sum_{\mu \in \fomodels{\Psi}{n}} W(\mu, \weight, \negweight) \cdot (u+1)^{cc(\mu_R)} v^{2e(\mu_R)},
\end{equation*}
where $cc(\mu_R)$ is the number of connected components of $G(\mu_R)$ and $e(\mu_R)$ is the number of undirected edges\footnote{Here, as $R$ is symmetric, $G(\mu_R)$ is interpreted as an undirected graph, and the number of undirected edges is equal to half the cardinality of $R$ in $\mu$. It is where the number $2$ in the exponent comes from.}of $G(\mu_R)$. Furthermore, the extended WCP of $\sentence$ always exists, and can be computed in time polynomial in $n$ using the same techniques as in \Cref{prop:fast-wcp}.
\end{corollary}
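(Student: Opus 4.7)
My plan is to reduce the extended identity directly to Proposition~\ref{prop:wcp} by absorbing the auxiliary variable $v$ into the weight of the predicate $R$. The entire argument is a short bookkeeping reduction.

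First I would unfold the definition of the extended \wcp{} from Definition~\ref{def:wcp-extended}, rewriting
\[
  f_n(u,v;\Psi,\weight,\negweight,R) = \wfomc(\Psi_{R,u},n,\weight_{R,v},\negweight),
\]
so that the left-hand side becomes an ordinary (non-extended) WFOMC under the modified weighting pair $(\weight_{R,v},\negweight)$, where $\weight_{R,v}$ agrees with $\weight$ on every predicate except that $\weight_{R,v}(R)=v\cdot\weight(R)$.

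Next I would invoke Proposition~\ref{prop:wcp} on this WFOMC. The key point is that the proof of Proposition~\ref{prop:wcp} nowhere uses the specific numerical values of $\weight(R)$ or $\negweight(R)$: it relies only on $\weight(A_i)=\negweight(A_i)=1$ together with the combinatorial effect of the extra clauses in Equation~\eqref{eq:wcp}, which force the $A_i$-labels to be constant on each weakly connected component of $G(\mu_R)$. Thus the conclusion applies verbatim with $\weight$ replaced by $\weight_{R,v}$, giving
\[
  f_n(u,v;\Psi,\weight,\negweight,R) = \sum_{\mu\in\fomodels{\Psi}{n}} W(\mu,\weight_{R,v},\negweight)\cdot(u+1)^{cc(\mu_R)}.
\]

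Then I would relate the two weight functionals on a single model. Because $\weight_{R,v}$ differs from $\weight$ only on the predicate $R$ by a multiplicative factor of $v$, and $\negweight$ is unchanged, each positive ground $R$-literal of $\mu$ contributes one extra factor $v$ while every other literal is untouched. Hence $W(\mu,\weight_{R,v},\negweight)=W(\mu,\weight,\negweight)\cdot v^{N(\mu)}$, where $N(\mu)$ denotes the number of positive ground $R$-literals in $\mu$. Finally, since $R$ is symmetric (and, as the footnote clarifies, either irreflexive or with $e(\mu_R)$ defined as half the cardinality of $R$ in $\mu$), we have $N(\mu)=2e(\mu_R)$, which yields exactly the claimed formula.

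I do not expect any serious obstacle; the only point that truly deserves explicit verification is that the proof of Proposition~\ref{prop:wcp} is oblivious to the concrete weights assigned to $R$, so the substitution $\weight(R)\mapsto v\cdot\weight(R)$ is legitimate as a formal manipulation in the indeterminate $v$.
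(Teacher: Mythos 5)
Your proposal is correct and takes essentially the same route the paper intends: the paper states this result as an immediate corollary of Proposition~\ref{prop:wcp} and Definition~\ref{def:wcp-extended} without writing out a proof, and your reduction --- absorbing $v$ into the weight of $R$, noting that the proof of Proposition~\ref{prop:wcp} is insensitive to the weights of $R$, and identifying the number of positive ground $R$-literals with $2e(\mu_R)$ via symmetry --- is exactly the omitted bookkeeping.
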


With the above corollary, we can show that the tree and forest axioms are domain-liftable.
\begin{theorem} \label{exmp:tree}
  The fragment of \Ctwo{} with cardinality constraints and a single tree axiom is domain-liftable.
\end{theorem}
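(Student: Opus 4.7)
The plan is to characterize a tree on $n$ labeled vertices as a graph that is simultaneously \emph{connected} and has exactly $n-1$ \emph{edges}, and then use the extended \wcp{} from Definition~\ref{def:wcp-extended} (which tracks both $cc(\mu_R)$ and $e(\mu_R)$) to read off the WFOMC of $\sentence \land tree(R)$ from a single coefficient of a polynomial in two variables.

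First I would apply \Cref{corol:wcp} to the sentence $\symsentence = \sentence \land \forall x\forall y (R(x,y) \to R(y,x)) \land \forall x \lnot R(x,x)$, obtaining
\begin{equation*}
f_n(u, v; \symsentence, \weight, \negweight, R) = \sum_{\mu \in \fomodels{\symsentence}{n}} W(\mu, \weight, \negweight)\, (u+1)^{cc(\mu_R)}\, v^{2e(\mu_R)}.
\end{equation*}
Shifting $u \mapsto u-1$ turns $(u+1)^{cc(\mu_R)}$ into $u^{cc(\mu_R)}$, so extracting the coefficient of $u^1$ restricts the sum to models where $G(\mu_R)$ is connected, and then extracting the coefficient of $v^{2(n-1)}$ restricts further to models with exactly $n-1$ edges. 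Since a simple undirected graph on $n$ vertices is a tree exactly when it is connected and has $n-1$ edges, this yields
\begin{equation*}
\wfomc(\sentence \land tree(R), n, \weight, \negweight) = [u^1 v^{2(n-1)}]\, f_n(u-1, v; \symsentence, \weight, \negweight, R).
\end{equation*}

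Next I would show that the relevant coefficient can be computed in polynomial time in $n$. Viewing $f_n(u,v)$ as a bivariate polynomial, Proposition~\ref{prop:wcp-degree} gives degree at most $n$ in $u$, and the observation that any simple undirected graph on $n$ vertices has at most $\binom{n}{2}$ edges gives degree at most $n(n-1)$ in $v$. For each of $n(n-1)+1$ integer values of $v$, the parameter $v$ simply rescales the weight $\weight(R)$, so $f_n(u, v; \symsentence, \weight, \negweight, R)$ becomes an ordinary (univariate) \wcp{} under the reweighted functions $(\weight_{R,v}, \negweight)$, and can be computed in polynomial time by \Cref{wcp_dp} via \Cref{prop:fast-wcp}. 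Shifting the argument and extracting $[u^1]$ from each resulting polynomial yields $n(n-1)+1$ evaluations of the univariate polynomial $h(v) := [u^1] f_n(u-1, v; \symsentence, \weight, \negweight, R)$, from which $h$ can be recovered by Lagrange interpolation and $[v^{2(n-1)}] h$ read off.

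The main conceptual obstacle is the need to track two combinatorial statistics of $G(\mu_R)$ simultaneously, which is precisely what motivated the extended \wcp{}; once this bivariate refinement is in place, the tree axiom reduces cleanly to coefficient extraction. The main technical obstacle is verifying that enlarging the DP to handle the extra variable $v$ does not break polynomial data complexity, which follows because $v$ only modifies the weighting function without changing the 1-type structure exploited by \Cref{wcp_dp}, so the overall running time is a polynomial-in-$n$ number of polynomial-in-$n$ invocations of the DP, together with a polynomial-time bivariate interpolation step.
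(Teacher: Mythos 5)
Your proposal is correct and follows essentially the same route as the paper: characterize a tree as a connected graph with $n-1$ edges and read off $\wfomc(\sentence \land tree(R), n, \weight, \negweight) = [uv^{2(n-1)}] f_n(u-1,v;\symsentence,\weight,\negweight,R)$ via \Cref{corol:wcp}. Your additional paragraph justifying that the bivariate extended \wcp{} remains computable in polynomial time (by treating $v$ as a rescaling of $\weight(R)$ and interpolating over $O(n^2)$ evaluations) is a worthwhile elaboration of a step the paper leaves implicit, but it does not change the underlying argument.
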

\begin{proof}
  Consider a \Ctwo{} sentence $\sentence$ with cardinality constraints, a domain of size $n$, weighting functions $\weight, \negweight$ and a distinguished binary predicate $R$.
  By definition, the tree axiom requires that $G(R)$ should be connected and the number of edges in $G(R)$ should be $n-1$.
  By Corollary \ref{corol:wcp}, we have
  \begin{align*}
    \wfomc(\sentence\land &tree(R), n, \weight, \negweight)
    = [uv^{2(n-1)}] \hat f_{n}(u-1, v; \symsentence, \weight, \negweight, R).
  \end{align*}
\end{proof}

\begin{theorem} \label{exmp:forest}
  The fragment of \Ctwo{} with cardinality constraints and a single forest axiom is domain-liftable.
\end{theorem}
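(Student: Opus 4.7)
The plan is to follow the template of the tree-axiom proof (Theorem \ref{exmp:tree}), generalizing the coefficient extraction to accommodate forests with any number of components. The key combinatorial fact I rely on is that an undirected graph on $n$ vertices is a forest if and only if its number of edges $e$ and its number of connected components $cc$ satisfy $e + cc = n$: each connected component of a forest is a tree, and a tree on $m$ vertices has exactly $m - 1$ edges, so summing over components gives $e = n - cc$. Conversely any graph with $e > n - cc$ must contain a cycle.

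First I would append the symmetry and irreflexivity constraints to $\sentence$ to obtain $\symsentence$, so that $G(\mu_R)$ is always a simple undirected graph. By Corollary \ref{corol:wcp}, the extended \wcp{} satisfies
\begin{equation*}
f_n(u-1, v; \symsentence, w, \overline w, R) = \sum_{\mu \in \fomodels{\symsentence}{n}} W(\mu, w, \overline w) \cdot u^{cc(\mu_R)} \cdot v^{2 e(\mu_R)}.
\end{equation*}
Isolating the forest interpretations of $R$ therefore reduces to summing the monomials $u^k v^{2(n-k)}$ across all admissible component counts $k$:
\begin{equation*}
\wfomc(\sentence \land forest(R), n, w, \overline w) = \sum_{k=1}^{n} [u^k v^{2(n-k)}] \, f_n(u-1, v; \symsentence, w, \overline w, R).
\end{equation*}

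For the tractability part, I would observe that $f_n(u-1, v; \symsentence, w, \overline w, R)$ is a bivariate polynomial of degree at most $n$ in $u$ (Proposition \ref{prop:wcp-degree}) and degree at most $n(n-1)$ in $v$ (since $R$, being symmetric, contributes at most $n(n-1)$ true ground atoms). Hence it is determined by its values on a grid of $O(n^3)$ evaluation points. Each such evaluation amounts to a single \wcp{} evaluation at an integer $u$ under the weighting $w_{R,v}$ that rescales $w(R)$ by $v$, which Algorithm \ref{wcp_dp} computes in polynomial time by Proposition \ref{prop:fast-wcp}. Standard bivariate Lagrange interpolation followed by coefficient extraction then produces the desired WFOMC in polynomial time in $n$. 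The only subtle point—and what I expect to be the main obstacle to write down carefully—is verifying that the DP underlying Proposition \ref{prop:fast-wcp} is oblivious to the particular numerical value of $w(R)$; this follows from the fact that $v$ enters the recurrence only through the 1-type coefficients $r_{i,j}$, $r^{\not -}_{i,j}$, $w_{in}$, and $w_{cross}$, leaving the structure of the DP unchanged.
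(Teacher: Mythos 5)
Your proposal is correct and follows essentially the same route as the paper: both rest on the characterization that $G(\mu_R)$ is a forest iff $e(\mu_R) = n - cc(\mu_R)$, and both extract $\sum_{k\in[n]} [u^k v^{2(n-k)}] f_n(u-1,v;\symsentence,\weight,\negweight,R)$ from the extended \wcp{} via Corollary \ref{corol:wcp}. Your additional remarks on the degree bound in $v$ and on the DP being insensitive to the numerical value of $\weight(R)$ simply spell out what the paper delegates to Proposition \ref{prop:fast-wcp} and Definition \ref{def:wcp-extended}.
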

\begin{proof}
  Consider a \Ctwo{} sentence $\sentence$ with cardinality constraints, a domain of size $n$, weighting functions $\weight, \negweight$ and a distinguished binary predicate $R$.
  By definition, the forest axiom requires the number of edges in $G(R)$ to be equal to the difference between the number of vertices in $G(R)$ and the number of connected components in $G(R)$.
  By Corollary \ref{corol:wcp}, we have
  \begin{align*}
    \wfomc(&\sentence \land forest(R), n, \weight, \negweight)
    = \sum_{i\in[n]} [u^iv^{2(n-i)}] \hat f_{n}(u-1, v; \symsentence, \weight, \negweight, R).
  \end{align*}
\end{proof}

\begin{remark}
  \label{remark:forest}
The tree axiom and the forest axiom have already been shown to be domain-liftable in \cite{WFOMC-tree-axioms} and \cite{WFOMC-axioms} respectively. Nevertheless, our method provides 
a more flexible command of the graph structures. For example, it is easy to add further constraints on the number of trees in the forest by our method (we will discuss this in \Cref{sec:axioms-combination}).
\end{remark}




The strong expressive power of \Ctwo{} fragment with cardinality constraints and the axioms above allows us to represent many interesting combinatorial problems as well as Statistical Relational Learning (SRL) problems.

\begin{example}\label{ex:permk}
  Consider a permutation $p_1, \dots, p_n$ over $[n]$, where $p_i$ is the position of $i$ in the permutation.
  The permutation can be regarded as a directed graph with several disjoint cycles if we add an edge from $i$ to $p_i$ for each $i \in [n]$.
  Counting the number of permutations with exactly $k$ cycles is a well-known combinatorial problem whose solution is given by the unsigned \emph{Stirling numbers of the first kind}.
  This problem can be readily solved by the $k$-connected-component axiom, where we allow the binary relation $R$ to be asymmetric.
  More precisely, we add the \emph{$k$-cycle-permutation} axiom to the binary predicate $R$, denoted by $perm_k(R)$, which requires $G(R)$ to be the graph corresponding to a permutation with exactly $k$ cycles. Given a \Ctwo{} sentence $\sentence$ with cardinality constraints, a domain of size $n$, weighting functions $\weight, \negweight$ and a distinguished binary predicate $R$, $\wfomc(\sentence \land perm_k(R), n, \weight, \negweight)$ is domain-liftable.
  In fact, let
  \begin{equation*}
    \sentence_p = \sentence \land \forall x \exists_{=1} y \ R(x,y) \land \forall x \exists_{=1} y \ R(y,x).
  \end{equation*}
  
  Then we have
  \begin{equation*}
    \wfomc(\sentence \land perm_k(R), n, \weight, \negweight) = [u^k] f_{n}(u-1; \sentence_p, \weight, \negweight, R).
  \end{equation*}
\end{example}

In the field of SRL, we can express \emph{hard} constraints on some relation by adding axioms to the relation, e.g., expressing a company hierarchy with the tree axiom, and player matching in a game with the bipartite axiom.
In the following example, we show that more complex \emph{soft} constraints can also be expressed with \wcp{}.

\begin{example}\label{exmp:expweight}
  In a social network, one may want to model the compactness in terms of a symmetric relation $friends$ by the number of connected components in $G(friends)$.
  In this case, a real number $d$ might be used to represent the degree of compactness: the larger $d$ means that the network is more compact, i.e., the number of connected components in $G(friends)$ is smaller.
  This can be achieved by defining the weight of a possible world $\omega$ as $W(\omega, \weight, \negweight)\cdot \exp(-d\cdot cc(\omega_{friends}))$.
  Then the reduced WFOMC of the inference problem is equal to $f_n(\exp(-d) - 1; \sentence, \weight, \negweight, friends)$, where $\sentence$ and $(\weight, \negweight)$ are the sentence and weighting functions modeling other aspects of the social network.
\end{example}



\subsubsection{From \wcp{} to Tutte Polynomial}

We can further use the extended \wcp{} in Definition \ref{def:wcp-extended} to obtain the well-known Tutte polynomial of undirected graphs that can be encoded by a set of ground unary literals, cardinality constraints and a fixed \Ctwo{} sentence.
Note that the size of the \Ctwo{} sentence should be fixed to a constant independent of the graph size (i.e., the number of vertices) while the parameters for cardinality constraints and the size of ground unary literals are unbounded.
The weighting functions are all set to $\One$ (i.e., a function always returns $1$) in the rest of this section.

\begin{theorem}\label{prop:tutte_wcp}
  Let $S$ be an infinite set of undirected graphs. Suppose that there is a fixed \Ctwo{} sentence $\sentence$ with a distinguished symmetric binary relation $E$ such that for each graph $G \in S$ of size $n$, there is a \Ctwo{} sentence $\sentence_G$ obtained by conjuncting $\sentence$ with cardinality constraints and ground unary literals, and $G$ is isomorphic to $G(\mu_E)$ for each model $\mu \in \fomodels{\sentence_G}{n}$.\footnote{For instance, \cite{C2graphs} gave the characterization of such graphs when the sentence is restricted to be in \Ctwo{}. Note that in their work the counting parameters and the size of the sentence are unbounded while in our case they should be constant to apply the fast calculation of \wcp{}.} Then the Tutte polynomial of any $G \in S$ of size $n$ can be computed as follows in time polynomial in $n$:
  \begin{equation*}
    T(x,y) = \frac{ \hat f_n \left( (x-1)(y-1)-1, \sqrt{y-1}; \sentence_T(G), \One, \One, R \right) }{ (x-1)^{cc(\mu_E)}(y-1)^n \wfomc(\sentence_G, n, \One, \One) },
  \end{equation*}
  where $\sentence_T(G) = \sentence_G \land \forall x\forall y \ \left( ( R(x, y) \to E(x, y) ) \land ( R(x, y) \to R(y, x) ) \right)$.
\end{theorem}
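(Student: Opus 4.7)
The plan is to unpack the semantics of $\sentence_T(G)$, apply Corollary \ref{corol:wcp} to rewrite the left-hand side as a sum indexed by pairs (model, edge subset), and match the result to the definition of the Tutte polynomial after the indicated substitution.

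First I would observe that because $\sentence_T(G)$ forces $R(x,y) \to E(x,y)$ and $R(x,y) \to R(y,x)$, a model $\mu \in \fomodels{\sentence_T(G)}{n}$ is equivalent to the data of a model $\mu' \in \fomodels{\sentence_G}{n}$ together with an arbitrary subset $A$ of the (undirected) edges of $G(\mu'_E)$, with $A$ interpreting the symmetric predicate $R$. With all weights set to $\One$, Corollary \ref{corol:wcp} then gives
\begin{equation*}
f_n(u,v;\sentence_T(G),\One,\One,R) = \sum_{\mu' \in \fomodels{\sentence_G}{n}}\ \sum_{A \subseteq E(G(\mu'_E))} (u+1)^{cc(A)} v^{2|A|}.
\end{equation*}
By the hypothesis, every $G(\mu'_E)$ is isomorphic to $G$, so the inner sum is independent of $\mu'$ and the right-hand side factors as $\wfomc(\sentence_G,n,\One,\One) \cdot \sum_{A \subseteq E(G)} (u+1)^{cc(A)} v^{2|A|}$.

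Next I would substitute $u+1 = (x-1)(y-1)$ and $v^2 = y-1$. The inner sum becomes $\sum_{A \subseteq E(G)} (x-1)^{cc(A)}(y-1)^{cc(A)+|A|}$, and pulling out $(x-1)^{cc(E)}(y-1)^n$ (here $n = |V(G)|$ and $cc(E) = cc(\mu_E)$ is well-defined as it equals $cc(G)$ for every $\mu \in \fomodels{\sentence_G}{n}$ by the isomorphism assumption) recovers exactly $\sum_A (x-1)^{cc(A)-cc(E)}(y-1)^{cc(A)+|A|-n} = T_G(x,y)$. Solving for $T_G(x,y)$ yields the claimed identity.

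For the polynomial-time claim, by Proposition \ref{prop:fast-wcp} together with the observation that the extended \wcp{} in Definition \ref{def:wcp-extended} only modifies the weight of $R$ by a formal variable $v$ (so the dynamic programming of \Cref{wcp_dp} extends to it by interpolating in $v$ as well, with degree bounded by $n(n-1)$), both the numerator and the factor $\wfomc(\sentence_G,n,\One,\One)$ can be computed in polynomial time in $n$. The main subtlety I expect is justifying the substitution $v = \sqrt{y-1}$: one has to note that the extended \wcp{} contains only even powers of $v$, which follows immediately from the symmetry of $R$ (each undirected edge contributes $v^2$), so the resulting expression is a bona fide polynomial in $y$ rather than an expression with a square root, and correspondingly the numerator is divisible by $(x-1)^{cc(E)}(y-1)^n$ as required for the stated formula to be well-posed.
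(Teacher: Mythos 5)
Your proposal is correct and follows essentially the same route as the paper's proof: apply Corollary \ref{corol:wcp} to $\sentence_T(G)$, factor the sum as $\wfomc(\sententence_G,n,\One,\One)$ times a sum over edge subsets of $G$, and match it to the Tutte polynomial under the substitution $u+1=(x-1)(y-1)$, $v^2=y-1$, with tractability from Proposition \ref{prop:fast-wcp}. Your additional remarks --- the explicit bijection between models of $\sentence_T(G)$ and pairs (model of $\sentence_G$, edge subset), and the observation that only even powers of $v$ occur so the $\sqrt{y-1}$ substitution is well-posed --- are details the paper leaves implicit, not a different argument.

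(Note: the reference in the first sentence should read $\wfomc(\sentence_G,n,\One,\One)$.)
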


\begin{proof}
  Denote the edges of $G(\mu_E)$ by $\mathcal{E}$.
  By Corollary \ref{corol:wcp}, we have
  \begin{align*}
    & \hat f_n \left( (x-1)(y-1)-1, \sqrt{y-1}; \sentence_T(G), \One, \One, R \right) \\
    = & \sum_{\mu' \in \fomodels{\sentence_T(G)}{n}} (x-1)^{cc(\mu'_R)} (y-1)^{cc(\mu'_R) + e(\mu'_R)} \\
    = & \wfomc(\sentence_G, n, \One, \One) \cdot \sum_{A\subseteq \mathcal{E}} (x-1)^{cc(A)} (y-1)^{cc(A) + |A|} \\
    = & \wfomc(\sentence_G, n, \One, \One) \cdot (x-1)^{cc(\mu_E)}(y-1)^n \cdot T_G(x,y),
  \end{align*}
  where $T_G(x,y)$ is the Tutte polynomial of $G$ defined in \Cref{sec:graphpolys}.

  By \Cref{corol:wcp}, computing $f_n \left( (x-1)(y-1)-1, \sqrt{y-1}; \sentence_T(G), \One, \One, R \right)$ is in polynomial time in $n$. Since the cardinality of $S$ is infinite, the size of $\sentence$ is independent of the size of $G$, thus $\wfomc(\sentence_G, n, \One, \One)$ can be computed in time polynomial in $n$ by the algorithm described in \Cref{sub:data-complexity} as well. Therefore, we can obtain $T(x,y)$ in time polynomial in $n$.
\end{proof}


\begin{example} \label{exmp:complete-graph-tutte-poly}
We can get the Tutte polynomial for a simple undirected complete graph of any size $n$. Let $S$ be the set of all simple undirected complete graphs and define
$$
\begin{aligned}
\sentence &= \forall x \lnot E(x,x), \\
\sentence_{G} &= \sentence \land (|E| = n(n-1))
\end{aligned}
$$
for each $G \in S$ of size $n$.

Then $\sentence_{G}$ has a unique model $\mu$ over the domain of size $n$ such that $G(\mu_E)$ is identical to $G$.
Therefore,
$$
\frac{ \hat f_n \left( (x-1)(y-1)-1, \sqrt{y-1}; \sentence_T(G), \One, \One, R \right) }{ (x-1)(y-1)^n }
$$
is the Tutte polynomial of the complete graph of size $n$, which can be computed in polynomial time in $n$ by Theorem \ref{prop:tutte_wcp}.
\end{example}

\begin{example}
  Consider the Tutte polynomial of a block-structured graph defined in Example \ref{ex:block-structured-graph}.
  We first fix the number $k$ of blocks and the connection among blocks. Let $S$ be the set of block-structured graphs of $n$ vertices such that each graph has $k$ blocks and the fixed connection. Define similarly to Equation \eqref{eq:block-structured-graph} the following sentences:
  $$
  \begin{aligned}
    \sentence = & \bigwedge_{V_i, V_j\text{ are connected}} (Block_i(x) \land Block_j(y) \to E(x,y)) \\
    & \land \bigwedge_{V_i, V_j\text{ are not connected}} (Block_i(x) \land Block_j(y) \to \lnot E(x,y)), \\
    \sentence_{G} = & \sentence \land \bigwedge_{e \in V} \left( Block_{v(e)}(e) \land \bigwedge_{j\neq v(e)} \lnot Block_{j}(e) \right)
  \end{aligned}
  $$
  for each $G \in S$ of size $n$.
  Then, the Tutte polynomial of $G$ is
  $$
  \frac{ \hat f_n \left( (x-1)(y-1)-1, \sqrt{y-1}; \sentence_T(G), \One, \One, R \right) }{ (x-1)^{cc(G)}(y-1)^n }.
  $$

  By Theorem \ref{prop:tutte_wcp}, the Tutte polynomial of $G$ can be computed in polynomial time in $n$. Applying the same proof to any constant $k$ and any possible connection among blocks, we conclude that the Tutte polynomial of any block-structured graph can be computed in polynomial time in the size of the graph.
\end{example}

The above result answers positively the question in \cite{WFOMC-tree-axioms} whether calculating the Tutte polynomial of a block-structured graph is hard. However, we remark that this result cannot be extended to computing the Tutte polynomial or evaluating any point for general graphs since representing a general graph by a first-order sentence involves a list of ground binary literals, making the WFOMC task \class{\#P}-hard \cite{WFOMC-conditioning}. Therefore we are consistent with the hardness of computing the Tutte polynomial of general graphs \cite{tuttepoly-hardness}, while showing tractability for block-structured ones (with a constant number of blocks). 

\section{\scpfull{}s}

The polynomial \wcp{} is insensitive to the orientation of the edges of $G(R)$, where $R$ is the distinguished relation. Hence, with \wcp{} we can only define axioms that do not depend on the orientation of the edges of the graph $G(R)$. In this section, we define two new polynomials, \emph{\sscpfull{}} (\sscp{}) and \emph{\nscpfull{}} (\nscp{}), that use the orientation of the edges as well. Both polynomials are collectively referred to as \emph{\scpfull{}s} (\scp{}s). We show that these new polynomials can be used to enforce several axioms, some of which have not been known to be domain-liftable before.

\subsection{Definition}

The definitions of \scp{}s are similar to \wcp{} except that we need another sequence of unary predicates to characterize the \emph{strong} connectedness in $G(R)$.

\begin{definition}[\scpfull{}s]\label{def:scp}
Let $\Psi$ be a first-order logic sentence possibly with cardinality constraints and ground unary literals, $w$ and $\overline{w}$ be two weighting functions and $R$ be a distinguished binary relation. The $n$-th \emph{\sscpfull{} (\sscp{})} $g_n(u,v)$ of $\sentence$ for the relation $R$ is the bivariate polynomial which satisfies:
$$
g_n(u,v;\Psi, w, \overline{w}, R) = \wfomc(\Psi_{R,u,v}, n, \weight, \negweight)
$$
for all positive integers $u,v$, where $\Psi_{R,u,v}$ is defined as:
\begin{align*}
    \Psi_{R,u,v} = \sentence \land & \bigwedge_{i=1}^u \forall x \forall y \left( A_i(x) \wedge (R(x,y) \vee R(y,x)) \to A_i(y) \right) \\
    \wedge & \bigwedge_{i = 2}^u \forall x \left( A_i(x) \to A_{i-1}(x) \right)\\
    \wedge & \bigwedge_{i=1}^{v-1} \forall x \forall y \left( B_i(x) \wedge R(x,y) \to B_{i+1}(y) \right) \\
    \wedge & \forall x \forall y \left( R(x,y) \to \lnot B_v(x) \land B_1(y) \right) \\
    \wedge & \bigwedge_{i = 2}^v \forall x \left( B_i(x) \to B_{i-1}(x) \right)
\end{align*}
where $A_i$'s and $B_i$'s are fresh unary predicates and $w(A_i) = \overline{w}(A_i) = w(B_i) = \overline{w}(B_i) = 1$.

The $n$-th \emph{\nscpfull{} (\nscp{})} $\bar g_n(u,v)$ of $\sentence$ for the relation $R$ is the bivariate polynomial which satisfies:
$$
\bar g_n(u,v;\Psi, w, \overline{w}, R) = \wfomc(\bar \Psi_{R,u,v}, n, \weight, \negweight)
$$
for all positive integers $u,v$, where $\bar \Psi_{R,u,v}$ is defined as:
\begin{align*}
    \bar \Psi_{R,u,v} = \sentence \land & \bigwedge_{i=1}^u \forall x \forall y \left( A_i(x) \wedge (R(x,y) \vee R(y,x)) \to A_i(y) \right) \\
    \wedge & \bigwedge_{i = 2}^u \forall x \left( A_i(x) \to A_{i-1}(x) \right)\\
    \wedge & \bigwedge_{i=1}^v \forall x \forall y \left( B_i(x) \wedge R(x,y) \to B_i(y) \right) \\
    \wedge & \bigwedge_{i = 2}^v \forall x \left( B_i(x) \to B_{i-1}(x) \right)
\end{align*}
where $A_i$'s and $B_i$'s are fresh unary predicates and $w(A_i) = \overline{w}(A_i) = w(B_i) = \overline{w}(B_i) = 1$.
\end{definition}

The predicates $A_1, \cdots, A_u$ treat $R$ symmetrically and are defined similarly as those in \wcp{}. As discussed in Proposition \ref{prop:wcp}, we can imagine them as capturing the information about weakly connected components of $G(R)$. The predicates $B_1, \cdots, B_v$ capture the directionality of edges in $G(R)$, enabling further directed axioms on $R$. We give an interpretation of both \sscp{} and \nscp{} by the following proposition and present more properties and applications in \Cref{sec:scp-applications}.

\begin{proposition}\label{prop:scp}
For the sentence $\Psi$, the weighting functions $w$ and $\overline{w}$ and the binary relation $R$, it holds that
\begin{equation*}
\begin{aligned}
g_n(u,v; \Psi, w, \overline{w}, R) &= \sum_{\mu \in \fomodels{\Psi}{n}} W(\mu, \weight, \negweight) \cdot (u+1)^{cc(\mu_R)} \cdot \chi_{G(\mu_R)}(v+1), \\
\bar g_n(u,v; \Psi, w, \overline{w}, R) &= \sum_{\mu \in \fomodels{\Psi}{n}} W(\mu, \weight, \negweight) \cdot (u+1)^{cc(\mu_R)} \cdot \bar \chi_{G(\mu_R)}(v+1),
\end{aligned}
\end{equation*}
where $cc(\mu_R)$ is the number of weekly connected components of $G(\mu_R)$, and $\chi_{G(\mu_R)}$ and $\bar \chi_{G(\mu_R)}$ are the strict and non-strict directed chromatic polynomials of $G(\mu_R)$ respectively.
\end{proposition}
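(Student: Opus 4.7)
The plan is to follow the template of Proposition \ref{prop:wcp} but to handle both families of auxiliary predicates $A_1, \ldots, A_u$ and $B_1, \ldots, B_v$ independently. I would fix an arbitrary $\mu \in \fomodels{\Psi}{n}$ and count the number of extensions of $\mu$ to a model of $\Psi_{R,u,v}$ (respectively $\bar \Psi_{R,u,v}$). Since all auxiliary predicates have weight $1$, each such extension contributes exactly $W(\mu, \weight, \negweight)$ to the WFOMC, so the task reduces to counting extensions.

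For the $A_i$-block the argument of Proposition \ref{prop:wcp} applies verbatim: the downward implications $A_i(x) \to A_{i-1}(x)$ collapse the $A$-labels of each element $x$ to a threshold $j(x) \in \{0, 1, \ldots, u\}$, and the closure axioms $A_i(x) \land (R(x,y) \lor R(y,x)) \to A_i(y)$ force $j$ to be constant on every weakly connected component of $G(\mu_R)$. Hence the $A$-extensions contribute the factor $(u+1)^{cc(\mu_R)}$, exactly as in \wcp{}.

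The heart of the proof is the $B_i$-block. By $B_i(x) \to B_{i-1}(x)$, each element $x$ receives a threshold $c(x) \in \{0, 1, \ldots, v\}$ with $B_i(x)$ true iff $i \le c(x)$; I interpret $c$ as a coloring with palette $\{0, 1, \ldots, v\}$ of size $v+1$. For $g_n$ I would verify that the three $B$-axioms together are equivalent to requiring $c(x) < c(y)$ for every edge $R(x,y)$. The forward direction uses the edge-preservation axiom applied at $i = c(x)$, which yields $B_{c(x)+1}(y)$ and hence $c(y) \ge c(x) + 1$, together with the boundary axiom $R(x,y) \to \lnot B_v(x) \land B_1(y)$, which rules out the otherwise-vacuous corner cases $c(x) = 0$ (forcing $c(y) \ge 1$) and $c(x) = v$ (forbidding $c(x) = v$ so that $c(y) \le v$ remains feasible); the converse is a direct verification. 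By definition, the number of such strict colorings equals $\chi_{G(\mu_R)}(v+1)$. For $\bar g_n$ the simpler axiom $B_i(x) \land R(x,y) \to B_i(y)$ applied at $i = c(x)$ yields $c(y) \ge c(x)$, giving $\bar\chi_{G(\mu_R)}(v+1)$ extensions.

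Because the $A$- and $B$-extensions involve disjoint families of predicates and no axiom in $\Psi_{R,u,v}$ couples them, the two counts multiply. Summing $W(\mu, \weight, \negweight) \cdot (u+1)^{cc(\mu_R)} \cdot \chi_{G(\mu_R)}(v+1)$ (respectively the non-strict analogue) over $\mu \in \fomodels{\Psi}{n}$ gives the claimed identity at every positive integer pair $(u,v)$. To promote this pointwise equality to an equality of bivariate polynomials, I invoke Corollaries \ref{corol:dichromatic-degree} and \ref{corol:nonstrictdichromatic-degree}, which bound the degree of $\chi_{G(\mu_R)}$ and $\bar\chi_{G(\mu_R)}$ in $v$ by $n$; combined with the trivial degree-$n$ bound in $u$ from $cc(\mu_R) \le n$, the right-hand side is a polynomial in $(u,v)$ and the uniqueness of polynomial interpolation on $\mathbb{N}^2$ finishes the argument. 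The one place where care is required is the case analysis for the strict boundary axiom, since without it the edge-preservation axiom would be vacuous precisely when $c(x) \in \{0, v\}$; verifying that this axiom excludes exactly the offending assignments is the main obstacle, after which the rest is bookkeeping.
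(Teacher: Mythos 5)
Your proposal is correct and follows essentially the same route as the paper: decompose each model of $\Psi_{R,u,v}$ into a model $\mu$ of $\Psi$ plus independent choices of $A$-thresholds (constant on weakly connected components, giving $(u+1)^{cc(\mu_R)}$) and $B$-thresholds (giving a strict or non-strict coloring with palette $\{0,\dots,v\}$, i.e.\ $\chi_{G(\mu_R)}(v+1)$ or $\bar\chi_{G(\mu_R)}(v+1)$). Your explicit verification of the boundary axiom's role in the corner cases $c(x)\in\{0,v\}$, and the final degree-bound/interpolation step, are details the paper leaves as an ``observation,'' but the argument is the same.
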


\begin{proof}
We first prove the proposition for \sscp{}. Following the same argument in the proof of Proposition \ref{prop:wcp}, for any model $\mu'$ of $\sentence_{R,u,v}$, we can think of $\mu'$ as an extension of a model $\mu$ of $\sentence$ with the same weight, where the interpretation of $A_i$'s and $B_i$'s are determined by choosing labellings $i'\in[u]$ and $i''\in[v]$ for each domain element. Therefore, the polynomial $g_n(u,v; \Psi, w, \overline{w}, R)$ must be of the form $\sum_{\mu\in \fomodels{\Psi}{n}} W(\mu, \weight, \negweight) \cdot (u+1)^{cc(\mu_R)} \cdot q(v; \mu_R)$, where $q(v; \mu_R)$ is a term that depends on $v$ and $\mu_R$.
We can observe by Definition \ref{def:scp} that $q(v; \mu_R)$ is the number of vertex labellings of $G(\mu_R)$ by numbers $\{0,1,2,\cdots,v\}$ such that if there is an edge from vertex $x_1$ to vertex $x_2$, the label of $x_1$ is smaller than the label of $x_2$.
In other words, $q(v) = \chi_{G(\mu_R)}(v+1)$.

The same argument works for \nscp{} except that $q(v; \mu_R)$ is the number of vertex labellings of $G(\mu_R)$ by numbers $\{0,1,2,\cdots,v\}$ such that if there is an edge from vertex $x_1$ to vertex $x_2$, the label of $x_1$ is smaller than or equal to the label of $x_2$, which is the same meaning as $\bar \chi_{G(\mu_R)}(v+1)$.
\end{proof}


\begin{proposition}\label{prop:scp-degree}
$g_n(u,v;\Psi, w, \overline{w}, R)$ and $\bar g_n(u,v;\Psi, w, \overline{w}, R)$ always exist, and they are unique bivariate polynomials in which both variables have a degree at most $n$.
\end{proposition}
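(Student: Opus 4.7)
The plan is to derive Proposition~\ref{prop:scp-degree} as a direct corollary of Proposition~\ref{prop:scp} combined with the degree bounds for the directed chromatic polynomials established in Corollaries~\ref{corol:dichromatic-degree} and~\ref{corol:nonstrictdichromatic-degree}. Since Proposition~\ref{prop:scp} already provides the closed forms
\[
g_n(u,v) = \sum_{\mu \in \fomodels{\Psi}{n}} W(\mu, \weight, \negweight)\, (u+1)^{cc(\mu_R)}\, \chi_{G(\mu_R)}(v+1),
\]
\[
\bar g_n(u,v) = \sum_{\mu \in \fomodels{\Psi}{n}} W(\mu, \weight, \negweight)\, (u+1)^{cc(\mu_R)}\, \bar\chi_{G(\mu_R)}(v+1),
\]
as finite sums (finite because $\fomodels{\Psi}{n}$ is finite for any fixed $n$), it suffices to verify that each summand is a bivariate polynomial whose degree in each of $u$ and $v$ is at most $n$, and then observe that a finite linear combination of such polynomials preserves these degree bounds.

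For the $u$-variable, I would invoke the trivial bound $cc(\mu_R) \le n$: the number of weakly connected components of a graph on $n$ vertices cannot exceed $n$. Hence $(u+1)^{cc(\mu_R)}$ is a univariate polynomial in $u$ of degree at most $n$, independent of $v$. For the $v$-variable, Corollary~\ref{corol:dichromatic-degree} states that $\chi_D(x)$ has degree $n$ whenever $D$ has $n$ vertices, and Corollary~\ref{corol:nonstrictdichromatic-degree} gives the analogous statement for $\bar\chi_D(x)$. The substitution $x \mapsto v+1$ preserves degrees, so $\chi_{G(\mu_R)}(v+1)$ and $\bar\chi_{G(\mu_R)}(v+1)$ are polynomials in $v$ of degree at most $n$, independent of $u$.

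The product of a polynomial of degree at most $n$ in $u$ alone with a polynomial of degree at most $n$ in $v$ alone is a bivariate polynomial whose degree in each variable separately is at most $n$. Summing such products over the finitely many $\mu \in \fomodels{\Psi}{n}$ preserves the bound, which yields the claim. No substantive obstacle is anticipated, since all the ingredients are already in place; the only point requiring a passing remark is that a bivariate polynomial is uniquely determined by its values on the infinite grid of positive integer pairs, which legitimises identifying the closed form from Proposition~\ref{prop:scp} with the unique polynomials named in Definition~\ref{def:scp}.
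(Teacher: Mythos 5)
Your proposal is correct and follows essentially the same route as the paper's own proof: it combines the closed form from Proposition~\ref{prop:scp} with the bound $cc(\mu_R)\le n$ for the $u$-degree and Corollaries~\ref{corol:dichromatic-degree} and~\ref{corol:nonstrictdichromatic-degree} for the $v$-degree. The additional remark about the polynomial being determined by its values on the positive-integer grid is a sensible (if routine) clarification that the paper leaves implicit.
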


\begin{proof}
The existence of \sscp{} and \nscp{} follow immediately from Proposition \ref{prop:scp}. By Corollary \ref{corol:dichromatic-degree} and \ref{corol:nonstrictdichromatic-degree}, $\chi_{G(\mu_R)}(v)$ and $\bar \chi_{G(\mu_R)}(v)$ are polynomials of $v$ which has degree at most $n$. Since $cc(\mu_R) \le n$, the upper bounds of degree follow immediately from Proposition \ref{prop:scp}.
\end{proof}

\subsection{Fast Calculation} \label{sec:fastcalc-scp}

Similarly to \wcp{}, we can efficiently compute both \scp{}s for \Ctwo{} sentences with cardinality constraints.
We first elaborate the algorithm for \nscp{}, from which the algorithm for \sscp{} can be easily extended.

The algorithm, also based on DP, resembles the fast calculation of \wcp{} with the distinction that \wcp{} involves the 1-type structure of $u+1$ layers while here we are dealing with a $(u+1) \times (v+1)$ grid.
With the technique reducing WFOMC of \Ctwo{} with cardinality constriants to WFOMC of \FOtwo{} (refer to \Cref{sub:data-complexity}), in what follows, we assume that the input sentence is an \FOtwo{} sentence.

By Proposition \ref{prop:scp-degree}, $(n+1)^2$ point evaluations at $u=0, 1, \cdots, n$ and $v=0, 1, \cdots, n$ are sufficient to interpolate the polynomial (e.g., by 2D Lagrange Interpolation). Therefore, our task is to evaluate \nscp{} at these points.
Let $\Psi$ and $R$ be the \FOtwo{} sentence and the binary relation involved in the \nscp{}.
Consider the point evaluation of the \nscp{} at the point $(u, v)$.

Let $C_1(x), C_2(x), \cdots, C_L(x)$ be the 1-types of the sentence $\sentence$. We first define
$$
\begin{aligned}
C^A_{i'}(x) &= A_1(x) \wedge A_2(x) \wedge \dots A_{i'}(x) \wedge \lnot A_{i'+1}(x) \wedge \dots \wedge \lnot A_u(x)\\
C^B_{i''}(x) &= B_1(x) \wedge B_2(x) \wedge \dots B_{i''}(x) \wedge \lnot B_{i''+1}(x) \wedge \dots \wedge \lnot B_v(x),
\end{aligned}
$$
for $i' \in [u]$ and $i'' \in [v]$, and also
$$
\begin{aligned}
C^A_{0}(x) &= \lnot A_1(x) \wedge \lnot A_2(x) \wedge \dots \wedge \lnot A_u(x), \\
C^B_{0}(x) &= \lnot B_1(x) \wedge \lnot B_2(x) \wedge \dots \wedge \lnot B_v(x),
\end{aligned}
$$

Then the valid 1-types of $\sentence_{R,u,v}$ must have the form $C_i(x) \wedge C^A_{i'}(x) \wedge C^B_{i''}(x)$ and we index them by triples $(i,i',i'')$.
The following is a key observation for the fast calculation of $\wfomc(\Psi_{R,u,v}, n, \weight, \negweight)$.

\begin{observation}\label{observation:2}

Let $r_{(i,i',i''),(j,j',j'')}$ be the mutual 1-type coefficient (defined in \Cref{sub:data-complexity}) of $\sentence_{R,u,v}$.
It holds that
$$
\begin{aligned}
r_{(i,i',i''),(j,j',j'')} =
\begin{cases}
    r_{i,j} = \wmc(\psi_{i,j}(a,b), w, \overline{w}), & i' = j' \land i'' = j'', \\
    r^\to_{i,j} = \wmc(\psi_{i,j}(a,b) \wedge \lnot R(b,a), w, \overline{w}), & i' = j' \land i'' \neq j'', \\
    \begin{aligned}
        r^{\not-}_{i,j} = \wmc( & \psi_{i,j}(a,b) \wedge \lnot R(a,b) \\
        & \wedge \lnot R(b,a), w, \overline{w}),
    \end{aligned} & i' \neq j', \\
\end{cases}
\end{aligned}
$$
where $\psi_{i,j}(x,y)$ is the simplified sentence of $\sentence$ defined in \Cref{sub:data-complexity}.
\end{observation}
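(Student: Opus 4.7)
My plan is a case analysis on how the extra 1-type labels $(i',i'')$ and $(j',j'')$ compare. The driving observation is that the fresh unary predicates $A_k$ and $B_k$ are completely pinned by the extended 1-types at $a$ and $b$, so each instance of the additional axioms of $\bar\Psi_{R,u,v}$ either becomes vacuous or reduces to a constraint purely on the edge variables $R(a,b)$ and $R(b,a)$. Consequently, the mutual coefficient is the WMC of $\psi_{i,j}(a,b)$ conjoined with whichever of these two literals are forced to be false, and I just have to identify those forced values case by case.

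I would first handle the diagonal case $i'=j'$ and $i''=j''$: since $a$ and $b$ agree on every $A_k$ and every $B_k$, every instance of the $A$-axiom and the $B$-axiom holds vacuously, imposing no constraint on the edge variables, so the coefficient coincides with the original $r_{i,j}$. Next, for $i'=j'$ but $i''\neq j''$ (WLOG $i''<j''$), the $A$-axioms remain vacuous, while the $B$-axiom $B_k(x)\wedge R(x,y)\to B_k(y)$ with $x=b$, $y=a$, $k=j''$ shows that $R(b,a)$ would force $B_{j''}(a)$ to be true, contradicting $i''<j''$; the instance with $x=a$, $y=b$ imposes nothing because $B_k(a)$ true requires $k\leq i''<j''$ and hence $B_k(b)$ is already true. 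Thus only $R(b,a)$ is forbidden and the coefficient becomes $r^{\to}_{i,j}$. For $i'\neq j'$ (WLOG $i'<j'$), the instance of the $A$-axiom with $k=j'$, $x=b$, $y=a$ shows that either edge variable being true would force $A_{j'}(a)$, contradicting $i'<j'$; hence both edge variables are forbidden and the coefficient is $r^{\not-}_{i,j}$.

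The main point requiring care is the ordering convention: the symmetric sub-cases $i''>j''$ or $i'>j'$ are obtained by swapping the roles of $a$ and $b$, and yield the same numerical value because $\psi_{i,j}(a,b)$ encodes both $\psi(a,b)$ and $\psi(b,a)$, so the WMC is invariant under that swap. Beyond this bookkeeping the argument is a routine inspection of which edge literals are compatible with the extra axioms, and I do not expect any other difficulty.
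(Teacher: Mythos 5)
Your case analysis of which edge literals are forced to be false is the right argument, and it is essentially the justification the paper itself gives (the paper offers no formal proof of this observation, only the discussion surrounding its Figure~2). The diagonal case, the case $i'=j'\wedge i''<j''$, and the case $i'\neq j'$ are all handled correctly, and you correctly read the statement as being about the \nscp{} sentence $\bar\Psi_{R,u,v}$ despite the notation.

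The one genuine flaw is your closing symmetry claim for the sub-case $i''>j''$. You assert that forbidding $R(a,b)$ instead of $R(b,a)$ ``yields the same numerical value because $\psi_{i,j}(a,b)$ encodes both $\psi(a,b)$ and $\psi(b,a)$.'' This is false in general for $i\neq j$: although $\psi_{i,j}(a,b)$ is the simplification of $\psi(a,b)\wedge\psi(b,a)$, it is not invariant under exchanging $a$ and $b$. Take $\psi(x,y)=R(x,y)\to S(x)$ with $S(x)\in C_i$ and $\lnot S(x)\in C_j$; then $\psi_{i,j}(a,b)$ simplifies to $\lnot R(b,a)$, so (with the model count ranging over both ground atoms $R(a,b)$ and $R(b,a)$) $\wmc(\psi_{i,j}(a,b)\wedge\lnot R(b,a),\weight,\negweight)=\negweight(R)\bigl(\weight(R)+\negweight(R)\bigr)$, whereas $\wmc(\psi_{i,j}(a,b)\wedge\lnot R(a,b),\weight,\negweight)=\negweight(R)^2$. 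What is actually true is that the case $i''>j''$ yields $r^{\to}_{j,i}$ rather than $r^{\to}_{i,j}$, since $\psi_{i,j}(a,b)$ coincides with $\psi_{j,i}(b,a)$ up to renaming the constants. So the middle line of the observation has to be read under the ordering convention $i''<j''$ --- consistent with the ``w.l.o.g.\ $i''<j''$'' in the caption of Figure~2 and with how the DP consumes these coefficients, where the first index always ranges over the cells with smaller $B$-level. (For $i'\neq j'$ your symmetry appeal is harmless, because there both literals are forced false and $r^{\not-}_{i,j}=r^{\not-}_{j,i}$.) The fix is to state the orientation convention explicitly rather than to invoke a symmetry that does not hold.
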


From the observation we can give an image of the structure of the new 1-types (as shown in \Cref{fig:scp}): we can imagine that $C^A_0(x), \cdots, C^A_u(x), C^B_0(x), \cdots, C^B_v(x)$ form a $(v+1) \times (u+1)$ grid, with each cell of the grid containing a replica of the original 1-types. If two domain elements $a, b$ fall in 1-types in the same cell, $R(a,b)$ can be either true or false. If $a, b$ are in the same column (i.e., in cells with the same $i'$) and w.l.o.g. $i'' < j''$, then only $R(a,b)$ is allowed to be true while $R(b,a)$ must be false. If $a, b$ are located in different columns (i.e., $i' \neq j'$), $R(a,b)$ and $R(b,a)$ must be false.

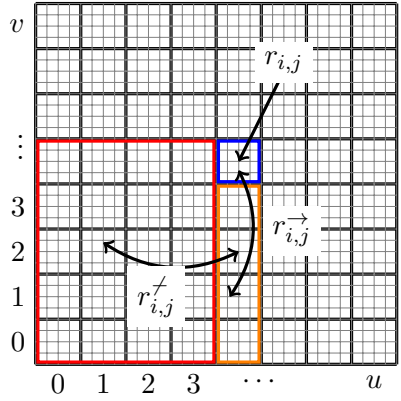
\begin{figure}
  \centering
  \begin{tikzpicture}[scale=0.6]
    \foreach \i in {0,...,\PICSIZE} {
        \draw [very thick,black] (\i,0) -- (\i,\PICSIZE);
        \draw [very thick,black] (0,\i) -- (\PICSIZE,\i);
    }
    \foreach \i in {0,...,3} {
        \draw [thin,black] node [below] at (\i+0.5,0) {$\i$};
        \draw [thin,black] node [left] at (0,\i+0.5) {$\i$};
    }
    \draw [very thin,gray,step=0.25] (0,0) grid (\PICSIZE,\PICSIZE);
    \draw [very thick,black] node [below] at (5,0) {$\cdots$};
    \draw [very thick,black] node [below] at (7.5,0) {$u$};
    \draw [very thick,black] node [left] at (0,5) {$\vdots$};
    \draw [very thick,black] node [left] at (0,7.5) {$v$};
    \draw [very thick,blue] (4.05,4.05) rectangle ++(0.9,0.9);
    \draw [very thick,orange] (4.05,0.05) rectangle ++(0.9,3.9);
    \draw [very thick,red] (0.05,0.05) rectangle ++(3.9,4.9);
    \draw [<->, very thick, black] (4.3, 1.5) to [bend right=30] (4.5, 4.3) node [right, fill=white] at (5, 2.95)  {$r^\to_{i,j}$};
    \draw [<->, very thick, black] (1.5, 2.7) to [bend right=30] (4.5, 2.5) node [right, fill=white] at (2, 1.5) {$r^{\not -}_{i,j}$};
    \draw [->, very thick, black] (5.5, 6.5) to (4.5, 4.5) node [fill=white] at (5.5, 6.7) {$r_{i,j}$};
  \end{tikzpicture}
  \caption{The 1-type structure of $\Psi_{R,u,v}$ which is a $(v+1) \times (u+1)$ grid. Each tiny gray box represents an original 1-type of $\Psi$. The bold gray cells (each containing $4 \times 4$ boxes) refer to the replicas of the original 1-types. The DP algorithm proceeds first along the vertical direction (from cell $(4,0)$ to cell $(4,4)$) and then along the horizontal direction (from columns $0, \cdots, 3$ to column $4$).} \label{fig:scp}
\end{figure}

By the above observation, we can efficiently compute the WFOMC of $\Psi_{R,u,v}$ by DP.
The DP proceeds in two directions in the grid: one is along the vertical direction of each $i'$-th column that calculates the weight of 1-types $(\cdot,i',0) \cdots, (\cdot, i', v)$, the other is along the horizontal direction that calculates the weight of 1-types $(\cdot, 0, \cdot) ,\cdots, (\cdot, u, \cdot)$. In both directions, we can perform a similar calculation as in the fast calculation of \wcp{}.
Denote by $c_{(i,i',i'')}$ the number of domain elements in the 1-type $(i,i',i'')$.

Let us first consider the calculation along the vertical direction.
Note that all columns in the grid are identical in terms of the weight of configurations, thus we only need to compute an arbitrary column $i'\in[u]$.
We define $h^|_{i', \hat{v},\config{c}}$ as the summed weight of 1-types $(\cdot,i',0) \cdots, (\cdot, i', \hat{v})$ where $\config{c} = (c_1, \cdots, c_p)$ such that for each $i\in[p]$, $\sum_{i''=0}^{\hat{v}} c_{(i,i',i'')} = c_i$.
Then we have
$$
h^|_{i', \hat{v}, \config{c}} = \sum_{\config{\bar c} + \config{c^*} = \config{c}} h^|_{i', \hat{v}-1, \config{\bar c}} \cdot \binom{|\config{c}|}{|\config{c}^*|} \cdot w^|_{in}(\config{c^*}) \cdot w^|_{cross}(\config{\bar c}, \config{c^*}),
$$
where
\begin{equation}\label{eq:win-nscp}
w^|_{in}(\config{c^*}) = \binom{|\config{c}^*|}{\config{c}^*} \prod_{i=1}^p W(C_i,\weight,\negweight)^{c^*_i} \cdot s_i^{\binom{c^*_i}{2}} \prod_{1 \le i < j \le p} \left(r_{i,j}\right)^{c^*_i \cdot c^*_j}
\end{equation}
is the weight within cell $(i', \hat{v})$, and
$$w^|_{cross}(\config{\bar c}, \config{c^*}) = \prod_{i=1}^p \prod_{j=1}^p \left(r_{i,j}^\to\right)^{\bar c_i \cdot c^*_j}$$ is the weight between cells $(i', 0), \cdots, (i', \hat{v}-1)$ and cell $(i', \hat{v})$.
The initial value of $h^|_{i', 0, \config{c}}$ is $w_{in}^|(\config{c})$ for each $|\config{c}|\le n$.

In the same spirit, we can calculate the weight of the 1-type configurations in the horizontal direction.
We define $h^-_{\hat{u}, \config{c}}$ as the summed weight of 1-types $(\cdot, 0, \cdot) ,\cdots, (\cdot, \hat{u}, \cdot)$ where $\config{c} = (c_1, \cdots, c_p)$ such that for each $i\in[p]$, $\sum_{i'=0}^{\hat{u}} \sum_{i''=0}^v c_{(i,i',i'')} = c_i$.
By DP, the value of $h^-_{\hat{u}, \config{c}}$ can be calculated by
$$
h^-_{\hat{u}, \config{c}} = \sum_{\config{\bar c} + \config{c^*} = \config{c}} h^-_{\hat{u}-1, \config{\bar c}} \cdot \binom{|\config{c}|}{|\config{c^*}|} \cdot h^|_{\hat{u}, v, \config{c^*}} \cdot w^-_{cross}(\config{\bar c}, \config{c^*}),
$$
where $h^|_{\hat{u}, v, \config{c^*}}$, already calculated in the vertical direction, presents the weight within the column $\hat{u}$, and
$$w^-_{cross}(\config{\bar c}, \config{c^*}) = \prod_{i=1}^p \prod_{j=1}^p \left(r_{i,j}^{\not-}\right)^{\bar c_i \cdot c^*_j}$$
is the weight between columns $0, \cdots, \hat{u}-1$ and column $\hat{u}$.
One can easily check that the initial value of $h^-_{0, \config{c}}$ is $h^|_{0, v, \config{c}}$ for each $|\config{c}|\le n$.
Finally, the WFOMC of $\Psi_{R,u,v}$ can be obtained by summation over all possible configurations $\config{c}$:
\begin{equation*}
\wfomc(\sentence_{R,u,v},n,\weight,\negweight) = \sum_{|\config{c}| = n} h^-_{u, \config{c}}.
\end{equation*}

The overall calculation is summarized in \Cref{scp_dp}, where \texttt{interpolate\_2d($\dots$)} takes $(n+1)^2$ points as input and returns a bivariate polynomial with degree $n$ of each variable.
Since $h^|_{i', v, \config{c}}$ are the same for each $i'$, we omit the subscript $i'$ in the algorithm.

\begin{algorithm}
\caption{Computing the \scp{}s} \label{scp_dp}
\KwIn{A sentence $\sentence$, weighting functions $\weight, \negweight$, the relation $R$, and an integer $n$}
\KwOut{$g_n(u, v; \sentence, \weight, \negweight, R)$ if $w^|_{in}(\config{c^*})$ in \Cref{ln:nscp-or-sscp} is given by Equation \eqref{eq:win-nscp}, or $\bar g_n(u, v; \sentence, \weight, \negweight, R)$ if it is given by Equation \eqref{eq:win-sscp}}
\SetKwFunction{intp}{interpolate\_2d}

$h^|_{0, \config{c}} \gets w_{in}^|(\config{c}) \textbf{ for each } |\config{c}| \le n$ \\
$evaluation\_triples \gets \{(0, 0, \sum_{|\config{c}|=n} h^|_{0, \config{c}})\}$ \\
\For{$v \gets 1$ \KwTo $n$} {
    \ForEach{\upshape $\config{c}$ \text{such that} $|\config{c}| \le n$} {
        $h^|_{v, \config{c}} \gets \sum_{\config{\bar c} + \config{c^*} = \config{c}} h^|_{v-1, \config{\bar c}} \cdot \binom{|\config{c}|}{|\config{c}^*|} \cdot w^|_{in}(\config{c^*}) \cdot w^|_{cross}(\config{\bar c}, \config{c^*})$ \label{ln:nscp-or-sscp} \Comment{\footnotesize We obtain \nscp{} if $w^|_{in}(\config{c^*})$ is given by Equation \ref{eq:win-nscp}, or \sscp{} if it is given by Equation \ref{eq:win-sscp}} \\
    }
    $h^-_{0, \config{c}} \gets h^|_{v, \config{c}} \textbf{ for each } |\config{c}| \le n$ \\
    $evaluation\_triples \gets evaluation\_triples \cup \{(0, v, \sum_{|\config{c}|=n} h^-_{0, \config{c}})\}$\\
    \For{$u \gets 1$ \KwTo $n$} {
        \ForEach{\upshape $\config{c}$ \text{such that} $|\config{c}| \le n$} {
            $h^-_{u, \config{c}} \gets \sum_{\config{\bar c} + \config{c^*} = \config{c}} h^-_{u-1, \config{\bar c}} \cdot \binom{|\config{c}|}{|\config{c^*}|} \cdot h^|_{u, \config{c^*}} \cdot w^-_{cross}(\config{\bar c}, \config{c^*})$
        }
        $evaluation\_triples \gets evaluation\_triples \cup \{(u, v, \sum_{|\config{c}|=n} h^-_{u, \config{c}})\}$
    }
}
$g \gets$ \intp{$evaluation\_triples$} \\
\Return $g$
\end{algorithm}

\begin{proposition}\label{prop:fast-nscp}
    Computing \nscp{} for a given \Ctwo{} sentence with cardinality constraints, a domain of size $n$, a distinguished binary relation, and weighting functions can be completed in polynomial time in $n$.
\end{proposition}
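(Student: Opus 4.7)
The plan is to parallel the proof of \Cref{prop:fast-wcp}, adapted to the two-dimensional grid structure of 1-types in $\bar{\Psi}_{R,u,v}$. First, I would invoke the reduction from \Cref{sub:data-complexity} to transform WFOMC of a \Ctwo{} sentence with cardinality constraints into WFOMC of a (possibly symbolically weighted) \FOtwo{} sentence, so the remainder of the argument may assume an \FOtwo{} input. By \Cref{prop:scp-degree}, $\bar{g}_n$ has degree at most $n$ in each variable, so $(n+1)^2$ point evaluations at the integer lattice $\{0, 1, \dots, n\}^2$ are sufficient to recover $\bar{g}_n$ via bivariate Lagrange interpolation, which itself runs in time polynomial in $n$.

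For each such evaluation $(u, v)$, I would appeal to \Cref{observation:2} to justify the correctness of the two-directional DP in \Cref{scp_dp}. Concretely, the inner (vertical) recursion aggregates weights across the cells of a single column, using the coefficient $r^{\to}_{i,j}$ for pairs of domain elements placed in different cells of the same column, and the outer (horizontal) recursion aggregates weights across columns using $r^{\not -}_{i,j}$ for pairs in different columns. Each update is just \Cref{eq:wfomc-fo2} lifted to the new triple-indexed 1-types. Because all columns contribute identically as a function of the column's configuration $\config{c}^*$, it is valid to compute the vertical DP once and reuse $h^|_{v,\config{c}^*}$ across every outer iteration, which is essential for polynomial runtime.

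For the complexity analysis, since the input \FOtwo{} sentence is fixed, the number $L$ of 1-types is a constant, so the number of configurations $\config{c}$ with $|\config{c}| \le n$ is $O(n^L)$. Each DP update is a convolution over such configurations, contributing another $O(n^L)$ factor, and each of $w^|_{in}$, $w^|_{cross}$, $w^-_{cross}$ is computable from the fixed 1-type coefficients $r_{i,j}$, $r^{\to}_{i,j}$, $r^{\not -}_{i,j}$, $s_i$, $W(C_i, w, \overline{w})$ in polynomial time. Summing over the $O(n^2)$ grid cells and then over $(n+1)^2$ interpolation points gives a total running time polynomial in $n$.

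The only subtle step is verifying that the DP factorization genuinely respects the independence structure dictated by \Cref{observation:2}: contributions from elements in the same cell combine exactly as in \Cref{eq:wfomc-fo2}; contributions from different cells of the same column combine multiplicatively via $r^{\to}_{i,j}$; and contributions across columns combine multiplicatively via $r^{\not -}_{i,j}$. Once this case analysis is checked against the three alternatives in \Cref{observation:2}, together with the degree bound from \Cref{prop:scp-degree} and the polynomial cost of bivariate interpolation, the proposition follows.
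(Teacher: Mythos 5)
Your proposal is correct and follows essentially the same route as the paper: the paper's own proof is a one-line deferral to the preceding discussion in the section (the reduction to \FOtwo{}, the degree bound from \Cref{prop:scp-degree}, the $(n+1)^2$-point bivariate interpolation, and the two-directional DP justified by \Cref{observation:2}), all of which you have spelled out explicitly. The only nitpick is that \Cref{scp_dp} produces all $(n+1)^2$ evaluations within a single pass of the nested loops rather than rerunning the DP per interpolation point, but this does not affect the polynomial-time conclusion.
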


\begin{proof}
It follows from the discussion in this section and the same argument as in the proof of Proposition \ref{prop:fast-wcp}.
\end{proof}

\begin{proposition}\label{prop:fast-sscp}
Computing \sscp{} for a given \Ctwo{} sentence with cardinality constraints, a domain of size $n$, a distinguished binary relation, and weighting functions can be completed in polynomial time in $n$.
\end{proposition}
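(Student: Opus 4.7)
The plan is to mirror the approach used for Proposition 4.4 (fast computation of \nscp{}), adjusting only for the stricter constraints in the definition of $g_n$. The valid 1-types of $\sentence_{R,u,v}$ still have the form $C_i(x)\wedge C^A_{i'}(x)\wedge C^B_{i''}(x)$, so the grid picture of \Cref{fig:scp} carries over unchanged. The only thing that changes is the mutual 1-type coefficient $r_{(i,i',i''),(j,j',j'')}$, because in \sscp{} the conjuncts $B_i(x) \wedge R(x,y) \to B_{i+1}(y)$ and $R(x,y) \to \lnot B_v(x) \land B_1(y)$, together with $B_i(x) \to B_{i-1}(x)$, force the $B$-label to strictly increase along any edge of $R$. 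I would therefore re-derive the analogue of Observation \ref{observation:2}: when two elements share the same cell ($i'=j'$ and $i''=j''$), both $R(a,b)$ and $R(b,a)$ must be false, giving coefficient $r^{\not-}_{i,j}$ rather than $r_{i,j}$; when they sit in the same column but different cells ($i'=j'$, $i''\neq j''$), the coefficient is still $r^\to_{i,j}$; and across columns ($i'\neq j'$), still $r^{\not-}_{i,j}$.

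Next, I would plug this modified within-cell weight into the DP from the \nscp{} algorithm. Concretely, I would define
\begin{equation}
w^|_{in}(\config{c}^*) = \binom{|\config{c}^*|}{\config{c}^*} \prod_{i=1}^L W(C_i,\weight,\negweight)^{c^*_i} \cdot (s^{\not-}_i)^{\binom{c^*_i}{2}} \prod_{1 \le i < j \le L} \left(r^{\not-}_{i,j}\right)^{c^*_i c^*_j},
\label{eq:win-sscp}
\end{equation}
where $s^{\not-}_i = \wmc(\psi_i(a,b) \land \psi_i(b,a) \land \lnot R(a,b) \land \lnot R(b,a), \weight, \negweight)$. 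The cross-cell weight $w^|_{cross}$ and the cross-column weight $w^-_{cross}$ are unchanged, because the constraints between elements in distinct cells are identical for \sscp{} and \nscp{}. This is precisely the alternative branch already indicated on Line~\ref{ln:nscp-or-sscp} of Algorithm \ref{scp_dp}.

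With $w^|_{in}$ replaced as above, running Algorithm \ref{scp_dp} produces $\wfomc(\sentence_{R,u,v}, n, \weight, \negweight)$ for any fixed $(u,v)$ in time polynomial in $n$, exactly as in the proof of \Cref{prop:fast-nscp}. Since $g_n$ has degree at most $n$ in each variable (\Cref{prop:scp-degree}), evaluating at the $(n+1)^2$ integer grid points $\{0,\dots,n\}^2$ and then interpolating recovers $g_n$; the reduction from \Ctwo{} with cardinality constraints to \FOtwo{} with symbolic weights from \Cref{sub:data-complexity} extends tractability to the full claimed fragment.

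The main obstacle is conceptually minor but needs to be stated carefully: verifying that the cumulative chain of implications forcing strictly increasing $B$-labels along edges does indeed collapse, at the level of 1-type pairs, to the three-case classification above, and in particular that the within-cell case really forbids \emph{both} directions of $R$. Once that is nailed down, the rest is a routine adaptation of the \nscp{} algorithm, so no new algorithmic machinery beyond the block DP is required.
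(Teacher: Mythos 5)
Your proposal follows the paper's proof essentially verbatim: both treat \sscp{} as the \nscp{} algorithm with the within-cell coefficients $r_{i,j}$ and $s_i$ replaced by $r^{\not-}_{i,j}$ and $s^{\not-}_i$, leaving $w^|_{cross}$ and $w^-_{cross}$ untouched, and then evaluate on the $(n+1)^2$ grid and interpolate. There is, however, one concrete omission that the paper handles explicitly: the \sscp{} sentence $\Psi_{R,u,v}$ also forces $\forall x\,\lnot R(x,x)$ implicitly (grounding $B_i(x) \wedge R(x,y) \to B_{i+1}(y)$ and $R(x,y) \to \lnot B_v(x) \wedge B_1(y)$ at $x=y$ makes any self-loop inconsistent with the $B$-chain). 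Consequently the products in your $w^|_{in}$ must be restricted to 1-types $C_i$ containing $\lnot R(x,x)$; as written, your formula ranges over all $L$ original 1-types of $\sentence$ and would wrongly include contributions from elements with $R(e,e)$ true whenever such a 1-type is valid for $\sentence$ itself (e.g.\ for $\sentence_{dg}$ without the $\forall x\,\lnot R(x,x)$ conjunct). Your three-case re-derivation of the mutual coefficients only covers pairs of \emph{distinct} elements, so it does not catch this reflexive constraint. Once that restriction on the 1-types is added, the argument coincides with the paper's.
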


\begin{proof}
There are two differences between computing \sscp{} and computing \nscp{}. The first is that the definition of \sscp{} implies $\forall x \lnot R(x,x)$ implicitly. The second is that $R(a,b)$ and $R(b,a)$ should be false if $a,b$ are in the 1-types in the same cell, i.e., $r_{(i,i',i''),(j,j',j'')} = r^{\not-}_{i,j}$ when $i'=j' \land i''=j''$ and $s_i$ should be replaced with
\begin{equation*}
s^{\not-}_i = \wmc(\psi_i(a,b) \land \psi_i(b,a) \land \lnot R(a,b) \land \lnot R(b,a), \weight, \negweight).
\end{equation*}

Hence, a simple adaptation to Equation \eqref{eq:win-nscp} that removes 1-types containing $R(x,x)$ and replaces $r_{i,j}$ and $s_i$ with $r^{\not-}_{i,j}$ and $s^{\not-}_i$:
\begin{equation}\label{eq:win-sscp}
  w^|_{in}(\config{c^*}) = \binom{|\config{c}^*|}{\config{c}^*} \prod_{\substack{i=1, \\ \lnot R(x,x) \in C_i}}^p W(C_i,\weight,\negweight)^{c^*_i} \cdot \left(s^{\not-}_i\right)^{\binom{c^*_i}{2}} \prod_{\substack{1 \le i < j \le p, \\ \lnot R(x,x) \in C_i, \lnot R(x,x) \in C_j}} \left(r^{\not-}_{i,j}\right)^{c^*_i \cdot c^*_j}
\end{equation}
yields an algorithm for computing \sscp{}, as stated in \Cref{scp_dp}.
\end{proof}

\subsection{Applications}\label{sec:scp-applications}

Now, we show how \scp{} can be used for axioms that are associated with the orientation of the edges in $G(R)$, and for computing the directed chromatic polynomial of certain graphs.

\noindentparagraph{Strong connectedness axiom} The \emph{strong connectedness} axiom to the binary predicate $R$, denoted by $SC(R)$, requires $G(R)$ to be a strongly connected directed graph.

\begin{theorem}\label{exmp:strconnected}
    The fragment of \Ctwo{} with cardinality constraints and a single strong connectedness axiom is domain-liftable.
    More concretely, for any \Ctwo{} sentence $\sentence$ with cardinality constraints, a domain of size $n$, weighting functions $\weight, \negweight$ and a distinguished binary predicate $R$, it holds
    $$
    \begin{aligned}
        \wfomc(\sentence& \land SC(R), n, \weight, \negweight)
        = -[u] \bar g_{n}(u-1, -2; \sentence, \weight, \negweight, R).
        \end{aligned}
    $$
\end{theorem}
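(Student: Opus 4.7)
The plan is to work backwards from the right-hand side using the two ingredients already available: the combinatorial interpretation of $\bar g_n$ given by \Cref{prop:scp}, and the negative-argument identity for $\bar\chi_D$ provided by \Cref{lemma:dichromatic-negative}. Substituting $u \mapsto u-1$ and $v \mapsto -2$ into \Cref{prop:scp} gives
\begin{equation*}
\bar g_n(u-1, -2; \sentence, \weight, \negweight, R) = \sum_{\mu \in \fomodels{\sentence}{n}} W(\mu,\weight,\negweight)\cdot u^{cc(\mu_R)}\cdot \bar\chi_{G(\mu_R)}(-1).
\end{equation*}
Extracting $[u]$ then restricts the sum to those models $\mu$ for which $G(\mu_R)$ is weakly connected (i.e., $cc(\mu_R)=1$), and leaves us with $\sum_{\mu\colon G(\mu_R)\text{ weakly connected}} W(\mu,\weight,\negweight)\cdot \bar\chi_{G(\mu_R)}(-1)$.

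The second step is to show that on weakly connected graphs $D$, the quantity $\bar\chi_D(-1)$ acts as a detector for strong connectedness with sign $-1$. By \Cref{lemma:dichromatic-negative}, $\bar\chi_D(-1) = (-1)^{|V(acyc(D))|}\chi_{acyc(D)}(1)$. The factor $\chi_{acyc(D)}(1)$ counts proper strict colorings of the condensation $acyc(D)$ with a single color, so it equals $1$ precisely when $acyc(D)$ has no edges, and $0$ otherwise. When $D$ is weakly connected, $acyc(D)$ is edge-free iff $acyc(D)$ is a single vertex iff $D$ is strongly connected; in that case $|V(acyc(D))| = 1$, giving $\bar\chi_D(-1) = -1$. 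For weakly connected but not strongly connected $D$, the condensation has at least two vertices and at least one edge, so $\bar\chi_D(-1) = 0$. Combining with the first step yields
\begin{equation*}
[u]\bar g_n(u-1, -2; \sentence, \weight, \negweight, R) = -\sum_{\mu\colon G(\mu_R)\text{ strongly connected}} W(\mu,\weight,\negweight) = -\wfomc(\sentence\land SC(R), n, \weight, \negweight),
\end{equation*}
which is exactly the claimed identity.

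Finally, for domain-liftability, I invoke \Cref{prop:fast-nscp}, which computes $\bar g_n$ in time polynomial in $n$ for \Ctwo{} with cardinality constraints; extracting the coefficient of $u^1$ after evaluating $v=-2$ is then a trivial linear-algebra step on a polynomial of degree at most $n$. The main conceptual obstacle is the second step—recognizing that evaluating a non-strict directed chromatic polynomial at $-1$ is a signed indicator of strong connectedness on weakly connected graphs—but once the appropriate form of the reciprocity lemma is in hand, it is immediate. No separate handling of cardinality constraints or ground unary literals is needed, since these are already absorbed into the polynomial-time computation of $\bar g_n$ via the reduction described in \Cref{sub:data-complexity}.
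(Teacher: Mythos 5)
Your proposal is correct and follows essentially the same route as the paper's proof: apply \Cref{prop:scp}, extract the coefficient of $u$ to isolate weakly connected $G(\mu_R)$, and use \Cref{lemma:dichromatic-negative} to show $\bar\chi_{G(\mu_R)}(-1)$ is $-1$ for strongly connected graphs and $0$ for weakly-but-not-strongly connected ones. If anything, you are slightly more careful than the paper in tracking the sign $(-1)^{|V(acyc(D))|}$ and in explicitly citing \Cref{prop:fast-nscp} for the liftability claim.
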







\begin{proof}
Recall from Proposition \ref{prop:scp} that $\bar g_n(u,v; \sentence, \weight, \negweight, R)$ equals to the sum of $W(\mu, \weight, \negweight) \cdot (u+1)^{cc(\mu_R)} \cdot \bar \chi_{G(\mu_R)}(v+1)$ over all $\mu \in \fomodels{\sentence}{n}$.
The term $[u]\bar g_{n}(u-1, v; \sentence, \weight, \negweight, R)$ only reserves the weight of models $\mu$ where $G(\mu_R)$ is weakly connected.

By Lemma \ref{lemma:dichromatic-negative} we know that $\bar \chi_{G(\mu_R)}(-1) = (-1) \cdot \chi_{acyc(G(\mu_R))}(1)$.
If $G(\mu_R)$ is strongly connected, $acyc(G(\mu_R))$ contains no edge, hence $\bar \chi_{G(\mu_R)}(-1) = -1$. If $G(\mu_R)$ is weakly connected but not strongly connected, $acyc(G(\mu_R))$ contains at least one edge so it is impossible to strictly color all vertices by one color. In this case, $\bar \chi_{G(\mu_R)}(-1) = 0$.

Therefore by Proposition \ref{prop:scp},
\begin{align*}
    -[u]\bar g_{n}(u-1, -2; \sentence, \weight, \negweight, R)
    =& -[u]\sum_{\mu \in \fomodels{\Psi}{n}} W(\mu, \weight, \negweight) \cdot u^{cc(\mu_R)} \cdot \bar \chi_{G(\mu_R)}(-1) \\
    =& -\sum_{\substack{\mu \in \fomodels{\sentence}{n}: \\ G(\mu_R)\text{ is strongly connected}}} -W(\mu, \weight, \negweight) \\
    =& \wfomc(\sentence \land SC(R), n, \weight, \negweight).
\end{align*}
\end{proof}

\begin{example}\label{exmp:sct}
A tournament is an orientation of an undirected complete graph. We can further require $G(R)$ to be strongly connected by adding the \emph{strongly connected tournament} axiom to the binary predicate $R$, denoted by $SCT(R)$. For any \Ctwo{} sentence $\sentence$ with cardinality constraints, a domain of size $n$, weighting functions $\weight, \negweight$ and a distinguished binary predicate $R$, the sentence
\begin{equation}\label{eq:tournament}
    \begin{aligned}
    \sentence_{TN} &= \sentence \land \forall x \left( \lnot R(x,x) \land Eq(x,x) \right) \land (|Eq| = n) \\
    &\land \forall x \forall y \big( \lnot Eq(x,y) \to ( R(x,y) \lor R(y,x) ) \land ( \lnot R(x,y) \lor \lnot R(y,x) ) \big)
    \end{aligned}
\end{equation}
restricts $G(R)$ to be a tournament. Then by \Cref{exmp:strconnected}, it holds that
\begin{align*}
    \wfomc(\sentence& \land SCT(R), n, \weight, \negweight)
    = -[u] \bar g_{n}(u-1, -2; \sentence_{TN}, \weight, \negweight, R).
\end{align*}

Therefore, the fragment of \Ctwo{} with cardinality constraints and a single strongly connected tournament axiom is domain-liftable.
\end{example}

\noindentparagraph{Acyclicity axiom} The \emph{acyclicity} axiom to the binary predicate $R$, denoted by $AC(R)$, requires that $G(R)$ does not have any directed cycle, or equivalently, $G(R)$ is a directed acyclic graph (DAG).

\begin{theorem}\label{exmp:acyclicity}
    The fragment of \Ctwo{} with cardinality constraints and a single acyclicity axiom is domain-liftable.
    More concretely, for any \Ctwo{} sentence $\sentence$ with cardinality constraints, a domain of size $n$, weighting functions $\weight, \negweight$ and a distinguished binary predicate $R$, it holds
    $$
    \begin{aligned}
        \wfomc(\sentence& \land AC(R), n, \weight, \negweight)
        = (-1)^n \cdot g_{n}(0, -2; \sentence, \weight, \negweight, R).
        \end{aligned}
    $$
\end{theorem}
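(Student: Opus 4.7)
The plan is to start from the interpretation of the Strict Strong Connectedness Polynomial given by Proposition \ref{prop:scp} and specialize it at the point $(u,v) = (0,-2)$. Substituting $u = 0$ kills the connected-components factor since $(0+1)^{cc(\mu_R)} = 1$ for every model $\mu$, leaving
\[
g_n(0,-2;\sentence,\weight,\negweight,R) = \sum_{\mu \in \fomodels{\sentence}{n}} W(\mu,\weight,\negweight)\cdot \chi_{G(\mu_R)}(-1).
\]
So everything reduces to evaluating the strict directed chromatic polynomial of $G(\mu_R)$ at $-1$, and summing weights over those models where this evaluation is nonzero.

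Next I would invoke Lemma \ref{lemma:dichromatic-negative}, which tells me $\chi_D(-1) = (-1)^n \bar\chi_D(1)$ when $D$ is acyclic, and $\chi_D(-1) = 0$ otherwise. The key observation to verify is that $\bar\chi_D(1) = 1$ for every directed graph $D$: with only one color available, the constraint that endpoints of every edge receive comparable colors is automatic, so the single all-same-color non-strict coloring is the unique one. Combining these facts, $\chi_{G(\mu_R)}(-1)$ equals $(-1)^n$ exactly when $G(\mu_R)$ is acyclic, and $0$ otherwise, so only models satisfying $AC(R)$ survive.

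Putting it together,
\[
g_n(0,-2;\sentence,\weight,\negweight,R) = (-1)^n \sum_{\substack{\mu\in\fomodels{\sentence}{n}\\ G(\mu_R)\text{ acyclic}}} W(\mu,\weight,\negweight) = (-1)^n \wfomc(\sentence\wedge AC(R), n, \weight, \negweight),
\]
and multiplying both sides by $(-1)^n$ yields the claimed identity. Domain-liftability then follows because Proposition \ref{prop:fast-sscp} computes $g_n$ in polynomial time in $n$ for \Ctwo{} with cardinality constraints, and the subsequent point evaluation and sign change are trivial.

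I do not expect any genuine obstacle here; the statement is essentially a corollary of Proposition \ref{prop:scp} combined with Stanley-type reciprocity (Lemma \ref{lemma:dichromatic-negative}). The only subtle point is the sanity check that $\bar\chi_D(1)=1$ holds for arbitrary $D$ (not just acyclic ones), which is what justifies dropping the $\bar\chi_{G(\mu_R)}(1)$ factor from the sum without any further case analysis.
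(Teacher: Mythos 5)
Your proposal is correct and follows essentially the same route as the paper's own proof: specialize Proposition \ref{prop:scp} at $(u,v)=(0,-2)$ and apply Lemma \ref{lemma:dichromatic-negative} to see that $\chi_{G(\mu_R)}(-1)$ equals $(-1)^n$ on acyclic interpretations and vanishes otherwise. Your explicit check that $\bar\chi_D(1)=1$ is a small useful addition that the paper leaves implicit (and is in fact only needed for acyclic $D$, since the lemma already forces the value to be $0$ in the cyclic case).
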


\begin{proof}
By Lemma \ref{lemma:dichromatic-negative}, for each $\mu \in \fomodels{\sentence}{n}$ we have
\begin{equation*}
\chi_{G(\mu_R)}(-1) = \begin{cases}
    (-1)^n \bar \chi_{G(\mu_R)}(1) = (-1)^n, & G(\mu_R) \text{ is acyclic,} \\
    0, & \text{otherwise.}
\end{cases}
\end{equation*}

Therefore by Proposition \ref{prop:scp},
\begin{align*}
     (-1)^n \cdot g_{n}(0, -2; \sentence, \weight, \negweight, R)
    =& (-1)^n \sum_{\mu \in \fomodels{\Psi}{n}} W(\mu, \weight, \negweight) \cdot \chi_{G(\mu_R)}(-1) \\
    =& (-1)^n \sum_{\substack{\mu \in \fomodels{\sentence}{n}: \\ G(\mu_R)\text{ is acyclic}}} W(\mu, \weight, \negweight) \cdot (-1)^n \\
    =& \wfomc(\sentence \land AC(R), n, \weight, \negweight).
\end{align*}
\end{proof}

\noindentparagraph{Directed tree and forest axioms} The \emph{directed tree} axiom to the binary predicate $R$ and the unary predicate $Root$, denoted by $DT(R, Root)$, requires that $G(R)$ is a DAG with a single source, and the single source interprets $Root$ as true.
The \emph{directed forest} axiom to the binary predicate $R$, denoted by $DF(R)$, requires that $G(R)$ is a DAG such that every weakly connected component is a directed tree.
In other words, $G(R)$ is a DAG such that every vertex has at most one incoming edge.

\begin{theorem}\label{thm:dt-df}
    The fragment of \Ctwo{} with cardinality constraints and a single directed tree axiom or a single directed forest axiom is domain-liftable.
\end{theorem}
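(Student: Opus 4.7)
The plan is to reduce each of $DT(R, Root)$ and $DF(R)$ to a $\Ctwo$ sentence with cardinality constraints together with the acyclicity axiom $AC(R)$, whose domain-liftability was already established in \Cref{exmp:acyclicity}. No new polynomial machinery is needed beyond what the excerpt already provides.

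For the directed tree, I would first give a first-order characterization of $G(R)$ being a directed tree rooted at the unique $Root$-element as the conjunction of four conditions: $G(R)$ is acyclic, exactly one element satisfies $Root$, the $Root$-element has in-degree $0$, and every non-$Root$ element has in-degree exactly $1$. The last three are expressible as
\begin{equation*}
\sentence_{DT} := \sentence \land (|Root| = 1) \land \forall x\,\bigl(Root(x) \to \forall y\,\lnot R(y,x)\bigr) \land \forall x\,\bigl(\lnot Root(x) \to \exists_{=1} y\, R(y,x)\bigr),
\end{equation*}
which is a $\Ctwo$ sentence with one cardinality constraint. Hence
\begin{equation*}
\wfomc(\sentence \land DT(R, Root), n, \weight, \negweight) = \wfomc(\sentence_{DT} \land AC(R), n, \weight, \negweight),
\end{equation*}
and the right-hand side is computable in time polynomial in $n$ by \Cref{exmp:acyclicity}. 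The combinatorial claim to verify is the backward direction: any such DAG has $n-1$ edges and must be weakly connected, since any further weakly connected component would contain its own source, contradicting the uniqueness of the root; so the underlying undirected graph is a tree and $G(R)$ is a directed tree.

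For the directed forest, I would similarly express the axiom as $\forall x\,\exists_{\le 1} y\, R(y,x) \land AC(R)$. The forward direction is immediate. For the converse, I would argue component-wise: in any weakly connected component of size $k$, the number of directed edges equals the sum of in-degrees and is therefore at most $k$; if it equals $k$, then every vertex has in-degree exactly $1$ and the component has no source, contradicting the DAG property, so there are at most $k-1$ edges; weak connectedness forces exactly $k-1$, making the component a directed tree rooted at its unique source. Appealing once more to \Cref{exmp:acyclicity},
\begin{equation*}
\wfomc(\sentence \land DF(R), n, \weight, \negweight) = \wfomc\bigl(\sentence \land \forall x\,\exists_{\le 1} y\, R(y,x) \land AC(R),\, n,\, \weight,\, \negweight\bigr),
\end{equation*}
finishing the proof.

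The only nontrivial steps are the two combinatorial characterizations above, and of the two the directed-forest one is slightly more delicate because we must argue about every weakly connected component rather than the whole graph; I expect this to be the main obstacle, but both arguments rest on the elementary facts that every DAG has a source and that a weakly connected graph on $k$ vertices with $k-1$ edges and no antiparallel pairs has an undirected tree as its underlying graph. Once the two equivalences are established, the theorem follows directly from \Cref{exmp:acyclicity} and the $\Ctwo$-with-cardinality-constraints machinery of \Cref{sub:data-complexity}.
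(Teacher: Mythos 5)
Your proposal is correct and follows essentially the same route as the paper: both reduce the directed tree and directed forest axioms to the acyclicity axiom of \Cref{exmp:acyclicity} via a \Ctwo{} encoding of the in-degree conditions. The only difference is that the paper outsources this reduction to Propositions 14 and 15 of \cite{WFOMC-axioms}, whereas you reconstruct the encoding and verify the combinatorial equivalences explicitly (and correctly).
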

\begin{proof}
    The proof directly follows from \cite[Proposition 14 and 15]{WFOMC-axioms} that reduce WFOMC of \Ctwo{} with cardinality constraints and a single directed tree axiom or a single directed forest axiom to WFOMC of \Ctwo{} with cardinality constraints and a single acyclicity axiom.
\end{proof}

\noindentparagraph{Linear order axiom} The \emph{linear order} axiom to the binary predicate $R$, denoted by $LO(R)$, requires that $R$ represents a linear order. That is to say, $G(R)$ is an acyclic tournament with a self-loop for each vertex.

\begin{theorem}\label{exmp:linearorder}
The fragment of \Ctwo{} with cardinality constraints and a single linear order axiom is domain-liftable.
\end{theorem}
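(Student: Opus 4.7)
The plan is to reduce the linear order axiom to the acyclicity axiom, which is already known to be domain-liftable by \Cref{exmp:acyclicity}. The key observation is that $R$ is a linear order iff it is reflexive and its restriction to off-diagonal pairs is an acyclic tournament; equivalently, $R$ is the reflexive closure of an acyclic tournament on the domain. So by introducing an auxiliary predicate for the strict part we can enforce $LO(R)$ as $AC$ applied to that predicate.

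First I would introduce a fresh binary predicate $S$, intended to represent the strict part of $R$, and a fresh binary predicate $Eq$ together with $\forall x\, Eq(x,x)$ and the cardinality constraint $|Eq|=n$, which forces $Eq$ to be interpreted as the identity (the same trick as in \Cref{exmp:sct}). I would then form the sentence
\begin{equation*}
\begin{aligned}
\sentence_{LO} = \sentence \land {}& \forall x\, (R(x,x) \land Eq(x,x)) \land (|Eq| = n) \\
\land {}& \forall x \forall y\, \bigl( R(x,y) \leftrightarrow (S(x,y) \lor Eq(x,y)) \bigr) \\
\land {}& \forall x\, \lnot S(x,x) \\
\land {}& \forall x \forall y\, \bigl(\lnot Eq(x,y) \to (S(x,y) \lor S(y,x)) \land (\lnot S(x,y) \lor \lnot S(y,x)) \bigr),
\end{aligned}
\end{equation*}
and set $w(S)=\overline{w}(S)=w(Eq)=\overline{w}(Eq)=1$. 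These constraints force $S$ to be an irreflexive tournament and $R$ to be the reflexive closure of $S$.

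Next I would observe that $R$ is a linear order iff $S$ is an \emph{acyclic} tournament, so conjoining $AC(S)$ gives exactly the linear order condition on $R$. Since $S$ and $Eq$ carry trivial weights and $Eq$ is pinned to the identity, the correspondence between models of $\sentence \land LO(R)$ and models of $\sentence_{LO} \land AC(S)$ is a weight-preserving bijection, yielding
\begin{equation*}
\wfomc(\sentence \land LO(R), n, w, \overline{w}) = \wfomc(\sentence_{LO} \land AC(S), n, w, \overline{w}).
\end{equation*}
Since $\sentence_{LO}$ is a \Ctwo{} sentence with cardinality constraints and all introduced subformulas use at most two variables, \Cref{exmp:acyclicity} applied with the distinguished relation $S$ establishes domain-liftability, and in fact gives the closed form $(-1)^n g_n(0,-2; \sentence_{LO}, w, \overline{w}, S)$.

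The main obstacle is making the weight bookkeeping exact: one must check that no linear-order model of the original sentence is overcounted or assigned the wrong weight when passing through the auxiliary predicates. The trivial weights on $S$ and $Eq$, the biconditional $R(x,y)\leftrightarrow S(x,y)\lor Eq(x,y)$, and the identity enforcement on $Eq$ are precisely what makes the bijection rigid and weight-preserving; every remaining step is routine.
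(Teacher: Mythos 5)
Your proposal is correct and follows essentially the same route as the paper: both introduce a fresh binary predicate for the strict (off-diagonal) part of $R$, force it to be an irreflexive tournament via an identity predicate $Eq$ with $|Eq|=n$, tie $R$ to its reflexive closure, and then invoke the acyclicity result of \Cref{exmp:acyclicity} on the auxiliary predicate with trivial weights. The only differences are cosmetic (your $S$ versus the paper's $R'$, and your full biconditional $R(x,y)\leftrightarrow S(x,y)\lor Eq(x,y)$ versus the paper's conditional linkage on $\lnot Eq(x,y)$).
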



\begin{proof}
Let $R'$ be a fresh binary relation with $\weight(R') = \negweight(R') = 1$ and define $\sentence'_{TN}$ as Equation \eqref{eq:tournament} but on relation $R'$. Define $\sentence_{LO}$ as
\begin{equation*}
\sentence_{LO} = \sentence'_{TN} \land (\forall x \ R(x,x)) \land (\forall x \forall y (\lnot Eq(x,y) \to (R(x,y) \leftrightarrow R'(x,y)))).
\end{equation*}

Then \Cref{exmp:acyclicity} yields the following equation:
\begin{equation*}
  (-1)^n \cdot g_{n}(0, -2; \sentence_{LO}, \weight, \negweight, R') = \wfomc(\sentence \land LO(R), n, \weight, \negweight).
\end{equation*}
\end{proof}

\begin{remark}
    \label{remark:k-acyclicity}
Though some of the axioms have been proven to be domain-liftable such as the acyclicity axiom in \cite{WFOMC-axioms} and the linear order axiom in \cite{WFOMC-linearorder-axiom}, by our approach we can further generalize these axioms to obtain more features such as restricting the number of weakly connected components of $G(R)$ of an acyclic graph. For example, if we require $G(R)$ to be the union of $k$ disjoint directed acyclic graphs, we can obtain the WFOMC by $(-1)^n \cdot [u^k] g_n(u-1,-2; \sentence, \weight, \negweight, R)$.
\end{remark}

\noindentparagraph{From \sscp{} and \nscp{} to Directed Chromatic Polynomials}
With the same idea of obtaining the Tutte polynomial for graphs encoded by a set of ground unary literals, cardinality constraints and a fixed \Ctwo{} sentence, we can recover the strict and non-strict directed chromatic polynomials from \sscp{} and \nscp{} respectively of such sentences encoding directed graphs.

\begin{corollary}[Corollary from \Cref{prop:tutte_wcp}]\label{prop:dichromatic_scp}
Let $S$ be an infinite set of directed graphs. Suppose that there is a fixed \Ctwo{} sentence $\sentence$ with a distinguished binary relation $E$ such that for each $D \in S$ of size $n$, there is a \Ctwo{} sentence $\sentence_D$ obtained by conjuncting $\sentence$ with cardinality constraints and ground unary literals, and $D$ is isomorphic to $G(\mu_E)$ for each model $\mu \in \fomodels{\sentence_D}{n}$. Then the strict and non-strict directed chromatic polynomials of any $D \in S$ of size $n$ can be computed as follows in time polynomial in $n$:
  \begin{equation*}
  \begin{aligned}
    \chi_D(x) &= \frac{ g_n \left(0, x; \sentence_D, \One, \One, E \right) }{ \wfomc(\sentence_D, n, \One, \One) }, \\
    \bar \chi_D(x) &= \frac{ \bar g_n \left(0, x; \sentence_D, \One, \One, E \right) }{ \wfomc(\sentence_D, n, \One, \One) },
  \end{aligned}
  \end{equation*}
where $\One$ is the weighting function that always returns 1.
\end{corollary}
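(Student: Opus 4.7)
The plan is to derive both identities as direct specializations of Proposition \ref{prop:scp}, paralleling the Tutte polynomial recovery in Theorem \ref{prop:tutte_wcp}. Starting from the model-sum formula
$$g_n(u,v;\sentence_D,\One,\One,E) = \sum_{\mu \in \fomodels{\sentence_D}{n}} W(\mu,\One,\One)\,(u+1)^{cc(\mu_E)}\,\chi_{G(\mu_E)}(v+1),$$
I would set $u=0$ so that $(u+1)^{cc(\mu_E)}=1$, removing the dependence on the number of weakly connected components. Since $\weight=\negweight=\One$, every weight $W(\mu,\One,\One)$ equals $1$; and by the hypothesis on $\sentence_D$, every $G(\mu_E)$ is isomorphic to $D$, so $\chi_{G(\mu_E)}=\chi_D$, because isomorphism of directed graphs preserves the strict directed chromatic polynomial (immediate from its definition in Section \ref{sec:graphpolys} as a count of strictly order-respecting colorings). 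Factoring $\chi_D$ out of the sum leaves
$$g_n(0,v;\sentence_D,\One,\One,E) = \chi_D(v+1)\,\cdot\,|\fomodels{\sentence_D}{n}| = \chi_D(v+1)\,\cdot\,\wfomc(\sentence_D,n,\One,\One),$$
and rearranging yields the claimed identity (after the argument substitution implicit in the statement). The argument for $\bar g_n$ and $\bar\chi_D$ is verbatim the same, using the non-strict half of Proposition \ref{prop:scp}.

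For the polynomial-time claim, I would invoke Propositions \ref{prop:fast-sscp} and \ref{prop:fast-nscp}: because $\sentence$ is a fixed \Ctwo{} sentence and $\sentence_D$ only appends cardinality constraints and ground unary literals (both allowed to depend on $n$), the DP algorithms of Section \ref{sec:fastcalc-scp} evaluate $g_n(0,\,\cdot\,;\dots)$ and $\bar g_n(0,\,\cdot\,;\dots)$ in time polynomial in $n$. The denominator $\wfomc(\sentence_D,n,\One,\One)$ is likewise computable in polynomial time via the standard reduction of \Ctwo{} with cardinality constraints to \FOtwo{} described in Section \ref{sub:data-complexity}. Dividing two polynomial-time-computable quantities is polynomial time, so both $\chi_D$ and $\bar\chi_D$ can be recovered in time polynomial in $n$.

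There is no substantive combinatorial obstacle here: the corollary is the directed analogue of Theorem \ref{prop:tutte_wcp}, and once one observes that the $(u+1)^{cc(\mu_E)}$ factor collapses at $u=0$ and that graph isomorphism preserves the directed chromatic polynomials, the argument reduces to a one-line specialization of Proposition \ref{prop:scp}. The only technical care needed is the bookkeeping of the argument shift $v \mapsto v+1$ appearing in Proposition \ref{prop:scp}, and the mild observation that the ``fixed \Ctwo{} sentence'' hypothesis is exactly what makes the fast-calculation algorithms of Section \ref{sec:fastcalc-scp} applicable uniformly as $D$ ranges over the infinite family $S$.
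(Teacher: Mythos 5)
Your derivation is correct and is essentially the route the paper intends: the corollary is stated without its own proof as the directed analogue of Theorem~\ref{prop:tutte_wcp}, and specializing Proposition~\ref{prop:scp} at $u=0$ with unit weights and the isomorphism hypothesis, then dividing by $\wfomc(\sentence_D, n, \One, \One)$, together with Propositions~\ref{prop:fast-nscp} and~\ref{prop:fast-sscp} for the complexity claim, is the whole argument. One point you should not leave implicit: Proposition~\ref{prop:scp} gives $g_n(0,v;\sentence_D,\One,\One,E)/\wfomc(\sentence_D,n,\One,\One) = \chi_D(v+1)$, so recovering $\chi_D(x)$ requires evaluating at $v = x-1$, i.e., the right-hand side should literally read $g_n(0, x-1; \sentence_D, \One, \One, E)$ (and likewise for $\bar g_n$) --- the shift you dismiss as ``implicit in the statement'' is a genuine off-by-one in the displayed formula relative to the convention of Proposition~\ref{prop:scp}, not mere bookkeeping.
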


\section{Combinations of Axioms}
\label{sec:axioms-combination}

An important advantage of the proposed approach over the prior works is that it can easily extend to new axioms by combining existing axioms.
For instance, consider the \textit{polytree} axiom that requires the interpretation of a distinguished binary predicate to be a polytree, i.e., a DAG whose underlying undirected graph is a tree.
Adopting the methodology of existing research~\cite{WFOMC-tree-axioms,WFOMC-linearorder-axiom,WFOMC-axioms}, one would typically delve into the literature on polytree enumeration \cite{polytree} and attempt to adapt the techniques (if they exist) to the WFOMC problem.
However, even if such techniques are available, integrating them into current WFOMC algorithms is not a trivial task.
In contrast, our approach leverages a single polynomial to encapsulate the structural characteristics of the models.
When computing WFOMC with an axiom, the process involves extracting the necessary information from the polynomials, either by algebraically manipulating the coefficients or by evaluating the polynomial at specific points.
Thus, to support a new axiom that is expressible as a combination of existing axioms, we can directly and synthetically apply the existing retrieval techniques.
This streamlined approach eliminates the need for further algorithmic development for those axioms that can be captured by one of the polynomials.

Indeed, we have demonstrated examples of axiom combinations in Remarks \ref{remark:forest} and \ref{remark:k-acyclicity}, where the WFOMC problems present substantial challenges that existing methods have not yet addressed.
The domain-liftability of these new axioms can be easily demonstrated through the fast calculation of their corresponding polynomials.

\begin{theorem}\label{thm:combined-axioms}
    The fragment of \Ctwo{} with cardinality constraints and a single axiom in the following list is domain-liftable.
    \begin{itemize}
      \item $bipartite_k(R) = connected_k(R) \land bipartite(R)$, requiring $G(R)$ to be a bipartite graph with $k$ connected components;
      \item $forest_k(R) = connected_k(R) \land forest(R)$, requiring $G(R)$ to be a forest with $k$ trees;
      \item $AC_k(R) = connected_k(R) \land AC(R)$, requiring $G(R)$ to be a DAG with $k$ weakly connected components;
      \item $BiAC(R) = bipartite(R) \land AC(R)$, requiring $G(R)$ to be a DAG whose underlying undirected graph is bipartite;
      \item $polytree(R) = tree(R) \land AC(R)$, requiring $G(R)$ to be a polytree, i.e., a DAG whose underlying undirected graph is a tree;
      \item $polyforest(R) = forest(R) \land AC(R)$, requiring $G(R)$ to be a polyforest, i.e., a DAG whose underlying undirected graph is a forest.
    \end{itemize}

    Note that in the combination of axioms involving $AC(R)$, we do not require $R$ to be symmetric.
\end{theorem}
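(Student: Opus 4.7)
The plan is to derive the WFOMC of each combined axiom by composing the retrieval recipes already established for the individual axioms on a single WCP or SCP instance. The guiding principle is that the polynomials introduced earlier simultaneously track several structural features of $G(R)$: WCP encodes the number of weakly connected components via the variable $u$ and, in its extended form, the number of (undirected) edges via $v$; SSCP additionally encodes the strict chromatic polynomial of $G(R)$ via its $v$, which at $v = -2$ collapses onto the DAG indicator by \Cref{lemma:dichromatic-negative}. Since each constituent axiom only required a specific coefficient or evaluation, combining them amounts to performing both extractions on the same polynomial, which remains polynomial-time by \Cref{prop:fast-wcp} and \Cref{prop:fast-nscp,prop:fast-sscp}.

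Concretely, I would handle the six cases as follows. For $bipartite_k$, reuse $\sentence_b$ from \Cref{exmp:bipartite} but, instead of substituting $u = -1/2$, extract $[u^k] f_n(u/2 - 1; \sentence_b, \weight, \negweight, R)$, so that the $(1/2)^{cc(\mu_R)}$ correction is applied while $k$ is fixed through the $u$-coefficient. For $forest_k$, tighten the summation in the proof of \Cref{exmp:forest} to the single term $[u^k v^{2(n-k)}] f_n(u-1, v; \symsentence, \weight, \negweight, R)$ using the extended WCP. For $AC_k$, replace the evaluation in \Cref{exmp:acyclicity} with a coefficient extraction, giving $(-1)^n [u^k] g_n(u-1, -2; \sentence, \weight, \negweight, R)$, as already flagged in \Cref{remark:k-acyclicity}. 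For $BiAC$, form $\sentence_{b'}$ exactly as in \Cref{exmp:bipartite} (but without enforcing symmetry of $R$) and combine the bipartite correction with the acyclicity evaluation of SSCP, yielding $(-1)^n g_n(-1/2, -2; \sentence_{b'}, \weight, \negweight, R)$. For $polytree$ and $polyforest$, note that, for any directed $R$, weak connectedness of $G(R)$ together with $|R| = n-1$ forces the underlying undirected graph to be a tree (any double edge would reduce the undirected edge count below $n-1$), so polytrees are exactly weakly-connected orientations with $n-1$ edges, and polyforests are orientations with exactly $n-k$ edges and $k$ weakly connected components; using the extended WCP with directed $R$, in which the $v$-variable tracks the number of true $R$-atoms, these become $[u v^{n-1}] f_n(u-1, v; \sentence, \weight, \negweight, R)$ and $\sum_{k=1}^n [u^k v^{n-k}] f_n(u-1, v; \sentence, \weight, \negweight, R)$ respectively (the AC-ness is automatic once no undirected cycle exists).

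The main subtlety I expect to wrestle with is the $polytree$/$polyforest$ argument, where one must verify carefully that the extended WCP tracks the intended quantity on a non-symmetric $R$ (true $R$-atoms, which for an orientation coincide with underlying edges provided no antiparallel pairs are present) and that weak connectedness plus the prescribed edge count forces acyclicity without needing to invoke SCP at all. A secondary nuisance is that the bipartite gadget in $BiAC$ must not inadvertently re-symmetrize $R$: the forbidden-edge clause $P_s(x) \land P_s(y) \to \lnot R(x,y)$ is symmetric in $x$ and $y$ under the universal quantifier, so it correctly forbids both directions without imposing $R(x,y) \leftrightarrow R(y,x)$, and the $(1/2)^{cc(\mu_R)}$ correction is still legitimate because the two $(P_1,P_2)$-colorings per weakly connected component are in bijection. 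Once these points are checked, polynomial-time data complexity is inherited directly from \Cref{prop:fast-wcp,prop:fast-nscp,prop:fast-sscp}, since coefficient extraction and point evaluation on polynomials of degree at most $n$ add only polynomial overhead.
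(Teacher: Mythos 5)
Your proposal is correct, and for the first four axioms it coincides with the paper's proof up to trivial reformulation: the paper writes $(\tfrac12)^k\cdot[u^k]f_n(u-1;\cdot)$ where you write $[u^k]f_n(u/2-1;\cdot)$, and your $AC_k$, $forest_k$ and $BiAC$ formulas are exactly the ones in the paper (including the observation that the bipartite gadget must kill $R$ in both directions without symmetrizing $R$). Where you genuinely diverge is $polytree$ and $polyforest$: the paper handles these by extracting $(-1)^n[u\,v^{2(n-1)}]$ (resp.\ the analogous sum over $k$) from a three-variable extended \sscp{} $g_n(u,-2,v;\cdot)$, i.e.\ it keeps the acyclicity test $\chi_{G(\mu_R)}(-1)$ in play and adds an edge-counting variable on top, whereas you drop the \scp{}s entirely and argue combinatorially that weak connectedness plus exactly $n-k$ true $R$-atoms already forces the underlying undirected graph to be a forest of $k$ trees (antiparallel pairs or self-loops would leave too few undirected edges for connectivity) and hence forces acyclicity for free, so the extended \wcp{} with $v$ tracking $|R|$ suffices. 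Your route is more elementary and buys two things: it avoids relying on a three-variable extended \sscp{} that the paper never formally defines, and it uses the correct exponent $v^{n-k}$ for a non-symmetric $R$ (the paper's $v^{2(n-k)}$ is inherited from the symmetric-$R$ convention of \Cref{corol:wcp} and does not match the theorem's stipulation that $R$ need not be symmetric in the $AC$-combinations); the paper's route, in exchange, would generalize to combinations where acyclicity is \emph{not} implied by the edge count, which yours cannot. The one point you flag as needing verification --- that \Cref{def:wcp-extended} yields $v^{|R|_\mu}$ for non-symmetric $R$ --- is indeed immediate from \Cref{prop:wcp} with the reweighted $w_{R,v}$, so there is no gap.
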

\begin{proof}
    Consider a \Ctwo{} sentence $\sentence$ with cardinality constraints, a domain of size $n$, weighting functions $\weight, \negweight$ and a distinguished binary predicate $R$.
    We prove the theorem by showing that the WFOMC of the sentence with each of the axioms can be derived from \wcp{} or \sscp{}.
    Let $\sentence_R = \sentence \land \forall x\forall y (R(x,y)\to R(y,x))\land \forall \lnot R(x,x)$,
    \begin{align*}
        \sentence_b = &\sentence \land \forall x \ ((P_1(x) \lor P_2(x)) \land (\lnot P_1(x) \lor \lnot P_2(x))) \\
        &\land \forall x \forall y \ \left( P_1(x) \land P_1(y) \lor P_2(x) \land P_2(y) \to \lnot R(x,y) \land \lnot R(y,x) \right),
    \end{align*}
    and let $\hat f_n$ be the extended \wcp{} defined in Definition \ref{def:wcp-extended}.
    Then we have
    \begin{align*}
        \wfomc(\sentence \land bipartite_k(R), n, \weight, \negweight) &= (\frac{1}{2})^k \cdot [u^k]f_n(u-1; \sentence_R, \weight, \negweight, R), \\
        \wfomc(\sentence \land forest_k(R), n, \weight, \negweight) &= [u^kv^{2(n-k)}] \hat f_n(u-1, v; \sentence_R, \weight, \negweight, R), \\
        \wfomc(\sentence \land AC_k(R), n, \weight, \negweight) &= (-1)^n \cdot [u^k]g_n(u-1, -2; \sentence, \weight, \negweight, R), \\
        \wfomc(\sentence \land BiAC(R), n, \weight, \negweight) &= (-1)^n \cdot g_n(-\frac{1}{2}, -2; \sentence_b, \weight, \negweight, R), \\
        \wfomc(\sentence \land polytree(R), n, \weight, \negweight) &= (-1)^n \cdot [uv^{2(n-1)}]g_n(u-1, -2, v; \sentence, \weight, \negweight, R), \\
        \wfomc(\sentence \land polyforest(R), n, \weight, \negweight) &= (-1)^n \cdot \sum_{i\in[n]} [u^iv^{2(n-i)}] g_n(u-1, -2, v; \sentence, \weight, \negweight, R). \tag*{\qedhere}
    \end{align*}
\end{proof}

\section{Numerical Results}

We implemented Algorithm \ref{wcp_dp} and Algorithm~\ref{scp_dp}\footnote{The source code can be found in \url{https://github.com/l2l7l9p/Polynomials-for-WFOMC}.} and used them on several classes of combinatorial structures shown in \Cref{tab:axioms} and checked the results against the Online Encyclopedia of Integer Sequences (OEIS).\footnote{\url{https://oeis.org/}}
In the evaluation, we counted the structures described below, where we use
\begin{equation*}
  \sentence_{ug} = \forall x \lnot R(x,x) \land \forall x \forall y \left( R(x,y) \to R(y,x) \right)
\end{equation*}
to denote the sentence encoding simple undirected graphs and
\begin{equation*}
  \sentence_{dg} = \forall x \lnot R(x,x)
\end{equation*}
to denote the sentence encoding simple directed graphs. We use the function $\One$ that always returns 1 as the weighting functions.



\begin{table}[t]
  \centering
  \caption{Computational results of undirected axioms}\label{tab:oeis-undirected}
  \begin{tabular}{c|p{0.8\textwidth}|l|l}
    \hline
    $n$ & $f_n(u-1; \sentence_{ug}, \One, \One, R)$ & \multicolumn{2}{l}{$occ_n$} \\
    \hline
    1 & $u$ & \multicolumn{2}{l}{1} \\
    2 & $u + u^2$ & \multicolumn{2}{l}{1} \\
    3 & $4 u + 3 u^2 + u^3$ & \multicolumn{2}{l}{4} \\
    4 & $38 u + 19 u^2 + 6 u^3 + u^4$ & \multicolumn{2}{l}{38} \\
    5 & $728 u + 230 u^2 + 55 u^3 + 10 u^4 + u^5$ & \multicolumn{2}{l}{728} \\
    6 & $26704 u + 5098 u^2 + 825 u^3 + 125 u^4 + 15 u^5 + u^6$ & \multicolumn{2}{l}{26704} \\
    ... & ... & \multicolumn{2}{l}{...} \\
    \hline\hline
    $n$ & $f_n(u; (\sentence_{ug})_b, \One, \One, R)$ & \multicolumn{2}{l}{$b_n$} \\
    \hline
    1 & $2 + 2 u$ & \multicolumn{2}{l}{1} \\
    2 & $6 + 10 u + 4 u^2$ & \multicolumn{2}{l}{2} \\
    3 & $26 + 54 u + 36 u^2 + 8 u^3$ & \multicolumn{2}{l}{7} \\
    4 & $162 + 366 u + 300 u^2 + 112 u^3 + 16 u^4$ & \multicolumn{2}{l}{41} \\
    5 & $1442 + 3270 u + 2860 u^2 + 1320 u^3 + 320 u^4 + 32 u^5$ & \multicolumn{2}{l}{376} \\
    6 & $18306 + 39446 u + 33540 u^2 + 16640 u^3 + 5040 u^4 + 864 u^5 + 64 u^6$ & \multicolumn{2}{l}{5177} \\
    ... & ... & \multicolumn{2}{l}{...} \\
    \hline\hline
    $n$ & $f_n(u-1, v; \sentence_{ug}, \One, \One, R)$ & $tr_n$ & $fo_n$ \\
    \hline
    1 & $u$ & 1 & 1 \\
    2 & $u v^2 + u^2$ & 1 & 2 \\
    3 & $3 u v^4 + u v^6 + 3 u^2 v^2 + u^3$ & 3 & 7 \\
    4 & $16 u v^6 + 15 u v^8 + 6 u v^{10} + u v^{12} + 15 u^2 v^4 + 4 u^2 v^6 + 6 u^3 v^2 + u^4$ & 16 & 38 \\
    5 & $125 u v^8 + 222 u v^{10} + 205 u v^{12} + 120 u v^{14} + 45 u v^{16} + 10 u v^{18} + u v^{20} + 110 u^2 v^6 + 85 u^2 v^8 + 30 u^2 v^{10} + 5 u^2 v^{12} + 45 u^3 v^4 + 10 u^3 v^6 + 10 u^4 v^2 + u^5$ & 125 & 291 \\
    6 & $1296 u v^{10} + 3660 u v^{12} + 5700 u v^{14} + 6165 u v^{16} + 4945 u v^{18} + 2997 u v^{20} + 1365 u v^{22} + 455 u v^{24} + 105 u v^{26} + 15 u v^{28} + u v^{30} + 1080 u^2 v^8 + 1617 u^2 v^{10} + 1330 u^2 v^{12} + 735 u^2 v^{14} + 270 u^2 v^{16} + 60 u^2 v^{18} + 6 u^2 v^{20} + 435 u^3 v^6 + 285 u^3 v^8 + 90 u^3 v^{10} + 15 u^3 v^{12} + 105 u^4 v^4 + 20 u^4 v^6 + 15 u^5 v^2 + u^6$ & 1296 & 2932 \\
    ... & ... & ... & ...
  \end{tabular}
\end{table}

\begin{description}
  \item[Connectedness axiom] Both OEIS sequence A001187 and $\wfomc(\sentence_{ug} \land connected_1(R), n, \One, \One)$ count the number of connected undirected graphs on $n$ labeled nodes. By \Cref{exmp:k-connected-component}, we compute $occ_n = [u]f_n(u-1; \sentence_{ug}, \One, \One, R)$ for each positive integer $n$ shown in \Cref{tab:oeis-undirected}. The sequence $occ_n$ matches the sequence A001187.

  \item[Bipartite axiom] Both OEIS sequence A047864 and $\wfomc(\sentence_{ug} \land bipartite(R), n, \One, \One)$ count the number of bipartite graphs on $n$ labeled nodes. By \Cref{exmp:bipartite}, we compute $b_n = f_n(-\frac 12; (\sentence_{ug})_b, \One, \One, R)$ for each positive integer $n$ shown in \Cref{tab:oeis-undirected}. The sequence $b_n$ matches the sequence A047864.

  \item[Tree axiom] Both OEIS sequence A000272 and $\wfomc(\sentence_{ug} \land tree(R), n, \One, \One)$ count the number of trees on $n$ labeled nodes. By \Cref{exmp:tree}, we compute $tr_n = [uv^{2(n-1)}]\hat f_n(u-1, v; \sentence_{ug}, \One, \One, R)$ for each positive integer $n$ shown in \Cref{tab:oeis-undirected}. The sequence $tr_n$ matches the sequence A000272.

  \item[Forest axiom] Both OEIS sequence A001858 and $\wfomc(\sentence_{ug} \land forest(R), n, \One, \One)$ count the number of forests on $n$ labeled nodes. By \Cref{exmp:forest}, we compute $fo_n = \sum_{i\in[n]} [u^iv^{2(n-i)}] \hat f_{n}(u-1, v; \sentence_{ug}, \One, \One, R)$   for each positive integer $n$ shown in \Cref{tab:oeis-undirected}. The sequence $fo_n$ matches the sequence A001858.

  \item[Strong connectedness axiom] Both OEIS sequence A003030 and $\wfomc(\sentence_{dg} \land SC(R), n, \One, \One)$ count the number of strongly connected directed graphs on $n$ labeled nodes. By \Cref{exmp:strconnected}, we compute $sc_n = -[u]\bar g_n(u-1, -2; \sentence_{dg}, \One, \One, R)$ for each positive integer $n$ shown in \Cref{tab:oeis-directed}. The sequence $sc_n$ matches the sequence A003030.

  \item[Strongly connected tournament axiom] Both OEIS sequence A054946 and $\wfomc(\sentence_{dg} \land SCT(R), n, \One, \One)$ count the number of strongly connected tournaments on $n$ labeled nodes. By Example \ref{exmp:sct}, we compute $sct_n = -[u]g_n(u-1, -2; (\sentence_{dg})_{TN}, \One, \One, R)$ for each positive integer $n$ shown in Table \Cref{tab:oeis-directed}. The sequence $sct_n$ matches the sequence A054946.

  \item[Acyclicity axiom] Both OEIS sequence A003024 and $\wfomc(\sentence_{dg} \land AC(R), n, \One, \One)$ count the number of directed acyclic graphs on $n$ labeled nodes. By \Cref{exmp:acyclicity}, we compute $ac_n = (-1)^n \cdot g_n(0, -2; \sentence_{dg}, \One, \One, R)$ for each positive integer $n$ shown in \Cref{tab:oeis-directed}. The sequence $ac_n$ matches the sequence A003024.

  \item[Connected acyclicity axiom] Both OEIS sequence A082402 and $\wfomc(\sentence_{dg} \land AC_1(R), n, \One, \One)$ count the number of weakly connected directed acyclic graphs on $n$ labeled nodes. By \Cref{thm:combined-axioms}, we compute $cac_n = (-1)^n \cdot [u]g_n(u-1, -2; \sentence_{dg}, \One, \One, R)$ for each positive integer $n$ shown in \Cref{tab:oeis-directed}. The sequence $ac_n$ matches the sequence A082402.

  \item[Directed tree axiom] Both OEIS sequence A000169 and $\wfomc(\sentence_{dt} \land AC(R), n, \One, \One)$ count the number of directed trees on $n$ labeled nodes, where
      \begin{equation*}
      \sentence_{dt} = \forall x \left( \lnot root(x) \to \exists_{=1} y E(y,x) \right) \land |root| = 1.
      \end{equation*}
      By \Cref{thm:dt-df} and \cite[Proposition 14]{WFOMC-axioms}, we compute $dt_n = (-1)^n \cdot g_n(0, -2; \sentence_{dt}, \One, \One, R)$ for each positive integer $n$ shown in \Cref{tab:oeis-directed}. The sequence $dt_n$ matches the sequence A000169.
\end{description}

\begin{center}
  \begin{longtable}{c|p{0.7\textwidth}|l|l}
  \caption{Computational results of directed axioms}\label{tab:oeis-directed} \\
    \hline
    $n$ & $\bar g_n(u-1, v; \sentence_{dg}, \One, \One, R)$ & \multicolumn{2}{l}{$sc_n$} \\
    \hline
    1 & $u + u v$ & \multicolumn{2}{l}{1} \\
    2 & $3 u + 4 u v + u v^2 + 2 u^2 v + u^2 v^2 + u^2$ & \multicolumn{2}{l}{1} \\
    3 & $54 u + 80 u v + 30 u v^2 + 4 u v^3 + 21 u^2 v + 15 u^2 v^2 + 3 u^2 v^3 + 3 u^3 v + 3 u^3 v^2 + u^3 v^3 + 9 u^2 + u^3$ & \multicolumn{2}{l}{18} \\
    4 & $3834 u + 5600 u v + 2168 u v^2 + 440 u v^3 + 38 u v^4 + 608 u^2 v + 506 u^2 v^2 + 160 u^2 v^3 + 19 u^2 v^4 + 60 u^3 v + 72 u^3 v^2 + 36 u^3 v^3 + 6 u^3 v^4 + 4 u^4 v + 6 u^4 v^2 + 4 u^4 v^3 + u^4 v^4 + 243 u^2 + 18 u^3 + u^4$ & \multicolumn{2}{l}{1606} \\
    5 & $1027080 u + 1377312 u v + 429320 u v^2 + 90280 u v^3 + 11920 u v^4 + 728 u v^5 + 51730 u^2 v + 43480 u^2 v^2 + 15160 u^2 v^3 + 2850 u^2 v^4 + 230 u^2 v^5 + 2375 u^3 v + 3130 u^3 v^2 + 1890 u^3 v^3 + 515 u^3 v^4 + 55 u^3 v^5 + 130 u^4 v + 220 u^4 v^2 + 180 u^4 v^3 + 70 u^4 v^4 + 10 u^4 v^5 + 5 u^5 v + 10 u^5 v^2 + 10 u^5 v^3 + 5 u^5 v^4 + u^5 v^5 + 20790 u^2 + 675 u^3 + 30 u^4 + u^5$ & \multicolumn{2}{l}{565080} \\
    6 & $1067308488 u + 1294038720 u v + 266378992 u v^2 + 45374320 u v^3 + 6287480 u v^4 + 588624 u v^5 + 26704 u v^6 + 14994792 u^2 v + 11427262 u^2 v^2 + 3403800 u^2 v^3 + 689230 u^2 v^4 + 87168 u^2 v^5 + 5098 u^2 v^6 + 237720 u^3 v + 315795 u^3 v^2 + 198240 u^3 v^3 + 62625 u^3 v^4 + 10920 u^3 v^5 + 825 u^3 v^6 + 6730 u^4 v + 12195 u^4 v^2 + 11180 u^4 v^3 + 5395 u^4 v^4 + 1290 u^4 v^5 + 125 u^4 v^6 + 240 u^5 v + 525 u^5 v^2 + 600 u^5 v^3 + 375 u^5 v^4 + 120 u^5 v^5 + 15 u^5 v^6 + 6 u^6 v + 15 u^6 v^2 + 20 u^6 v^3 + 15 u^6 v^4 + 6 u^6 v^5 + u^6 v^6 + 6364170 u^2 + 67635 u^3 + 1485 u^4 + 45 u^5 + u^6$ & \multicolumn{2}{l}{734774776} \\
    ... & ... & \multicolumn{2}{l}{...} \\
    \hline\hline
    $n$ & $\bar g_n(u-1, v; (\sentence_{dg})_{TN}, \One, \One, R)$ & \multicolumn{2}{l}{$sct_n$} \\
    \hline
    1 & $u + u v$ & \multicolumn{2}{l}{1} \\
    2 & $2 u + 3 u v + u v^2$ & \multicolumn{2}{l}{0} \\
    3 & $8 u + 13 u v + 6 u v^2 + u v^3$ & \multicolumn{2}{l}{2} \\
    4 & $64 u + 98 u v + 43 u v^2 + 10 u v^3 + u v^4$ & \multicolumn{2}{l}{24} \\
    5 & $1024 u + 1398 u v + 465 u v^2 + 105 u v^3 + 15 u v^4 + u v^5$ & \multicolumn{2}{l}{544} \\
    6 & $32768 u + 39956 u v + 8488 u v^2 + 1495 u v^3 + 215 u v^4 + 21 u v^5 + u v^6$ & \multicolumn{2}{l}{22320} \\
    ... & ... & \multicolumn{2}{l}{...} \\
    \hline\hline
    $n$ & $g_n(u-1, v; \sentence_{dg}, \One, \One, R)$ & $ac_n$ & $cac_n$ \\
    \hline
    1 & $u + uv$ & 1 & 1 \\
    2 & $uv + uv^2 + 2u^2v + u^2v^2 + u^2$ & 3 & 2 \\
    3 & $-uv + 3uv^2 + 4uv^3 + 3u^2v + 6u^2v^2 + 3u^2v^3 + 3u^3v + 3u^3v^2 + u^3v^3 + u^3$ & 25 & 18 \\
    4 & $11uv - 19uv^2 + 8uv^3 + 38uv^4 - 4u^2v + 11u^2v^2 + 34u^2v^3 + 19u^2v^4 + 6u^3v + 18u^3v^2 + 18u^3v^3 + 6u^3v^4 + 4u^4v + 6u^4v^2 + 4u^4v^3 + u^4v^4 + u^4$ & 543 & 446 \\
    5 & $-363uv + 695uv^2 - 170uv^3 - 500uv^4 + 728uv^5 + 55u^2v - 50u^2v^2 - 35u^2v^3 + 300u^2v^4 + 230u^2v^5 - 10u^3v + 25u^3v^2 + 135u^3v^3 + 155u^3v^4 + 55u^3v^5 + 10u^4v + 40u^4v^2 + 60u^4v^3 + 40u^4v^4 + 10u^4v^5 + 5u^5v + 10u^5v^2 + 10u^5v^3 + 5u^5v^4 + u^5v^5 + u^5$ & 29281 & 26430 \\
    6 & $32157uv - 66863uv^2 + 24850uv^3 + 43190uv^4 - 53976uv^5 + 26704uv^6 - 2178u^2v + 2167u^2v^2 + 2970u^2v^3 - 4175u^2v^4 + 2298u^2v^5 + 5098u^2v^6 + 165u^3v - 15u^3v^2 - 210u^3v^3 + 1110u^3v^4 + 1965u^3v^5 + 825u^3v^6 - 20u^4v + 45u^4v^2 + 380u^4v^3 + 670u^4v^4 + 480u^4v^5 + 125u^4v^6 + 15u^5v + 75u^5v^2 + 150u^5v^3 + 150u^5v^4 + 75u^5v^5 + 15u^5v^6 + 6u^6v + 15u^6v^2 + 20u^6v^3 + 15u^6v^4 + 6u^6v^5 + u^6v^6 + u^6$ & 3781503 & 3596762 \\
    ... & ... & ... & ... \\
    \hline\hline
    $n$ & $g_n(0, v; \sentence_{dt}, \One, \One, R)$ & \multicolumn{2}{l}{$dt_n$} \\
    \hline
    1 & $1 + v$ & \multicolumn{2}{l}{1} \\
    2 & $v + v^2$ & \multicolumn{2}{l}{2} \\
    3 & $(-1/2)v + (3/2)v^2 + 2v^3$ & \multicolumn{2}{l}{9} \\
    4 & $- 4v^2 + 2v^3 + 6v^4$ & \multicolumn{2}{l}{64} \\
    5 & $(7/2)v + (15/2)v^2 - 25v^3 - 5v^4 + 24v^5$ & \multicolumn{2}{l}{625} \\
    6 & $-25v + (51/2)v^2 + 130v^3 + (-285/2)v^4 - 102v^5 + 120v^6$ & \multicolumn{2}{l}{7776} \\
    ... & ... & \multicolumn{2}{l}{...}
  \end{longtable}
\end{center}

\section{Conclusion}

In this paper, we propose a novel perspective to investigate the weighted first-order model counting problem with axioms through the lens of graph polynomials.
We introduce three polynomials, \wcp{}, \nscp{} and \sscp{}, to address the challenge of incorporating axioms into the model counting problem.
With these polynomials and their efficient computation algorithms, we recover all existing complexity results on axioms and obtain new results on k-connected-components, bipartite, strongly connected, and spanning subgraph axioms, as well as certain combinations of these axioms.
Moreover, we demonstrate a tight connection between the weak connectedness polynomial (resp.\ strong connectedness polynomials) and the Tutte polynomial (resp. directed chromatic polynomials), enabling us to prove that any graph permits an efficient algorithm for computing its Tutte polynomial and directed chromatic polynomials as long as it can be expressed by a set of ground unary literals, cardinality constraints and a fixed \Ctwo{} sentence.
Our approach contributes to
both first-order model counting and enumerative combinatorics.

A natural direction for future work involves identifying additional domain-liftable axioms with our proposed polynomials.
Additionally, within the field of enumerative combinatorics, using the proposed polynomials to discover intriguing integer sequences, similar to those found in the OEIS, and in~\cite{fluffy}, also holds a substantial interest.
Finally, from the perspective of complexity theory, our approach offers a unified framework for studying the upper bounds, i.e., the domain-liftability, of the model counting problem with axioms. However, the lower bounds for this problem remain open. It is worth exploring whether there exists a general criterion for determining the hardness of the model counting problem with axioms.

\subsection*{Acknowledgements}

Ond\v{r}ej Ku\v{z}elka's work was supported by Czech Science Foundation project 24-11820S (``Automatic Combinato\-rialist''). Yuanhong Wang's work was supported by National Natural Science Foundation of China (No.62506141). Yuyi Wang's work was partially supported by the Natural Science Foundation of Hunan Province, China (Grant No. 2024JJ5128).






\bibliographystyle{alpha}
\bibliography{ref}

\end{document}